\definecolor{cmt}{HTML}{408080}
    \newtcbox{\feedback}{nobeforeafter,colframe=black,colback=white,boxrule=0.5pt,arc=2pt,
      boxsep=0pt,left=2pt,right=2pt,top=2pt,bottom=2pt,tcbox raise base}
\newcolumntype{L}[1]{>{\raggedright\let\newline\\\arraybackslash}m{#1}}
\newcolumntype{C}[1]{>{\centering\let\newline\\\arraybackslash\hspace{0pt}}m{#1}}
\newcolumntype{R}[1]{>{\raggedleft\let\newline\\\arraybackslash\hspace{0pt}}m{#1}}
\newlength\ubwidth
\newcommand\numberthis{\addtocounter{equation}{1}\tag{\theequation}}
	\renewcommand{\P}{\mathop{}\!\textnormal{P}}
	\newcommand{\E}{\mathop{}\!\textnormal{E}}
	\newcommand{\R}{\mathbb{R}}
	\newcommand{\reals}{\mathbb{R}}
    \newcommand{\indep}{\perp \!\!\! \perp}
    \renewcommand{\th}{\hat{\theta}}
\newcommand{\arcsinh}{\operatorname{arcsinh}}
\crefname{appendixfigure}{Appendix Figure}{Appendix Figures}
\crefname{appendixtable}{Appendix Table}{Appendix Tables}
\newcommand\DoToC{%
  \startcontents
  \printcontents{}{0}{\textbf{Contents}\vskip1em\hrule\vskip1em}
  \vskip1em\hrule\vskip5pt
}
\renewcommand{\E}{E}
\renewcommand{\P}{P}
  \def\thmhead#1#2#3{%
    \thmname{#1}\thmnumber{\@ifnotempty{#1}{ }\@upn{#2}}%
    \thmnote{ {\normalfont (#3)}}}%
\let\oldFootnote\footnote
\newcommand\nextToken\relax
\renewcommand\footnote[1]{%
    \oldFootnote{#1}\futurelet\nextToken\isFootnote}
\newcommand\isFootnote{%
    \ifx\footnote\nextToken\textsuperscript{,}\fi}
\title{
Logs with zeros? Some problems and solutions%
\thanks{An earlier draft of this paper was titled ``Log-like?
Identified ATEs defined with zero-valued outcomes are (arbitrarily) scale-dependent.'' We
thank Isaiah Andrews, Kirill Borusyak, Jonathan Cohn, Amy Finkelstein, Edward Glaeser,
Nick Hagerty, Peter Hull, Jetson Leder-Luis, Erzo Luttmer, Giovanni Mellace, John Mullahy, Edward Norton,
David Ritzwoller, Brad Ross, Pedro Sant'Anna, Jesse Shapiro, Neil Thakral, Casey Wichman, and seminar
participants at BU, Georgetown, Harvard/MIT, Southern Denmark University, Vanderbilt,
Stanford, UC-Irvine, UCLA, UCSD, and the SEA annual conference for helpful comments and
suggestions. Bruno Lagomarsino provided superb research assistance.}}
\author{Jiafeng Chen \\ Harvard Business School \\ Department of Economics, Harvard
University \and Jonathan Roth \\ Department of Economics, Brown University}
\date{November 15, 2023}
\begin{document}

\maketitle

\begin{abstract} 
\Copy{abstract}{When studying an outcome $Y$ that is weakly-positive but can equal zero (e.g. earnings), researchers frequently estimate an average treatment effect (ATE) for a ``log-like'' transformation that behaves
like $\log(Y)$ for large $Y$ but is defined at zero (e.g. $\log(1+Y)$,  $\arcsinh(Y)$). We argue that ATEs for log-like transformations should not be interpreted as
approximating percentage effects, since unlike a percentage, they depend on the units of the outcome. In fact, we show that if the treatment affects the extensive margin, one can obtain a treatment effect of any magnitude simply by re-scaling the units of $Y$ before taking the log-like transformation. This arbitrary unit-dependence
arises because an individual-level percentage effect is not well-defined for individuals
whose outcome changes from zero to non-zero when receiving treatment, and the units of the
outcome implicitly determine how much weight the ATE for a log-like transformation places on the extensive margin. We further establish a trilemma: when the outcome can equal zero, there is no treatment
effect parameter that is an average of individual-level treatment effects,
unit-invariant, and point-identified. We discuss several alternative approaches that
may be sensible in settings with an intensive and extensive margin, including (i)
expressing the ATE in levels as a percentage (e.g. using Poisson regression), (ii)
explicitly calibrating the value placed on the intensive and extensive margins, and (iii)
estimating separate effects for the two margins (e.g. using Lee bounds). We illustrate
these approaches in three empirical applications.}

\end{abstract}

\clearpage
\section{Introduction}
\Copy{firstp}{
When the outcome of interest $Y$ is strictly positive, researchers often estimate an
average treatment effect (ATE) in logs of the form $E_P[ \log(Y(1)) - \log(Y(0)) ]$, which
has the appealing feature that its units approximate percentage changes in the
outcome.\footnote{That is, $\log(Y(1)/Y(0)) \approx \frac{Y(1) - Y(0) }{Y(0)}$ when
$Y(1)/Y(0) \approx 1$.} A practical challenge in many economic settings, however, is that
the outcome may sometimes equal zero, and thus the ATE in logs is not well-defined. When
this is the case, it is common for researchers to estimate ATEs for alternative
transformations of the outcome such as $\log(1+Y)$ or $\arcsinh(Y) =
\log\pr{\sqrt{1+Y^2} + Y}$, which behave similarly to $\log(Y)$ for large values of $Y$
but are well-defined at zero. The treatment effects for these alternative transformations
are typically interpreted like the ATE in logs, i.e. as (approximate) average percentage
effects. For example, among the 11 papers published in the \textit{American Economic
Review} since 2018 that interpret a treatment effect for $\arcsinh(Y)$, all but one
interpret the result as a percentage effect or elasticity.\footnote{We found 17 papers
overall using $\arcsinh(Y)$ as an outcome variable, of which 11 interpret the units; see
\cref{tbl:selected-quotes-aer}.} }

The main point of this paper is that identified ATEs that are well-defined with
zero-valued outcomes should not be interpreted as percentage effects, at least if one imposes the
logical requirement that a percentage effect does not depend on the baseline units in
which the outcome is measured (e.g. dollars, cents, or yuan).

Our first main result shows that if $m(y)$ is a function that behaves like $\log(y)$ for
large values of $y$ but is defined at zero, then the ATE for $m(Y)$ will be
\textit{arbitrarily sensitive} to the units of $Y$. Specifically, we consider continuous,
increasing functions $m(\cdot)$ that approximate $\log(y)$ for large values of
$y$ in the sense that $m(y)/\log(y) \to 1$ as $y \to \infty$. The
common $\log(1+y)$ and $\arcsinh(y)$ transformations satisfy this property. We show
that if the treatment affects the extensive margin (i.e. $\P(Y(1) = 0) \neq
\P(Y(0) =0)$), then one can obtain any magnitude for the ATE for $m(Y)$ by rescaling the
outcome by some positive factor $a$. It is therefore inappropriate to interpret the ATE
for $m(Y)$ as a percentage effect, since a percentage is inherently a unit-invariant
quantity, while the ATE for $m(Y)$ depends arbitrarily on the units of $Y$.

The intuition for this result is that a ``percentage'' treatment effect is not
well-defined for an individual for whom treatment increases their outcome from zero to a
positive value. For example, in our application to \citet{carranza2022job} in \cref{sec:
empirical}, the treatment induces more people to have positive hours worked. The
percentage change in hours is then not well-defined for individuals who would work
positive hours under the treatment condition but zero hours under the control condition.
Any average treatment effect that is well-defined with zero-valued outcomes must therefore
implicitly assign a value for a change along the extensive margin. For logarithm-like
transformations $m (\cdot)$, the importance of the extensive margin is determined
implicitly by the units of $Y$. To see why this is the case, consider an individual who
works positive hours only if they are treated, so that $Y(1)>0$ and $Y(0)=0$. Their
treatment effect for the transformed outcome $m(Y)$ is $m(Y(1)) - m(0)$, which becomes
larger if the units of $Y$ are re-scaled by some $a > 1$, e.g. if we convert from weekly hours worked to
yearly hours worked. When the treatment has an extensive margin effect, the ATE for $m(Y)$
can thus be made large in magnitude by re-scaling $Y$ by a large factor $a$. By contrast, if we
re-scale $Y$ by a small factor $a \approx 0$, such that the resulting outcomes are close to zero, then $m(Y)
\approx m(0)$, and so the ATE for $m(Y)$ will be small. By varying the units of the
outcome, we can thus obtain any magnitude for the ATE for $m(Y)$.

Our theoretical results also imply that if we re-scale the units of the outcome by a
finite factor $a>0$, the ATE for a log-like transformation $m(Y)$ will change by
approximately $\log(a)$ times the effect of the treatment on the extensive margin. This
result implies that sensitivity analyses that explore how the estimated ATE for $m(Y)$
changes with finite changes in the units of $Y$---or equivalently,
how the ATE for $\log(c+Y)$ changes with the constant $c$---are essentially indirectly
measuring the size of the treatment effect on the extensive margin. %

We illustrate the practical importance of these results by systematically replicating
recent papers published in the \emph{American Economic Review} that estimate treatment
effects for $\arcsinh$-transformed outcomes. In line with our theoretical
results, we find that treatment effect estimates using $\arcsinh(Y)$ are sensitive to changes in the
units of the outcome, particularly when the extensive margin effect is large. In half of
the papers that we replicated, multiplying the original outcome by a factor of 100 (e.g.
converting from dollars to cents) changes the estimated treatment effect by more than
100\% of the original estimate. We obtain similar results using $\log(1+Y)$ instead of $\arcsinh(Y)$.

What, then, are alternative options in settings with zero-valued outcomes? Our second main
result delineates the possibilities. We show that when there are zero-valued outcomes,
there is no treatment effect parameter that satisfies all three of the following
properties:
\begin{enumerate}[label=(\alph*)]
    \item 
    The parameter is an average of individual-level treatment effects, i.e. takes the form $\theta_g = E_P[
    g(Y(1),Y(0))]$, where $g$ is increasing in $Y(1)$.
    
    \item 
    The parameter is invariant to re-scaling of the units of the outcome (i.e. $g(y_1,y_0) = g(a y_1, a y_0)$). 
    
    \item 
    The parameter is point-identified from the marginal distributions of the potential outcomes. 
\end{enumerate}
\noindent This ``trilemma'' implies that any target parameter that is well-defined with zero-valued
outcomes must necessarily jettison at least one of the three properties above.
\Copy{targetparamintro}{Of course, the choice of target parameter should depend on the
economic question of interest. Which of the three properties the researcher prefers to
forgo will thus generally depend on their context-specific motivation for using a log-like
transformation in the first place.}

To that end, \cref{sec: recommendations} highlights a menu of parameters that may be attractive
depending on the researcher's core motivation. We first consider the case where the researcher is interested
in obtaining a causal parameter with an intuitive ``percentage'' interpretation. In this case, it may be natural to consider a parameter outside of the class of individual-level averages of the form $E_P[g(Y(1),Y(0))]$. One prominent option is $\theta_{ \text{ATE}\%} =
\frac{E[Y(1)-Y(0)]}{E[Y(0)]},$ the ATE in levels as a percentage of the baseline mean, which in many cases can be estimated via Poisson regression \citep{silva_log_2006,wooldridge2010econometric}. The researcher might also consider alternative normalizations of the outcome that lead to intuitive units, e.g. expressing the outcome in per-capita units or converting it to a rank with respect to some reference distribution. Next, we suppose
the researcher would like to capture concave preferences over the outcome; for example,
the researcher might consider income gains to be more meaningful for individuals who are
initially poor. In this case, it is natural to directly specify how much the researcher
values a change along the extensive margin relative to the intensive margin---e.g., that a
change from 0 to 1 is worth an $x$ percent change along the intensive margin. Finally,
suppose the researcher is interested in separately understanding the effects of the
treatment along both the intensive and extensive margins. In this case, the researcher may
target separate parameters for the two margins---e.g., $E[ \log(Y(1)) -
\log(Y(0)) \mid Y(1) > 0,Y(0) > 0]$, the average effect in logs for individuals with
positive outcomes under both treatments, captures the intensive margin. Separate effects
for the two margins are not generally point-identified, but can be can be bounded using
the method in \citet{lee_training_2009} or point-identified with additional assumptions
\citep{zhang_evaluating_2008, zhang_likelihood-based_2009}. 

\Cref{sec: empirical} provides a blueprint for estimating these alternative parameters in
practice by applying our recommended approaches to three recent empirical applications,
including a randomized controlled trial (RCT) \citep{carranza2022job}, a
difference-in-differences (DiD) setting \citep{sequeira_corruption_2016}, and an instrumental
variables (IV) setting \citep{berkouwer2022credit}.
 
\paragraph{Related work.} The use of log-like transformations for dealing with zero-valued
outcomes has a long history. The use of the $\log(1+Y)$ transformation dates to at least
\citet{williams_use_1937}, while \citet{bartlett_use_1947} considers both the $\log(1+Y)$
and inverse hyperbolic sine transformations.\footnote{\citet{bartlett_use_1947} proposes
using $\arcsinh(\sqrt{Y})$.} \Copy{bw}{More recent papers by
\citet{burbidge_alternative_1988} and \citet{bellemare_elasticities_2020}, among others,
provide results for $\arcsinh(Y)$ that are frequently cited in economics
papers using this transformation.}\footnote{\citet{mackinnon_transforming_1990} propose
transformations of the form $\arcsinh(y \zeta)/\zeta$, where $\zeta$ is estimated by
assuming $\arcsinh (y\zeta)/\zeta$ is normally distributed conditional on covariates.}

Previous work has illustrated in simulations or selected empirical applications that
results for particular transformations such as $\log(1+Y)$ or $\arcsinh(Y)$ may be
sensitive to the units of the outcome \citep{aihounton_units_2021, de_brauw_income_2021}.
In concurrent work, \citet{mullahy_why_2023} show theoretically that the marginal effects
from linear regressions using $\log(1+Y)$ or $\arcsinh(Y)$ are sensitive to the scaling of
the outcome, with the the limits of the marginal effects approaching those of either a
levels regression or a (normalized) linear probability model, depending on whether the
units are made small or large. We complement this work by proving that scale-dependence is
a necessary feature of \textit{any} identified ATE that is well-defined with zero-valued
outcomes, and that the dependence on units is arbitrarily bad for transformations that
approximate $\log(Y)$ for large values of $Y$. Thus, it is not possible to fix the issues
with $\log(1+Y)$ or $\arcsinh(Y)$ by choosing a ``better'' transformation or using a
different estimator. We also complement previous empirical examples by providing a
systematic analysis of the sensitivity to scaling for papers in the $\emph{American
Economic Review}$ using $\arcsinh(Y)$.

Other work has considered the interpretation of regressions using $\arcsinh(Y)$
or $\log(1+Y)$ from the perspective of structural equations models, as opposed to the
potential outcomes model considered here. This literature has reached diverging
conclusions: For example, \citet{bellemare_elasticities_2020} conclude that coefficients
from $\arcsinh(Y)$ regressions have an interpretation as a semi-elasticity, while
\citet{cohn_count_2022} conclude that these estimators are inconsistent and advocate for
Poisson regression instead. \citet{thakral2023estimates} show that the semi-elasticities implied by OLS regressions using $\arcsinh(Y)$
or $\log(1+Y)$ are sensitive to scale; they recommend instead the use of power functions $Y^k$, which they show are the only transformations (besides $\log$) for which the implied semi-elasticities for OLS regressions are scale-invariant. In
\Cref{sec: structural equations}, we show that these diverging conclusions stem from the
fact that the structural equations considered in these papers implicitly impose different
restrictions on the potential outcomes---some of which are incompatible with zero-valued
outcomes---and consider different target causal parameters. This highlights the value of a
potential outcomes framework such as ours, which makes transparent what causal parameters
are identifiable and what properties they can have.

Finally, there is a long history in econometrics of explicitly modeling the intensive and
extensive margins in settings with zero-valued outcomes, such as
\citet{tobin_estimation_1958} and \citet{heckman_sample_1979}. Broadly
speaking, these methods impose parametric structure on the joint distribution of the
potential outcomes, which allows one to separate out the intensive and extensive margin
effects of a treatment (see \Cref{sec: structural equations} for technical details). Of
course, the parametric restrictions underlying these approaches may often be difficult to
justify in practice, which perhaps has contributed to the growth in the use of log-like
transformations in place of approaches that explicitly model the extensive margin. Our
paper shows that the presence of an extensive margin should not simply be ignored by taking a
log-like transformation. It also clarifies what parameters can be learned in such cases without imposing
restrictions on the joint distribution of the potential outcomes.

\subsection{Setup and notation}

Let $D \in \{0,1\}$ be a binary treatment and let $Y \in [0,\infty)$ be a weakly
positively-valued outcome.\footnote{The $\arcsinh$ transformation is sometimes used in settings where $Y$ can be negative. We impose that $Y \in [0,\infty)$, and thus do not consider this case. See \Cref{subsec: continuous treatment} for extensions
of our results to settings with continuous treatments.\label{foot:neg}} We assume that $Y = D Y(1) +
(1-D)Y(0)$, where $Y(1)$ and $Y(0)$ are respectively the potential outcomes under
treatment and control. We suppose that in some (sub-)population of interest, $(Y(1),Y(0)) \sim
P$ for some (unknown) joint distribution $P$. We denote the marginal distribution of
$Y(d)$ under $P$ by $P_{Y(d)}$ for $d=0,1$. We assume that neither $P_{Y(0)}$ nor
$P_{Y(1)}$ is a degenerate distribution at zero.

\section{Sensitivity to scaling for transformations that behave like \texorpdfstring{$\log(Y)$}{log(Y)}}
\label{sec: loglike results}

We first consider average treatment effects of the form $\theta = \E_P[ m(Y(1)) - m(Y(0))
]$ for an increasing function $m$. We note that $\theta$ corresponds to the ATE among the (sub-)population indexed by $P$; if $P$ refers to the sub-population of compliers for an instrument, for instance, then $\theta$ is the local average treatment effect (LATE), rather than the ATE in the full population. We are interested in how $\theta$ changes if we change
the units of $Y$ by a factor of $a$. That is, how does
\[\theta(a) = \E_P[ m(aY(1)) - m(aY(0)) ]\] depend on $a$? Setting $a=100$, for example,
might correspond with a change in units between dollars and cents. Of course, if $Y$ is strictly positive and $m(y) = \log(y)$, then $\theta(a)$ is the ATE in logs and does not depend on the value of $a$.

We consider ``log-like'' functions $m(y)$ that are well-defined at zero but behave like $\log(y)$ for large values of $y$, in the
sense that $m(y)/ \log(y) \to 1$ as $y \to \infty$. This property is satisfied by $\log
(1+y)$ and $\arcsinh(y)$, for example. Our first main result shows that if the treatment
affects the extensive margin, then $|\theta(a)|$ can be made to take any desired value
through the appropriate choice of $a$.

\begin{restatable}{prop}{anyvalue}
\label{prop: can get any value for ATE}
Suppose that:
\begin{enumerate}
    \item
    (The function $m$ is continuous and increasing) $m:[0,\infty) \to \reals$ is a
    continuous, weakly increasing function.
    
    \item 
    (The function $m$ behaves like $\log$ for large values) $m(y)/\log(y) \to 1$ as $y \to
    \infty$.

    \item
    (Treatment affects the extensive margin) $P(Y(1)=0) \neq P(Y(0) = 0)$.

    \item
    (Finite expectations) $E_{P_{Y(d)}}[ |\log(Y(d))| \mid Y(d) > 0 ] < \infty$ for
    $d=0,1$.\footnote{This assumption simply ensures that $\E_{P_{Y(d)}}[|m(aY(d))|
    \mid Y > 0]$ exists for all values of $a>0$.}
\end{enumerate}
Then, for every $\theta^* \in (0,\infty)$, there exists an $a>0$ such that $|\theta(a)| =
\theta^*$. In particular, $\theta(a)$ is continuous with $\theta(a) \to 0$ as $a \to 0$
and $|\theta(a)| \to \infty$ as $a \to \infty$.
\end{restatable}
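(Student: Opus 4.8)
The plan is to show that $\theta(a)$ is finite and continuous on $(0,\infty)$, that $\theta(a)\to 0$ as $a\to 0^+$, and that $|\theta(a)|\to\infty$ as $a\to\infty$; the conclusion then follows from the intermediate value theorem. Throughout write $p_d = P(Y(d)>0)$, which is strictly positive for $d=0,1$ since $P_{Y(d)}$ is not degenerate at $0$. Because $m(a\cdot 0) = m(0)$, conditioning on the extensive margin gives
\[
\theta(a) = p_1\,\E_{P_{Y(1)}}[m(aY(1)) \mid Y(1)>0] - p_0\,\E_{P_{Y(0)}}[m(aY(0)) \mid Y(0)>0] + (p_0-p_1)\,m(0),
\]
and $p_0 - p_1 = P(Y(1)=0) - P(Y(0)=0) \neq 0$ by Assumption 3. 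So it suffices to analyze, for $d=0,1$, the intensive-margin term $I_d(a) := \E_{P_{Y(d)}}[m(aY(d)) \mid Y(d)>0]$.

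The key ingredient is a single integrable envelope for $m(aY(d))$. By Assumption 2 fix $M>1$ with $0 \le m(y) \le 2\log y$ for all $y\ge M$, and by Assumption 1 let $C_M := \sup_{y\in[0,M]}|m(y)| < \infty$. Splitting according to whether $aY(d)\ge M$ or $aY(d)<M$, and using $\log(aY(d)) = \log a + \log Y(d)$ together with the observation that $aY(d)\ge M$ and $a<1$ forces $Y(d)>1$, one checks that on $\{Y(d)>0\}$
\[
|m(aY(d))| \le C_M + 2(\log a)^+ + 2|\log Y(d)| \qquad \text{for every } a>0,
\]
where $(\log a)^+ = \max(\log a, 0)$. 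By Assumption 4 the right-hand side has finite conditional expectation given $Y(d)>0$, uniformly for $a$ in any bounded set, so dominated convergence applies in each of the three regimes below. Continuity of $I_d$, hence of $\theta$, on $(0,\infty)$ follows because $a\mapsto m(aY(d))$ is continuous for each realization and the envelope is a fixed integrable function on any compact $a$-interval. As $a\to 0^+$ we have $m(aY(d))\to m(0)$ pointwise on $\{Y(d)>0\}$, so $I_d(a)\to m(0)$ and $\theta(a)\to p_1 m(0) - p_0 m(0) + (p_0-p_1)m(0) = 0$.

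For $a\to\infty$, divide the envelope by $\log a$: for $a\ge e$ it is bounded by $C_M + 2 + 2|\log Y(d)|$, which is conditionally integrable, while on $\{Y(d)>0\}$
\[
\frac{m(aY(d))}{\log a} = \frac{m(aY(d))}{\log(aY(d))}\cdot\frac{\log a + \log Y(d)}{\log a} \longrightarrow 1
\]
by Assumption 2. Dominated convergence gives $I_d(a)/\log a \to 1$, hence $\theta(a)/\log a \to p_1 - p_0 \neq 0$, so $|\theta(a)|\to\infty$; indeed $\theta(a)\to+\infty$ if $p_1>p_0$ and $\theta(a)\to-\infty$ if $p_1<p_0$. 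Finally, fix $\theta^*\in(0,\infty)$. Since $\theta$ is continuous with $\theta(a)\to 0$ as $a\to 0^+$, there is a small $a_1$ with $|\theta(a_1)|<\theta^*$; since $|\theta(a)|\to\infty$ with a definite sign at $+\infty$, there is a large $a_2$ with $\theta(a_2)$ of that sign and $|\theta(a_2)|>\theta^*$; applying the intermediate value theorem to $\theta$ on $[a_1,a_2]$ yields $a$ with $\theta(a) = \pm\theta^*$, i.e. $|\theta(a)| = \theta^*$.

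I expect the only genuine obstacle to be the envelope bound — in particular, handling realizations with $Y(d)\in(0,1)$, where $\log Y(d)<0$, and ensuring the bound is uniform enough in $a$ to justify dominated convergence simultaneously in all three regimes; Assumption 4 is exactly what makes $|\log Y(d)|$ conditionally integrable, and the non-degeneracy assumption is what keeps $p_0,p_1>0$ so that the decomposition is meaningful. Everything else is bookkeeping together with two applications of the intermediate value theorem.
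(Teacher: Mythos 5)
Your proposal is correct and follows essentially the same route as the paper: a logarithmic integrable envelope for $m(aY(d))$ justified by Assumptions 1, 2, and 4, dominated convergence to get continuity and the limits $\theta(a)\to 0$ as $a\to 0$ and $\theta(a)/\log(a)\to P(Y(1)>0)-P(Y(0)>0)\neq 0$ as $a\to\infty$, and the intermediate value theorem to hit any prescribed magnitude. The only cosmetic differences are that you first condition on $\{Y(d)>0\}$ before applying dominated convergence (the paper works with the unconditional expectations directly) and that you spell out the sign bookkeeping in the final IVT step, which the paper leaves implicit.
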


\Cref{prop: can get any value for ATE} casts serious doubt on the interpretation of ATEs
for functions like $\log(1+Y)$ or $\arcsinh(Y)$ as (approximate) average percentage effects. While
a percent (or log point) is entirely invariant to the units of the outcome, \cref{prop: can get any value
for ATE} shows that, in sharp contrast, the ATEs for these transformations are \emph{arbitrarily}
dependent on units.

\subsection{Intuition for \cref{prop: can get any value for ATE}}
\label{sub:any_value_ate}

Loosely speaking, the result in \cref{prop: can get any value for ATE} follows from the
fact
that a ``percentage'' treatment effect is not well-defined for individuals who have $Y (0)
= 0$ but $Y(1) > 0$.\footnote{See \citet{delius_cash_2020} for an intuitive discussion of this difficulty in the context of the $\arcsinh(\cdot)$ transformation. They write, ``the concept of elasticity itself does not make sense with zeros'' (p. 21).} Any ATE that is well-defined with zero-valued outcomes
must implicitly determine how much weight to place on changes along the extensive margin
relative to proportional changes along the intensive margin.

When $m(Y)$ behaves like $\log(Y)$ for large values of $Y$, the importance of the
extensive margin is implicitly determined by the units of $Y$. For intuition, suppose that
we re-scale the outcomes so that the non-zero values of $Y$ are very large. Then for an
individual for whom treatment changes the outcome from zero to non-zero, the treatment
effect will be very large, since $m(Y(1)) \gg m(Y(0)) = m(0)$. Extensive margin treatment
effects thus have a large impact on the ATE when the values of $Y$ are made large. By
contrast, changing the units of $Y$ does not change the importance of treatment effects
along the intensive margin by much, since for $Y(1)> 0$ and $Y(0) >0$, we have that $m(Y
(1)) -
m(Y(0)) \approx \log(Y(1)/Y(0))$, which does not depend on the units of the outcome.

To see the roles of the extensive and intensive margins more formally, for simplicity
consider the case where $P(Y(1) = 0 , Y(0) > 0 ) = 0$, so that, for example, everyone who has
positive income without receiving a training also has positive income when receiving the
training.\footnote{A similar argument goes through without this restriction, but then there
are two extensive margins, one for individuals with $Y(1)>0=Y(0)$, and the other for those
with $Y(0)>Y(1)=0$.} Then, by the law of iterated expectations, we can write
\begin{align*}
\E[ m(aY(1)) - m(aY(0)) ] &= \P(Y(1)>0,Y(0)>0) \underbrace{\E_P[ m(aY(1)) - m(aY(0)) \mid
Y(1)>0,Y(0)>0  ]}_{\text{Intensive margin}} \\ &+ \P(Y(1)>0,Y(0) =0) \underbrace{\E_P[
m(aY(1)) - m(0) \mid Y(1)>0,Y(0)=0 ]}_{\text{Extensive margin}}.
\end{align*}
\noindent When $a$ is large, $m(ay) \approx \log(ay)$ for non-zero values of $y$, and thus
the intensive margin effect in the previous display is approximately equal to $E_P[
\log(Y(1)) -
\log(Y(0)) \mid Y(1)>0,Y(0)>0]$, the treatment effect in logs for individuals with
positive outcomes under both treatment and control. This, of course, does not depend on
the scaling of the outcome. However, the extensive margin effect grows with $a$, since
$m(aY(1)) \approx \log(a) + \log(Y(1))$ is increasing in $a$ while $m(0)$ does not depend on $a$.
Thus, as $a$ grows large, the ATE for $m(aY)$ places more and more weight on the extensive
margin effect of the treatment relative to the intensive margin. We can therefore make
$|\theta(a)|$ arbitrarily large by sending $a \to \infty$. By contrast, if $a \approx 0$,
then $m(aY(d)) \approx 0$ with very high probability, and thus the ATE for $m(aY)$ is
approximately equal to 0.

\Copy{nuisancezeros1}{It is worth emphasizing that the arbitrary scale-dependence
described in \cref{prop: can get any value for ATE} exists whenever the treatment affects
the probability that the outcome is zero, regardless of whether the extensive margin is of direct economic interest
or not.\footnote{Without an extensive margin, ATEs for
transformations $m(\cdot)$ defined at zero still exhibit scale-dependence, though perhaps
not arbitrarily so. See \cref{subsec: no zeros discussion} below for further discussion.} In some settings, the presence of zeros may correspond to a discrete economic
choice (e.g. not participating in the labor market), and thus may be of direct interest.
In other settings---for example, if the outcome is a yearly count of publications which is
sometimes zero for idiosyncratic reasons---the extensive margin may be a ``nuisance''
rather than a direct economic object of interest.\footnote{One setting where nuisance
zeros may arise is when the observed outcome $Y$ is actually a mis-measured version of the
true economic object of interest. For example, publications $Y$ may be a noisy measure of
true researcher productivity $Y^* > 0$. One possible remedy in this setting is to model
the measurement error to recover the treatment effect on $Y^*$ rather than on $Y$. In a
similar vein, \citet{gandhi_estimating_2023} models the measurement error in product
shares in demand estimation, which are sometimes zero in finite samples.} The result in
\cref{prop: can get any value for ATE} highlights that regardless of the source of the
zeros, an ATE for a log-like transformation is not interpretable as a percentage, since
the presence of the extensive margin effect makes it arbitrarily dependent on the units.
Indeed, a percentage effect is not a well-defined for individuals moving from zero to
non-zero outcomes. Whether the zeros correspond to a discrete economic choice or not will
be relevant, however, when considering the choice of alternative target parameter, a topic
we return to in \cref{sec: recommendations} below.}

\subsubsection{Intuition for the special case of $\log(1+Y)$\label{subsubsec: log1plus intuition}}

We can also develop some intuition for \Cref{prop: can get any value for ATE} by
considering the special case where $m(y) = \log(1+y)$. In that case, we have that
\begin{equation}
\theta(a) = E[\log(1 +aY(1)) - \log(1 + aY(0))] = E\bk{\log\left( \dfrac{1 + aY(1)}{1 + aY(0)} \right) }. \label{eqn: theta for log1plus}
\end{equation}
\noindent Note that 
$$\lim_{a \to \infty} \log\left( \dfrac{1 + aY(1)}{1 + aY(0)} \right) =   \begin{cases}
    \log\left(\frac{Y(1)}{Y(0)}\right) & \text{ if } Y(1)>0,Y(0)>0 \\
    0 & \text{ if } Y(1)=0,Y(0) =0 \\
    \infty &\text{ if } Y(1)>0,Y(0)=0 \\
    -\infty & \text{ if } Y(1)=0,Y(0)>0.
    \end{cases}$$
\noindent We thus see that the term inside the expectation in \eqref{eqn: theta for
log1plus} diverges to $\infty$ for individuals with $Y(1)>0,Y(0)=0$, and likewise diverges
to $-\infty$ when $Y(1)=0,Y(0)>0$. If on average the extensive margin effect is positive,
then there are more individuals for whom the limit is $+\infty$ rather than $-\infty$, and
thus (under appropriate regularity conditions) the ATE diverges to $\infty$.
Analogously, if the extensive margin effect is negative, then the ATE diverges to $-\infty$. Hence, we see that the magnitude of the ATE for $\log(1+aY)$
diverges as $a \to \infty$ when the average effect on the extensive margin is non-zero. By
contrast, as $a \to 0$, $\log(1 + aY(d)) \to \log(1) = 0$ for both $d=0$ and $d=1$, and
thus the treatment effect converges to 0. \Cref{prop: can get any value for ATE} shows
that this dependence on units occurs for \emph{any} log-like transformation, not just
$\log(1+Y)$, and thus this issue cannot be fixed by choosing a different log-like
transformation ($\log(c+Y)$, $\arcsinh(Y)$,  $\arcsinh(\sqrt{Y})$, etc.)

\subsection{Additional remarks and extensions}

\begin{rmk}[ATEs for $\log(c+Y)$] \label{rmk: log c plus y}
In some settings, researchers consider the ATE for $\log(c+Y)$ and investigate
sensitivity to the parameter $c$. Observe that $\log(1 + aY) = \log(a (1/a +Y)) = \log(a)
+ \log( 1/a + Y)$, and thus the ATE for $\log(1 + aY)$ is equal to the ATE for $\log(1/a +
  Y)$. Hence, varying the constant term for $\log(c+Y)$ is equivalent to varying the
  scaling of the outcome when using $m(y)= \log(1+y)$. \Cref{prop: can get any value for ATE}
  thus implies that if treatment affects the extensive margin, one can obtain any
  desired magnitude for the ATE for $\log(c+Y)$ via the choice of $c$. In particular, the ATE for $\log(c+Y)$ grows large in magnitude as $c \to 0$, and small as $c \to \infty$. 
\end{rmk}

\begin{rmk}[Finite changes in scaling] \label{rmk: finite changes in scale}
\Cref{prop: can get any value for ATE} shows that any magnitude of $|\theta(a)|$ can be
achieved via the appropriate choice of $a$. How much does $\theta(a)$ change for finite
changes in the scaling $a$? \cref{prop: log a rate} in the appendix shows that the change
in the ATE from multiplying the outcome by a large factor $a$ is approximately $\log(a)$
times the treatment effect on the extensive margin,\footnote{We say $ f(a) = o(g(a)) $ if
$\lim_{a \to \infty} |f(a)/g(a)| = 0$. That is, as $a \to \infty$, $|f(a)|$ grows strictly
slower than $|g(a)|$.}
\begin{equation}
\E_P[ m(a Y(1) ) - m(a Y(0)) ] = \left( \P(Y(1) >0) - \P(Y(0)>0) \right)
\cdot \log(a) +  o(\log(a)). \label{eqn: log a rate main text}
\end{equation} 
\noindent Thus, the ATE for $m(Y)$ will tend to be more sensitive to finite changes in
scale the larger is the extensive margin treatment effect. This implies that sensitivity analyses that assess how treatment effect estimates for $m(Y)$ change under finite changes in the units of $Y$---or equivalently, under finite changes of $c$ in $\log(c+Y)$---are roughly equivalent to measuring the size of the extensive margin.

\end{rmk}

\begin{rmk}[Extension to continuous treatments] \label{rmk: nonbinary treatments}
 We focus on ATEs for binary treatments for expositional simplicity, although similar
 results apply with continuous treatments. In \Cref{subsec: continuous treatment}, we show
 that when $d$ is a continuous treatment, any treatment effect contrast that averages $m(aY(d))$
 across possible values of $d$ (i.e. a parameter of the form $\int \omega(d) E[m(aY(d)]$)
 is arbitrarily sensitive to scaling when there is an extensive margin effect.
\end{rmk}

\begin{rmk}[Extension to OLS estimands] \label{rmk: ols estimands}
It is worth noting that the results in this section show that population ATEs for $m(Y)$
are sensitive to the units of $Y$. These results are about \emph{estimands}, and thus any
consistent \emph{estimator} of the ATE for $m(Y)$ will be sensitive to scaling (at least
asymptotically). Thus, our results apply to ordinary least squares (OLS) estimators when
they have a causal interpretation, but also to non-linear estimators such as
inverse-probability weighting or doubly-robust methods. Nevertheless, given the prominence
of OLS in applied work, and the fact that OLS is sometimes used for non-causal estimands,
in \Cref{subsec: ols estimands} we provide a result specifically on the scale-sensitivity
of the population regression coefficient for a random variable of the form $m(Y)$ on an
arbitrary random variable $X$.
Our result
shows that the
coefficients on $X$ will be
arbitrarily sensitive to the scaling of $Y$ when the coefficients of a regression of
$\one[Y>0]$ on $X$ are non-zero. Thus, the OLS estimand using a logarithm-like function on
the left-hand side will be sensitive to scaling even when it does not have a causal
interpretation.
\end{rmk}

\begin{rmk}[Statistical significance] \label{rmk: tstats}
Equation~\eqref{eqn: log a rate main text} shows that $\P(Y(1) >0) -
\P(Y(0)>0)$ is the dominant term in $\theta(a)$ for large $a$, which suggests that the $t$-statistic for an estimator of $\theta(a)$ will generally converge to that for the analogous estimator of the extensive margin effect, $\P(Y(1) >0) -
\P(Y(0)>0)$. \Cref{prop: OLS SEs} in the appendix formalizes this intuition when the
treatment effects are estimated via OLS: As $a$ is made large, the $t$-statistic for
$\hat\theta(a)$ converges to that for the extensive margin estimate. In our empirical
analysis of papers in the \emph{American Economic Review} below, we find that indeed the
$t$-statistics for estimates of the ATE using $\arcsinh(Y)$ are typically close to those for the
extensive margin effect.
\end{rmk}

\begin{rmk}[When most values are large]
Researchers often have the intuition that if most of the values of the outcome are large,
then ATEs for transformations like $\log(1+Y)$ or $\arcsinh(Y)$ will approximate
elasticities, since $m(Y) \approx \log(Y)$ for most values of $Y$. Indeed, in an
influential paper, \citet{bellemare_elasticities_2020} recommend that researchers using
the $\arcsinh(Y)$ transformation should transform the units of their outcome so that most
of the non-zero values of $Y$ are large. The results in this section suggest---perhaps
somewhat counterintuitively---that if one rescales the outcome such that the non-zero
values are all large, the behavior of the ATE will be driven nearly entirely by the effect
of the treatment on the extensive margin and \textit{not} by the distribution of the
potential outcomes conditional on being positive. Moreover, the rescaling can be chosen to
generate any magnitude for the ATE if the treatment affects the extensive margin.
\end{rmk}

\begin{rmk}[Zero extensive margin] \label{rmk: zero extensive margin}
\cref{prop: can get any value for ATE} applies to settings where treatment has a non-zero
effect on average on the extensive margin. This raises the question of whether the use of
log-like transformations is justified in the absence of an extensive margin treatment effect. Our
\cref{prop: identified iff log} below implies that the ATE for any log-like transformation
will be sensitive to the units of the outcome for at least some distribution with
\emph{strictly} positive
outcomes, but perhaps not arbitrarily so in the sense of
\cref{prop: can get any value for ATE} (see \cref{subsec: no zeros discussion} for further
discussion). Moreover, if one were confident that the extensive margin effect were exactly
zero for all individuals, one could recover the ATE in logs for individuals with positive
outcomes by simply dropping individuals with $Y=0$. The use of log-like transformations is
thus somewhat difficult to justify even in settings without an extensive margin.
\end{rmk}

\subsection{Empirical illustrations from the \emph{American Economic Review}\label{subsec:
numerical illustration}}

We illustrate the results in this section by evaluating the sensitivity to scaling of
estimates using the $\arcsinh(Y)$ transformation in recent papers in the \emph{American
Economic Review} (\emph{AER}). In November 2022, we used Google Scholar to search for
``inverse hyperbolic sine'' among papers published in the \emph{AER} since 2018. We
searched for papers using $\arcsinh(Y)$ rather than $\log(1+Y)$ since the former are
easier to find with a simple keyword search. Our search returned 17 papers that estimate
treatment effects for an $\arcsinh$-transformed outcome.\footnote{We consider papers with
both binary and non-binary treatments, as our theoretical results extend easily to
non-binary treatments; see \Cref{rmk: nonbinary treatments}. Seven of the 10 papers we
replicated used a binary treatment.} Of these, 10 explicitly interpret the results as
percentage changes or elasticities, and 6 of the remaining 7 do not directly interpret the
units. See \Cref{tbl:selected-quotes-aer} for a list of the papers and relevant quotes. Of
the 17 total papers using $\arcsinh(Y)$, 10 had publicly available replication data that
allowed us to replicate the original estimates and assess their sensitivity to
scaling.\footnote{We include one paper where there was a slight discrepancy between our replication
of the original result and the result reported in the paper that only affected the
third decimal place.} For our replications, we focus on the first specification using
$\arcsinh(Y)$ presented in a table in the paper, which we view as a reasonable proxy for
the paper's main specification using $\arcsinh(Y)$.\footnote{We use the first coefficient
presented in a figure for one paper without any tables in the main text using
$\arcsinh(Y)$. If the first specification is a validation check (e.g. a pre-trends test),
we use the first specification of causal interest.}

We assess the sensitivity of these results by re-running exactly the same procedure as in
the original paper, except replacing $\arcsinh(Y)$ with $\arcsinh(100 \cdot Y)$. Thus, for
example, if the original paper estimated a treatment effect for the $\arcsinh$ of an
outcome measured in dollars, we use the same procedure to re-estimate the treatment effect
for the $\arcsinh$ of the outcome measured in cents. Since \eqref{eqn: log a rate main text} shows
that the sensitivity to scaling depends on the size of the extensive margin effect, we
also estimate the extensive margin effect by using the same procedure as in the original
paper but with the outcome $\one[Y>0]$.

\begin{table}[!ht]
    \captionsetup[table]{labelformat=empty,skip=1pt}
    \begin{tabular}{lrrrR{0.08\textwidth}R{0.08\textwidth}}
\toprule
 & \multicolumn{2}{c}{Treatment Effect Using:} &  & \multicolumn{2}{c}{\thead{Change from \\ rescaling units:}} \\ 
 \cmidrule(lr){2-3} \cmidrule(lr){5-6}
Paper & $\arcsinh(Y)$ & $\arcsinh(100 \cdot Y)$ & Ext. Margin & Raw & \% \\  
\midrule
Azoulay et al (2019) & 0.003 & 0.017 & 0.003 & 0.014 & 464 \\ 
Fetzer et al (2021) & -0.177 & -0.451 & -0.059 & -0.273 & 154 \\ 
Johnson (2020) & -0.179 & -0.448 & -0.057 & -0.269 & 150 \\ 
Carranza et al (2022) & 0.201 & 0.453 & 0.055 & 0.252 & 125 \\ 
Cao and Chen (2022) & 0.038 & 0.082 & 0.010 & 0.044 & 117 \\ 
Rogall (2021) & 1.248 & 2.150 & 0.195 & 0.902 & 72 \\ 
Moretti (2021) & 0.054 & 0.068 & 0.000 & 0.013 & 24 \\ 
Berkouwer and Dean (2022) & -0.498 & -0.478 & 0.010 & 0.020 & -4 \\ 
Arora et al (2021) & 0.113 & 0.110 & -0.001 & -0.003 & -3 \\ 
Hjort and Poulsen (2019) & 0.354 & 0.354 & 0.000 & 0.000 & 0 \\ 
\bottomrule
\end{tabular}

    \caption{Change in estimated treatment effects from re-scaling the outcome by a factor
    of 100 in papers published in the \emph{AER}  using $\arcsinh(Y)$ }
    \label{tbl:recale-by-100-aer}
    \floatfoot{Note: This table replicates treatment effect estimates using $\arcsinh(Y)$
    as the outcome in recent papers published in the \emph{AER}, and explores their sensitivity to the units of $Y$. The first column shows the author(s) and date of the paper. The second column shows the treatment effect on $\arcsinh(Y)$ using the units originally reported in the paper. The third column shows a treatment effect estimate constructed identically to the estimate in column 2 except using $\arcsinh(100 \cdot Y)$ as the outcome instead of $\arcsinh(Y)$, e.g. converting $Y$ from dollars to cents before taking the $\arcsinh$ transformation. The fourth column shows an estimate of the size of the extensive margin, obtained using $\one[Y>0]$ as the outcome. The final two columns show the raw difference and percentage difference between the second and third columns. The table is sorted on the magnitude of the percentage difference.} 
\end{table}

\begin{figure}[!htb]
    \centering
    \includegraphics[width = 0.7\linewidth]
    {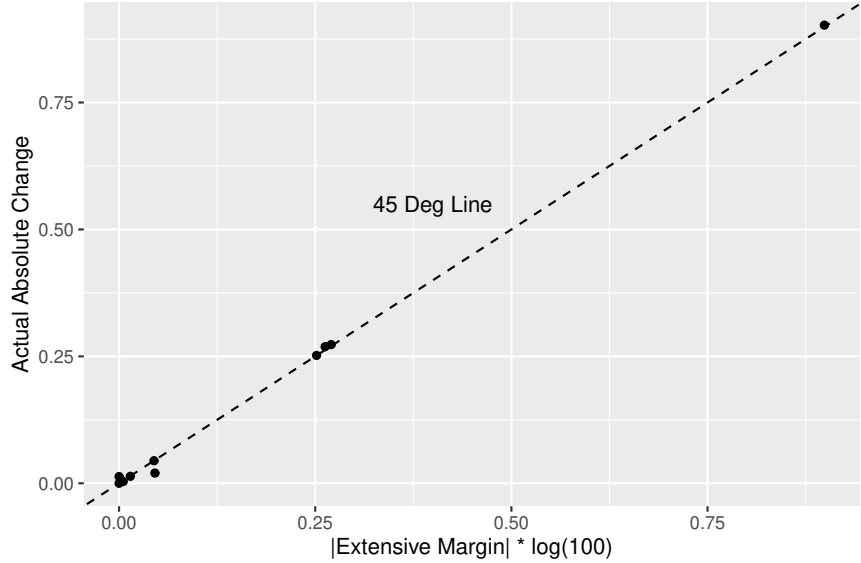}
    \caption{Change from multiplying outcome by 100 versus extensive margin effect}
    \label{fig:actual-change-versus-predicted-a100}
    \floatfoot{{Note:} This figure shows the relationship between the sensitivity of treatment effects using $\arcsinh(Y)$ to re-scaling the units of $Y$ and and the size of the extensive margin. For each replicated paper, this figure plots the absolute
    value of the change in the estimated treatment effect from multiplying the outcome by
    100 (i.e. the absolute value of the Raw Change column in \cref{tbl:recale-by-100-aer}) on the $y$-axis against $\log(100)$ times the absolute value of the extensive
    margin effect on the $x$-axis. If the approximation in \eqref{eqn: log a rate main
    text} were exact, all points would lie on the 45 degree line.}
\end{figure}

The results of this exercise, shown in \Cref{tbl:recale-by-100-aer}, illustrate that
treatment effect estimates can be quite sensitive to the scaling of the outcome when the
extensive margin is not approximately zero. Indeed, in 5 of the 10 replicable papers,
multiplying the
outcome by a factor of 100 changes the estimated treatment effect by more than 100\% of
the original estimate. The change in the estimated treatment effect is less than 10\% only in three papers, all of which have either zero or near-zero ($<$1 p.p.) effects on the extensive margin. \Cref{fig:actual-change-versus-predicted-a100} shows that the (absolute)
change in the estimated treatment effect is larger when the extensive margin effect is
larger, with the change lining up very closely with the approximation given in \eqref{eqn:
log a rate main text}.\footnote{In \Cref{fig:tstats}, we plot the $t$-statistics for the treatment effects estimates as well as those for the extensive margin effect. In line with the discussion in \Cref{rmk: tstats}, we find that the $t$-statistics for the treatment effect using $\arcsinh(Y)$ tend to be similar to those for the extensive margin, except when the extensive margin is very small, and become even closer when multiplying the units by 100.}

Using the same 10 papers, we also estimate treatment effects using $\log(1+Y)$ as the
outcome, and analogously explore how the results change when we multiply the units of $Y$
by 100. (Four of the 10 papers that we replicate report an alternative specification using $\log(1+Y)$ in the paper.) The results, shown in
\Cref{tbl:recale-by-100-aer-log1p}, are qualitatively quite similar those in
\Cref{tbl:recale-by-100-aer}, with five of the 10 treatment effect estimates again
changing by more than 100\%. These results underscore the fact that \cref{prop: can get
any value for ATE} applies to \emph{all} log-like transformations, including both
$\arcsinh(Y)$ and $\log(c+Y)$ for any constant $c$.

\section{Sensitivity to scaling for other ATEs}
\label{sec: trilemma}

Our results so far show that ATEs for transformations that are defined at zero and
approximate $\log(y)$ are arbitrarily sensitive to scaling. What other options are
available when there are zero-valued outcomes? To help delineate alternative options, in
this section we provide a result showing what properties a parameter defined with
zero-valued outcomes can have. Specifically, we establish a ``trilemma'': When there are
zero-valued outcomes, there is no parameter that is (a) an average of individual-level
treatment effects of the form $\theta_g = \E_P[ g(Y(1),Y(0)) ]$, (b) scale-invariant, and
(c) point-identified.\footnote{\label{fn:ate}Of
course, not
all parameters of the form $\E_P[g(Y(1), Y(0))]$ can be interpreted as an average of individual treatment
effects. For example $E[\one[Y(1)>0,Y(0)>0]]$ is the fraction of individuals whose outcomes is positive under both treatments, rather than a treatment effect. Our results apply to all parameters of this form, regardless of whether they are average treatment effects \emph{per se}.} Any approach for settings with zero-valued outcomes
must therefore abandon one of the properties (a)--(c); in \Cref{sec: recommendations} below
we discuss several approaches that relax one (or more) of these requirements.

Before stating our formal result, we must make precise what we mean by scale-invariance
and point-identification. We say that $g$ is scale-invariant if its value is the same
under any re-scaling of the units of $y$ by a positive constant $a$.

\begin{defn}
We say that the function $g$ is \textit{scale-invariant}
if it is homogeneous of degree zero, i.e. $g(y_1,y_0) = g(a y_1, a y_0)$ for all
$a,y_1,y_0>0$.
\end{defn}

 \Copy{jointdist}{We next describe point-identification. We consider parameters that are identified without placing restrictions on treatment effect heterogeneity. As in \citet{fan_partial_2017}, this is formalized by considering parameters that can be learned if we know the marginal distributions of $Y(1)$ and $Y(0)$, but not the full joint distribution of $(Y(1),Y(0))$. 

To connect treatment effect heterogeneity to the joint distribution of potential outcomes, consider the simple case of a randomized experiment. By examining the outcome distribution for the treated group, we can learn the marginal distribution of $Y(1)$. Likewise, by examining the outcome distribution for the control group, we can learn the marginal distribution of $Y(0)$. If treatment effects were assumed to be constant, then for each observed treated unit with outcome $Y(1)$, we could infer their untreated outcome as $Y(0) = Y(1) - \tau$, where $\tau$ is the average treatment effect. Hence, the joint distribution of $(Y(1),Y(0))$ would be identified. However, if we allow for treatment effect heterogeneity, then for an observed treated unit with outcome $Y(1)$, we do not know what their value of $Y(0)$ would be, and thus we do not know the joint distribution of $(Y(1),Y(0))$. This winds up being especially important in settings with an extensive margin, since when we observe the distribution of outcomes for treated units, it means that we do not know \emph{which} of the treated units would have had a zero outcome under the control condition, and thus it is difficult to disentangle the intensive and extensive margins.\footnote{In \cref{sec: structural equations}, we discuss a variety of structural approaches that impose assumptions restricting the joint distribution, thus allowing us to separately point-identify the effects for the two margins.} }%

With that intuition in mind, we now give a formal definition. Recall that $P$ denotes the
joint distribution of $(Y(1),Y(0))$, while $P_{Y(d)}$ denotes the marginal distribution of
$Y(d)$. We then say $\theta_g$ is point-identified if it depends on $P$ only through the
marginals $P_{Y(1)},P_{Y(0)}$.

\begin{defn}[Identification]
We say that $\theta_g$ is \emph{point-identified from the marginals at $P$} if for every
joint distribution $Q$ with the same marginals as $P$ (i.e. such that $Q_{Y(d)} =
P_{Y(d)}$ for $d=0,1$), $E_P[g(Y(1),Y(0))] = E_Q[g(Y(1),Y(0))]$. For a class of
distributions $\mathcal{P}$, we say that $\theta_g$ is \emph{point-identified over
$\mathcal{P}$} if for every $P \in
\mathcal{P},$ $\theta_g$ is point-identified from the marginals at $P$.
\end{defn}
\noindent We will denote by $\mathcal{P}_{+}$ the set of distributions on $[0,\infty)^2$.
Thus, $\theta_g$ is point-identified over $\mathcal{P}_{+}$ if it is always identified
when $Y$ takes on zero or weakly positive values. Our next result formalizes that it is
not possible to have a parameter of the form $E_P[g(Y(1),Y(0))]$ that is both
scale-invariant and point-identified over $\mathcal{P}_+$.

\begin{restatable}[A trilemma]{prop}{trilemma}
\label{prop: trilemma}
The following three properties cannot hold simultaneously:

\begin{enumerate}[label=(\alph*)]
    \item 
    $\theta_g = E_P[g(Y(1),Y(0))]$ for a non-constant function $g: [0,\infty)^2
    \rightarrow \reals$ that is weakly increasing in its first argument.
    
    \item
    The function $g$ is scale-invariant.
    
    \item
    $\theta_g$ is point-identified over $\mathcal P_{+}$.\footnote{A minor technical complication arises from the fact that $E_P[g(Y(1),Y(0)]$ could be infinite for some $P$. For the purposes of our result, it suffices to trivially define $\theta_g$ to be identified in this case. Alternatively, the same result holds if part (c) is modified to impose only that $\theta_g$ is point-identified over all distributions in $\mathcal{P}_+$ with finite support, thus avoiding issues related to undefined expectations.}
\end{enumerate}
\end{restatable}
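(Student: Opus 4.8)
The plan is to derive a contradiction from assuming all three properties hold. The key idea is that scale-invariance forces $g$ to be determined by a single function of one variable on the positive orthant, but the behavior of $g$ as one argument tends to zero clashes with point-identification. Concretely, since $g$ is homogeneous of degree zero on $(0,\infty)^2$, for $y_1,y_0>0$ we can write $g(y_1,y_0) = g(y_1/y_0, 1) =: h(y_1/y_0)$ for some function $h$, where $h$ is weakly increasing because $g$ is weakly increasing in its first argument. Because $g$ is non-constant, either $h$ is non-constant, or $g$ takes different values on the boundary (where some argument is zero) than in the interior; I would handle these possibilities in turn.

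First I would dispense with the ``trivial'' case: if $g$ restricted to $(0,\infty)^2$ is constant, then since $g$ is non-constant overall and weakly increasing in its first argument, $g$ must take a strictly different (larger, by monotonicity in the first argument, or at least different) value at some boundary point, say $g(y_1,0)\neq g(\tilde y_1,\tilde y_0)$ for interior $(\tilde y_1,\tilde y_0)$. Then I would exhibit two joint distributions $P$ and $Q$ with the same marginals — both having an atom at a positive value for $Y(1)$ and mixing an atom at $0$ with an atom at a positive value for $Y(0)$ — arranged so that under $P$ there is positive probability on $\{Y(1)>0, Y(0)=0\}$ while under $Q$ this event has probability zero (coupling the zero of $Y(0)$ instead with a zero of $Y(1)$, or with a different support point). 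This is exactly the Fréchet–Hoeffding freedom in coupling marginals, which is available precisely because $Y$ can equal zero. The two couplings give different values of $E[g(Y(1),Y(0))]$, contradicting point-identification.

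The main case is $h$ non-constant: pick $r_1 < r_2$ with $h(r_1) < h(r_2)$ (using weak monotonicity and non-constancy; if only weakly increasing one still gets strict inequality at some pair of points by non-constancy). Then I would build, on a common pair of marginals for $(Y(1),Y(0))$ supported on finitely many positive points, two couplings realizing different average values of the ratio-dependent quantity $h(Y(1)/Y(0))$ — e.g. a ``positively sorted'' coupling versus a ``negatively sorted'' one — so that $\theta_g$ differs across them. This again violates point-identification. To keep things clean I would choose marginals supported on $\{s, t\}$ with $s<t$, so $Y(1)/Y(0)$ can equal $s/t$, $1$, or $t/s$ depending on the coupling, and the two extremal couplings put different weights on these ratios; since $h(s/t) \le h(1) \le h(t/s)$ with at least one inequality strict (after possibly rescaling the support to land on the $r_1, r_2$ where $h$ separates), the two expectations differ.

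The step I expect to be the main obstacle is making the case analysis airtight — in particular, ruling out that $g$ could be non-constant, scale-invariant, and weakly increasing in its first argument in some degenerate way that still happens to be point-identified (for instance, if $g$ depended only on indicators like $\one[y_1>0]$ and $\one[y_0 > 0]$, which is scale-invariant). I would need to argue that any such ``extensive-margin-only'' $g$ is still not point-identified over $\mathcal{P}_+$, because whether $Y(1)>0$ and $Y(0)>0$ co-occur is a joint feature not pinned down by the marginals whenever both $P(Y(1)=0)$ and $P(Y(0)=0)$ can be made positive — which is allowed in $\mathcal{P}_+$. So the unifying construction is always the same: find two couplings of a fixed pair of marginals (one of which may place an atom at $0$) that disagree on the expectation of $g$; scale-invariance is what prevents $g$ from ``using'' the magnitudes to smooth over this disagreement, and the availability of zeros is what gives the couplings enough freedom to disagree. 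The footnoted technical point about possibly-infinite expectations I would handle by doing all constructions with finitely supported distributions, where $\theta_g$ is automatically finite.
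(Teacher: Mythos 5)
There is a genuine gap in your main case, and it is not a technicality. Your plan is to refute point-identification for every non-constant $h$ by exhibiting two couplings (comonotone vs.\ antitone) of marginals supported on $\{s,t\}$ that give different values of $E[h(Y(1)/Y(0))]$. With equal two-point marginals, the sorted coupling gives $h(1)$ and the anti-sorted coupling gives $\tfrac{1}{2}\left(h(s/t)+h(t/s)\right)$; your claimed conclusion that these differ because ``$h(s/t)\le h(1)\le h(t/s)$ with at least one inequality strict'' is false. Take $h=\log$: both inequalities are strict, yet $\log(s/t)+\log(t/s)=0=2\log(1)$, so the two couplings agree. This is not fixable by a cleverer choice of interior marginals, because $E[\log(Y(1)/Y(0))]=E[\log Y(1)]-E[\log Y(0)]$ genuinely \emph{is} point-identified from the marginals over strictly positive distributions --- so no interior coupling argument can ever produce a contradiction for $h=c\log+d$. (Your construction also misses non-log examples such as $h(x)=x-1/x$, which is increasing, satisfies $h(x)+h(1/x)=2h(1)$, and hence passes your equal-support two-point test even though it is not identified; detecting it requires marginals with different supports for $Y(1)$ and $Y(0)$.) The functions $c\log(y_1/y_0)+d$ are precisely the survivors of any interior argument, and they are where the zeros must enter: for $c>0$, $\lim_{y_1\to 0^+} c\log(y_1/y_0)+d=-\infty$, which is incompatible with $g$ being real-valued at $(0,y_0)$ and weakly increasing in its first argument. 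Your proposal never makes this step; you invoke zeros only in the case where $h$ is constant on the interior.

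The paper's route fills exactly this hole. It first proves (Proposition~\ref{prop: identified iff log}) that point-identification over strictly positive distributions forces $g$ to be additively separable --- via a rectangle construction on $\{y_1,y_1+a\}\times\{y_0,y_0+b\}$ comparing perfectly positively and negatively correlated couplings, which is the correct general form of your coupling idea --- and then combines separability with scale-invariance to get $h(ab)=h(a)+h(b)-h(1)$, Cauchy's logarithmic functional equation, whose monotone solutions are exactly $c\log$. Only then does it use the boundary-at-zero argument above to rule out $c>0$. Your Case~1 (constant on the interior, non-constant at the boundary, contradiction via couplings that do or do not realize $\{Y(1)>0,Y(0)=0\}$) is sound in spirit and matches the role zeros play in the paper, but Case~2 as written would ``prove'' that the ATE in logs is not identified for positive outcomes, which is false; it needs to be replaced by the separability/Cauchy characterization plus the monotone-extension-to-zero argument.
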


\noindent Any parameter defined with zero-valued outcomes must therefore abandon one of properties (a)--(c).

As a special case, \Cref{prop: trilemma} implies that the ATE for any increasing function
$m(Y)$ defined at zero cannot be scale-invariant. This is because the ATE for $m(Y)$ takes
the form in (a) with $g(y_1,y_0) = m(y_1) - m(y_0)$, and is also point-identified (part
(c)). It follows that property (b) must be violated, i.e. there is some $c, y_0, y_1 > 0$
such that $m(cy_1) - m(cy_0) \neq m(y_1) - m(y_0)$. \cref{prop: trilemma} thus formalizes
the sense in which it is not possible to ``fix'' the issues with ATEs for log-like
transformations described above by taking alternative transformations of the outcome (e.g.
$\sqrt{Y}$).

\subsection{Implications for settings without an extensive margin\label{subsec: no zeros discussion}}

The trilemma in \cref{prop: trilemma} applies for transformations of the outcome defined at zero. To prove \Cref{prop: trilemma}, however, we establish an even stronger result: the only parameter satisfying properties (a) and (b) that is point-identified over distributions for which
$Y$ is \textit{strictly} positively-valued is the ATE in logs.\footnote{More precisely, the only such treatment effect is the ATE in logs or an affine tranformations thereof.} This result, which is formalized in \Cref{prop: identified iff log} in the Appendix, has some useful implications for settings in which the outcome is strictly positive. 

First, it implies that the ATE for any transformation of the outcome other than $\log(Y)$ will depend on the units of the outcome for at least some DGP where the outcome is strictly positive. The scale-dependence of log-like transformations such as $\log(1+Y)$ or $\arcsinh(Y)$ is thus not entirely limited to settings with an extensive margin.\footnote{There is thus no conflict between our results and those in \citet{thakral2023estimates}, who note that semi-elasticities for OLS regressions using log-like transformations may depend on the units of the outcome even when $Y$ is strictly positively-valued.} We note, however, that while the ATE for such transformations may depend on the units of the outcome even without zero-valued outcomes, the dependence need not be arbitrarily bad in the sense of \Cref{prop: can get any value for ATE}. Indeed, \eqref{eqn: log a rate main text} shows that if there is no extensive margin, the ATE for a log-like transformation will be approximately insensitive to scaling once the values of $Y$ are made large. This is intuitive, since if $Y$ is strictly positively-valued, the ATE for a log-like transformation will be approximately equal to the ATE in logs when the values of $Y$ are made large. 

Second, \cref{prop: identified iff log} implies that even when $Y(1)$ and $Y(0)$ are
strictly-positively valued, the average proportional effect $\theta_{\text{Avg}\%} =
E[(Y(1)-Y(0))/Y(0)]$ is not point-identified. This parameter is empirically relevant: For
instance, \citet{andrews_optimal_2013} show that in the
\citet{baily_aspects_1978}--\citet{chetty_general_2006} model with heterogeneous
consumption responses to unemployment, the optimal level of unemployment insurance depends
on a parameter of the form $\theta_{\text{Avg}\%}$, where $Y$ is consumption and $D$ is
unemployment. Although the ATE in logs may approximate $\theta_{\text{Avg}\%}$ when the
proportional effect of the treatment is approximately constant, our results imply that it
is not possible to point-identify $\theta_{\text{Avg}\%}$ when allowing for arbitrarily heterogeneous
proportional effects.

\section{Empirical approaches with zero-valued outcomes\label{sec: recommendations}}

Our theoretical results above imply that when there are zero-valued outcomes, the
researcher should not take a log-like transformation of the outcome and interpret the resulting
ATE as an average percentage effect: Unlike a percentage, such an ATE depends on the units of
the outcome. In this section, we highlight some other parameters that are well-defined and
easily interpreted when there are zero-valued outcomes; in \cref{sec: empirical} below, we show how these parameters can be estimated in three empirical applications. Of course, any alternative
parameter must necessarily drop one of the requirements in the trilemma in \Cref{prop:
trilemma}, but the choice of which to drop may depend on the researcher's motivation.

To inform our discussion of alternative parameters, it is therefore useful to first
enumerate several reasons why empirical researchers may target treatment
effects for a log-transformed outcome rather than the ATE in levels:
\begin{enumerate}[label=(\roman*),wide]

\item The researcher is interested in reporting a treatment effect parameter with
easily-interpretable units, such as ``percentage changes.''

\item The researcher believes that there are decreasing returns to the outcome, and thus
wants to place more weight on treatment effects for individuals with low initial outcomes.
For instance, the researcher may perceive it to be more meaningful to raise income from
$Y(0) = \$\text{10,000}$ to $Y(1) = \$\text{20,000}$ than from $Y(0) = \$\text{100,000}$
to $Y(1) = \$\text{110,000}$, yet both of these treatment effects contribute equally to
the ATE in levels. 

\item The researcher is interested in both the intensive and extensive margin effects of
the treatment, and is using the ATE for a log-like transformation as an approximation to
the proportional effect along the intensive margin.
\end{enumerate}

These three motivations suggest different ways of breaking out of the trilemma in
\Cref{prop: trilemma}. If the goal is to achieve a percentage interpretation, then one can
consider scale-invariant parameters outside of the class $E_P[g(Y(1),Y(0))]$. For
instance, researchers can consider the
ATE in levels expressed as a percentage of the control mean, or the ATE for a normalized
parameter $\tilde{Y}$ that already has a percentage interpretation. Alternatively, if the
goal is to capture concave social preferences over the outcome, then it is natural to
specify how much we value the intensive margin relative to the extensive margin---thus
abandoning scale-invariance. Finally, if the goal is to separately understand the
intensive margin effect, the researcher can abandon point-identification (from the
marginal distributions) and directly target the partially identified parameter $\E\bk{\log
(Y(1)) - \log(Y(0))
\mid Y(0) > 0, Y(1) > 0}$, the effect in logs for individuals with positive outcomes under
both treatments. We address each of these cases in turn below, with a summary of some possible parameters in
\cref{tab:rec_summary}. 

\begin{table}[htb]
  \caption{Summary of alternative target parameters}
  \label{tab:rec_summary}
  \centering
  \footnotesize
  \begin{tabular}{p{0.2\textwidth}p{0.3\textwidth}p{0.15\textwidth}p{0.35\textwidth}}
  \toprule
Description & Parameter    & \thead{Main property \\ sacrificed?}  & Pros/Cons
\\\midrule
Normalized ATE & $\E[Y(1)-Y(0)]/\E[Y(0)]$    & $E[g(Y(1),Y(0))]$         
&
\makecell{\textit{Pro:} Percent interpretation  \\ \textit{Con:} Does not capture decreasing returns}
\\  \\
Normalized outcome & $\E[Y(1)/X - Y(0)/X]$  & $E[g(Y(1),Y(0))]$          
& \makecell{\textit{Pro:} Per-unit-$X$ interpretation           
\\ \textit{Con:} Need to find sensible $X$} \\ \\
\makecell{Explicit tradeoff of \\ intensive/extensive \\ margins} & ATE for $m(y) =\begin{cases} \log(y) & y>0 \\ -x & y=0\end{cases}$ &
Scale-invariance & \makecell{\textit{Pro:} Explicit tradeoff of two margins 
\\ \textit{Con:} Need to choose $x$; Monotone only if \\ \hspace{.65cm} support excludes
$(0, e^{-x})$ \\} \\
Intensive margin effect & $\E\bk{\log \left(\frac{Y(1)}{Y(0)}\right) \mid Y(1) > 0, Y(0) > 0}$    &
Point-identification
&
\makecell{\textit{Pro:} ATE in logs for the intensive margin \\ \textit{Con:} Partial identification}\\\bottomrule
  \end{tabular}
\end{table}

\begin{rmk}[Statistical reasons for transforming the outcome]
We focus on settings where the researcher is interested in a parameter other than the ATE
in levels. \Copy{estimatedifficult}{In some settings, the researcher may be interested in
the ATE in levels, but simple regression estimators may be noisy owing to a long
right-tail of the outcome \citep{athey2021semiparametric}.} The researcher might then try
to estimate the ATE in levels by first estimating the ATE for a log-like transformation,
and then multiplying by the baseline mean. However, since the ATE for a log-like
transformation depends on the units of the outcome---and is thus not a true ``percentage''
effect---the validity of this approach for recovering the ATE in levels will depend on the
initial units of $Y$.\footnote{Even in the case where $Y$ is strictly positive and one
first estimates the ATE in logs, this approach will only recover the ATE in levels under
certain homogeneity assumptions, e.g. constant proportional effects. See
\citet{wooldridge_alternatives_1992} for related discussion.\label{fn:wooldridge92}} We
refer the reader to \citet{athey2021semiparametric} and \citet{muller_more_2023} for alternative
approaches to estimation and inference targeted to settings where the ATE in levels is of
interest but the outcome has heavy tails.
\end{rmk}
\Copy{identificationdid}{
\begin{rmk}[Transformation-specific identification] 
\label{rmk: identification did}
Another reason that researchers may consider taking a transformation of the outcome is
that a parametric assumption used for identification may be more plausible for some
functional forms than others. For example, when the outcome is strictly positive, parallel
trends in logs may be more plausible than parallel trends in levels if time-varying
factors are thought to have a multiplicative impact on the outcome. We note that
justifying parallel trends for a log-like transformation is especially tricky, however,
since if parallel trends holds for the $\arcsinh$ of an outcome measured in dollars, say,
it will not generally hold for the $\arcsinh$ of the outcome measured in cents
\citep{roth_when_2023}. Thus, the parallel trends assumption is specific to both the
transformation $m(\cdot)$ \textit{and} the units of the outcome. Moreover, even if the
researcher is confident in parallel trends for a particular log-like transformation and
unit of the outcome, our results imply that they should not interpret the resulting ATT as
an average percentage effect, since that ATT is dependent on the units in which the
outcome is measured (\Cref{prop: can get any value for ATE}).

In what follows, we consider alternative parameters that may be of interest when the
marginal distributions of the potential outcomes are identified for some population of
interest. Such identification is obtained in RCTs or under conditional unconfoundedness
(for the full population), as well in instrumental variables settings (for the population
of compliers), as these designs do not rely on functional form assumptions for
identification. If the original identification strategy relies on a functional form
assumption (e.g. parallel trends), then obtaining identification of the alternative
parameters discussed below may require different identifying assumptions. We discuss these
issues in detail in \cref{subsec: sequeira}, where we revisit the
difference-in-differences application in \citet{sequeira_corruption_2016}.
\end{rmk}
}

\subsection{When the goal is interpretable units \label{subsec: interpretable units}}

We first consider the case where the researcher's primary goal is to obtain a treatment
effect parameter with easily interpretable units, such as percentages.

\paragraph{Normalizing the ATE in levels.} 
One possibility is to target the parameter
\[\thetaPoisson = \frac{\E[ Y(1) - Y(0) ]}{\E[Y(0)]},\] 
\noindent which is the ATE \textit{in levels} expressed as a \textit{percentage of the 
control mean}. For example, if a researcher is studying a program $D$ meant to reduce
healthcare spending $Y$, then $\thetaPoisson$ is the percentage reduction in
costs from implementing the program.  This parameter is point-identified and
scale-invariant, and thus has an intuitive percentage interpretation. Importantly, however, $\thetaPoisson$ is the percentage change in the average outcome between treatment and control, but is \textit{not} an average of individual-level percentage changes.\footnote{This is roughly analogous to how quantile treatment effects show changes in the quantiles of the potential outcomes distributions, but \textit{not} the quantiles of the treatment effects (without further assumptions).} That is, $\thetaPoisson$ does not take the form
$E_P[g(Y(1),Y(0))]$, thus avoiding the trilemma in \Cref{prop: trilemma}. 

We note that
$\thetaPoisson$ is consistently estimable by Poisson regression (see \citet{gourieroux1984pseudo}; \citet{silva_log_2006}; \citet[][Chapter 18.2] {wooldridge2010econometric}) under an
appropriate identifying assumption. 
With a randomly assigned $D$, for example, estimation of $Y = \exp(\alpha + \beta D) U$ by
Poisson quasi-maximum likelihood (QMLE) consistently estimates the population coefficient
$\beta$, which  satisfies $e^{\beta}-1 = \E[Y(1)] / \E[Y(0)] -1 =
\thetaPoisson $. In \cref{sec: empirical} below, we illustrate how $\theta_{\text{ATE}\%}$ can be estimated by Poisson regression in practice in several empirical examples, including both an RCT and DiD setting.

\Copy{nuisancezeros}{We also emphasize that $\thetaPoisson$ is influenced by treatment effects
along both the intensive and extensive margins. In particular, the numerator of
$\thetaPoisson$ is the ATE in levels. Thus, if an individual has a treatment effect of say
1, that contributes the same to $\thetaPoisson$ regardless of whether their outcome
changes from 0 to 1 (an extensive margin change) or 1 to 2 (an intensive margin change).
The parameter $\thetaPoisson$ may therefore be attractive in settings where the researcher
does not want to distinguish between the intensive and extensive margins. For example, if
$Y$ is a count of publications by a researcher in a particular year, and publications are
sometimes zero owing to the idiosyncracies of the publication process, then it may be
reasonable to view a change between 0 and 1 as similar to a change between 1 and 2. On the
other hand, in settings where a zero corresponds to a distinct economic choice, such as
not participating in the labor market, then it may be of interest to separate the effects
along the intensive and extensive margin, as we discuss in more detail in \cref{subsec:
two margins} below.}

\Copy{elonmusk}{It is also worth noting that if the researcher has determined that the ATE
in levels is not of economic interest, then similar issues will likely arise for
$\thetaPoisson$, since $\thetaPoisson$ is just a re-scaling of the ATE in levels. For one,
the ATE in levels (and hence $\thetaPoisson$) imposes no diminishing returns, and thus
might be dominated by individuals in the tail of the outcome distribution, particularly
when the outcome is skewed. Whether this is warranted will depend on the economic
question: if the policy-maker's goal is to reduce healthcare spending, it may not matter
whether the savings are produced mainly by reducing spending for a small fraction of individuals with
catastrophic medical spending. On the other hand, a policy that increases every American's
income by \$100 and one that increases Elon Musk's income by \$35 billion and has no
effect on anyone else would have approximately the same value of $\thetaPoisson$, yet the
former may be vastly preferred by an inequality-minded policy-maker.} We therefore next
turn to alternative approaches that place less weight on the tails of the outcome
distribution.

\paragraph{Normalizing other functionals.} While $\thetaPoisson$ normalizes
the ATE by the control mean, one can obtain scale-invariance by normalizing other
functionals of the potential outcomes distributions.\footnote{Indeed,
any functional $\phi(P)$ is homogeneous of degree zero if and only if it can be written as
the ratio of two homogeneous of degree one
functionals.} For example, $$ \theta_{\text{Median}\%} = \dfrac{\mathrm{Median}(Y(1)) -
\mathrm{Median}(Y(0))}{\mathrm{Median}(Y(0))},$$ is the quantile treatment effect at the median normalized by
the median of $Y(0)$.\footnote{Note that $\theta_{\text{Median}\%}$ is well-defined only if $\mathrm{Median}(Y(0))>0$.\label{fn:median}} Put otherwise, it captures the percentage change in the median between the treated and control distributions. ($\theta_{\text{Median}\%}$ thus may be particularly relevant for politicians interested in maximizing the happiness of the median voter!) As is typically the case
with quantile treatment effects, however, the numerator of $\theta_{\text{Median}\%}$ need
not correspond to the median of individual-level treatment effects. Moreover, in many
settings, decision-makers may care about treatment effects throughout the distribution,
not just at the median, in which case $\theta_{\text{Median}\%}$ may not be the most economically-relevant parameter.

\paragraph{Normalizing the outcome.} A related approach to obtaining a treatment
effect with more intuitive units is to estimate the ATE for a transformed outcome that has
a percentage interpretation. One example is to consider an outcome of the form $\tilde{Y}
= Y/X$, where $Y$ is the original outcome and $X$ is some pre-determined characteristic.
For example, suppose $Y$ is employment in a particular area. The treatment effect in
levels for $Y$ may be difficult to interpret, since a change in employment of 1,000 means
something very different in New York City versus a small rural town. However, if $X$ is
the area's population, then $\tilde{Y}$ is the employment-to-population ratio, which may
be more comparable across places, and is already in percentage (i.e. per capita) units. We
note that the ATE for $\tilde{Y}$ is a scale-invariant, point-identified parameter of the
form $\theta = E_P[g(Y(1),Y(0),X)]$, and thus escapes the trilemma in \Cref{prop:
trilemma} by avoiding property (a).\footnote{It is scale-invariant in the sense that
$g(y_1,y_0,x) = g(ay_1,ay_0,a x)$.} The viability of this approach, of course, depends on
having a variable $X$ such that the normalized outcome $\tilde{Y}$ is of economic
interest. We suspect that in many contexts, reasonable options will be available,
including pre-treatment observations of the outcome (assuming these are positive), or the
\textit{predicted} control outcome given some observable characteristics (i.e., $X =
E[Y(0)
\mid W]$,
for observable characteristics $W$). 

A second example is to use $\tilde{Y} = F_{Y^*}(Y)$, where $F_{Y^*}$ is the cumulative distribution function (CDF) of some
reference random variable $Y^*$, as suggested in \citet{delius_cash_2020}. The transformed outcome $\tilde{Y}$ then corresponds to
the rank (i.e. percentile) of an individual in the reference distribution, and the ATE for
$\tilde{Y}$ can be interpreted as the average change in rank caused by the
treatment. The ATE for $\tilde{Y}$ is unit-invariant so long as $Y$ and $Y^*$ and measured
in the same units. Outcomes of this form have become increasingly popular in the
literature on intergenerational mobility, where $\tilde{Y}$ corresponds to a child's rank
in the national income distribution. This approach has been found to yield more stable
estimates than approaches using $\log(c+Y)$, which \citet{chetty_where_2014} show are
sensitive to the choice of $c$.\footnote{Similar to the discussion in
\cref{fn:wooldridge92}, the treatment effect in ranks cannot be converted back to obtain
the ATE in levels without additional assumptions.\label{fn:convertbackige}}

Finally, the researcher might report treatment effects on transformed outcomes of the form
$1[Y \geq y]$ for different values of $y$. For example, the researcher might report the
impact of the treatment on the probability that an individual earns at least \$50,000,
\$60,000, etc., and interpret it as the treatment effect on the probability of obtaining a
``well-paying job.''\footnote{The researcher could also report the implied CDF of $Y(1)$
and $Y(0)$, from which one can infer the treatment effect on outcomes of this form for all
$y$.} Such treatment effects have interpretable units as {percentage points} (i.e.
changes in probabilities). We note that treatment effects for outcomes of this form
combine the effect of the treatment along the intensive and extensive margin, since for
example, a worker who has $Y(1) > \$50,000 > Y(0)$ could either not work under control
($Y(0)=0$) or work under control but have earnings below \$50,000.

\subsection{When the goal is to capture decreasing returns \label{subsec: m0x}}

We next consider the case where the researcher wants to
capture some form of decreasing marginal utility over the outcome. For example, when $Y$
is strictly positively valued, the ATE in logs corresponds to the change in utility from
implementing the treatment for a utilitarian social planner with log utility over
the outcome, $U = E[\log(Y)]$. Intuitively, this social welfare function captures the fact
that the planner values a percentage point of change in the outcome equally for all
individuals, regardless of their initial level of the outcome. 

Of course, log utility is not well-defined when there is an extensive margin: A coherent utility function defined with zero-valued outcomes must take a stand on the relative importance of the intensive versus extensive margins. Recall from \cref{sub:any_value_ate} that when using
transformations like $\log(1+y)$ or $\arcsinh(y)$, the scaling of the outcome implicitly
determines the weights placed on these margins. 

Instead of implicitly weighting the margins via the scaling of $Y$, a more transparent
approach is to explicitly take a stand on how much one values the two margins of
treatment. Of course, if one knows that their utility is captured by $U = E[m(Y)]$ (for a
particular unit of $Y$, say earnings in dollars), then the ATE for $m(Y)$ is appropriate.
If one is unsure exactly of their utility function, then a rough calibration is to specify
how much one values a change in earnings from 0 to 1 relative to a percentage change in
earnings for those with non-zero earnings. If, for example, one values the extensive
margin effect of moving from 0 to 1 the same as a $100x$ percent increase in earnings,
then one might consider setting $m(y) =
\log(y)$ for $y>0$ and $m(0) = -x$. The ATE for this transformation can be interpreted as
an approximate percentage (log point) effect, where an increase from 0 to 1 is valued at
$100x$ log points.\footnote{Note that this transformation will generally only be sensible
if
the support of $Y$ excludes $(0, e^{-x})$, since otherwise the function $m(y)$ is not
monotone in $y$ over the support of $Y$. It is common, however, to have a lower-bound on
non-zero values of the outcome; e.g., a firm cannot have between 0 and 1 employees. In our application to \citet{sequeira_corruption_2016} below, we normalize the minimum non-zero value of $Y$ to 1 when applying this approach.\label{fn: normalize y for m0}}

We emphasize that for a fixed value of $x$, this approach necessarily depends on the
scaling of the outcome (thus avoiding the trilemma in \Cref{prop: trilemma}). However,
this may not be so concerning since the appropriate choice of $x$ also depends on the
units of the outcome---e.g., saying a change from 0 to 1 is worth $100x$ percent means
something very different if 1 corresponds with one dollar versus a million dollars. In
other words, ATEs for transformations such as $\arcsinh(Y)$ may be difficult to interpret
because the scaling of the outcome implicitly determines the relative importance of the
intensive and extensive margins; this approach avoids that difficulty by \emph{explicitly}
taking a stand on the tradeoff between these two margins. Nevertheless, a challenge with
this approach is that researchers may have differing opinions over the appropriate choice
of $x$ (or more generally, over the appropriate utility function).

\subsection{When the goal is to understand intensive and extensive margins \label{subsec: two margins}}

Finally, we consider the case where the researcher is interested in understanding the
intensive and extensive margin effects separately. A common question in the literature on
job training programs \citep{card_active_2010}, for instance, is whether a program raises
participants' earnings by helping them find a job---which would be expected only to have
an extensive-margin effect---or by increasing human capital, which would be expected to
also affect the intensive margin. In such settings, it is natural to target separate
parameters for the intensive and extensive margins. 

For example, the parameter
$$\theta_{\mathrm{Intensive}} = \E[
\log(Y(1)) - \log(Y(0)) \mid Y(1) >0, Y(0) >0]$$ captures the ATE in logs for those who
would have a positive outcome regardless of their treatment status. The parameter
$\theta_{\mathrm{Intensive}}$ is scale-invariant but is not point-identified from the
marginal distributions of the potential outcomes (thus avoiding the trilemma in \Cref{prop:
trilemma}), and therefore cannot be consistently estimated without further
assumptions.\footnote{$\theta_{\mathrm{Intensive}}$ also does not take the form
$E_P[g(Y(1),Y(0))]$, although it can be written as $$\dfrac{E_P\bk{ \one[Y(1)>0,Y(0)>0]
\log(Y(1)/Y(0)) } }{ E_P[ \one[Y(1)>0,Y(0)>0]]}, $$ where both the numerator and
denominator take this form.} However, \citet{lee_training_2009} popularized a method for
obtaining bounds on $\theta_{\mathrm{Intensive}}$ under the monotonicity assumption that,
for example, everyone with positive earnings without receiving a training would also have
positive earnings when receiving the training.\footnote{See, also,
\citet{zhang_estimation_2003} for related results, including bounds without the
monotonicity assumption.} Bounds on $\theta_{\mathrm{Intensive}}$ can be reported
alongside measures of the extensive margin effect, such as the change in the probability
of having a non-zero outcome, $P(Y(1)>0) - P(Y(0)>0)$. One can also potentially tighten
the bounds (or restore point-identification) by imposing additional assumptions on the
joint distribution of the potential outcomes---we provide an example of this in our
application to \citet{carranza2022job} below; see \citet{zhang_evaluating_2008, zhang_likelihood-based_2009} for
related approaches.\footnote{We note that the \citet{lee_training_2009} bounds will tend
to be tight when the extensive margin effect is close to zero. As noted in \Cref{rmk:
finite changes in scale}, this is precisely the setting where ATEs for log-like
transformations are relatively insensitive to finite changes in scale.}

We note that the parameter $\theta_{\mathrm{Intensive}}$ is generally distinct from the
``intensive margin'' marginal effects implied by two-part models (2PMs), which were
recommended for scenarios with zero-valued outcomes by \citet{mullahy_why_2023}, among
others. In \Cref{sec: 2pms}, we consider the causal interpretation of the marginal effects
of 2PMs, building on the discussion in \citet{angrist_estimation_2001}. Our decomposition
shows that the marginal effects from 2PMs yield the sum of a causal parameter similar to
$\theta_{\mathrm{Intensive}}$ as well as a ``selection term'' comparing potential outcomes
for individuals for whom treatment only has an intensive margin effect to those with an
extensive margin effect. It thus will generally be difficult to ascribe a causal interpretation to
the marginal effects of 2PMs without assumptions about this selection.

\section{Empirical applications}
\label{sec: empirical}

In this section, we focus on three concrete empirical
applications to illustrate how the alternative parameters described in \cref{sec: recommendations} can be
estimated in practice. To illustrate a range of possible applications, we consider a randomized controlled trial, 
a difference-in-differences design, and an instrumental variables design. 

\subsection{An RCT setting: \citet{carranza2022job}}

\citet{carranza2022job} conduct a randomized controlled trial (RCT) in South Africa. Individuals randomized to the treatment group are provided with certified test results that they can show to prospective employers to vouch for their skills. Individuals in the control group do not receive test results.\footnote{Some individuals are also assigned to a ``placebo'' arm in which they are provided the test results but the form does not include the individual's name, and thus cannot credibly be shared with employers. We focus on the effect of the main treatment relative to the pure control group.} They then investigate how this treatment impacts labor market outcomes such as employment, hours worked, and earnings. We focus here on the effects on hours worked. 

\paragraph{Original specification and sensitivity to units.} \citet{carranza2022job} estimate the effect of their randomized treatment on the inverse hyperbolic sine of weekly hours worked. Formally, they estimate the OLS regression specification 
\begin{equation} \arcsinh(Y_i) = \beta_0 + D_i \beta_1 + X_i'\gamma + u_i, \label{eqn: carranza regression} \end{equation}
\noindent where $Y_i$ is average weekly hours worked for unit $i$, $D_i$ is an indicator
for whether unit $i$ was in the treatment group, and $X_i$ is a vector of
controls.\footnote{\citet{carranza2022job} include individuals
receiving the ``placebo'' treatment in the sample and add an indicator for
receiving the placebo treatment in $X_i$. We follow the same practice, although the
results are similar if units receiving the placebo treatment are dropped.} Their estimate
of the ATE ($\hat\beta_1$) is 0.201 (see column (1) in \Cref{tbl: carranza original}).
They interpret this as a 20\% change in hours: ``Certification increases average weekly
hours worked, coded as zero for nonemployed candidates, by 20 percent'' (p. 3560).

\begin{table}[ht!]
\centering
\begin{tabular}[t]{lccc}
\toprule
  & arcsinh(weekly hrs) & arcsinh(yearly hrs) & arcsinh(FTEs)\\
\midrule
Treatment & 0.201 & 0.417 & 0.031\\
 & (0.052) & (0.096) & (0.012)\\
Units of outcome: & Weekly Hrs & Yearly Hrs & FTEs\\
\bottomrule
\end{tabular}

\caption{Estimates using $\arcsinh(Y)$ with different units of $Y$ in \citet{carranza2022job}}
\label{tbl: carranza original}
\floatfoot{Note: This table shows estimates of the average treatment effect in \citet{carranza2022job} on the inverse hyperbolic sine of hours worked, estimated using \eqref{eqn: carranza regression}. In the first column, the outcome is the inverse hyperbolic sine of \emph{weekly} hours, as in the original paper. The remaining columns use the inverse hyperbolic sine of annualized hours (weekly hours times 52) or the inverse hyperbolic sine of the number of full-time equivalents worked (weekly hours divided by 40). Standard errors are clustered at the assessment date (the unit of treatment assignment) as in the original paper.}
\end{table}

However, the results in \Cref{sec: loglike results} suggest that the estimate of $\beta_1$
should not be interpreted as a percentage effect, since it depends on the units of the
outcome. To illustrate this, in columns (2) and (3) we re-estimate specification \eqref{eqn: carranza
regression} with $Y_i$ redefined to be (a) yearly hours worked, i.e. weekly hours times 52, or
(b) the number of full-time equivalents (FTE) worked, i.e. weekly hours divided by 40. The
results change quite substantially depending on the units used, with an estimate of 0.417
 using yearly hours and 0.031 using FTEs. We therefore turn next to alternative approaches with a percentage interpretation in this setting.

\paragraph{Percentage changes in the average.} The average number of (weekly) hours worked
was 9.84 in the treated group and 8.85 in the control group. A simple summary of the
treatment effect is thus that average hours worked were 11\% higher in the treated group
($9.84/8.85=1.11$). This is an estimate of the parameter $\thetaPoisson = E[Y(1)-Y(0)]/E[Y
(0)]$ discussed in \Cref{subsec: interpretable units} above. A numerically equivalent way to obtain this estimate of 11\% is to use Poisson quasi-maximum likelihood estimation (Poisson QMLE) to estimate
\begin{equation} Y_i = \exp(\beta_0 + \beta_1
D_i) U_i  \label{eqn: poisson carranza} \end{equation}  and then calculate
$\thetaPoissonHat = \exp(\hat\beta_1) - 1 = 0.11$ (see column (1) in \Cref{tbl: poisson
carranza}).\footnote{This estimation is done in the sample of treated units and control
units, discarding the placebo group. One could equivalently retain the units in the
placebo group and add an indicator for the placebo group to \eqref{eqn: poisson carranza}.}
This formulation in terms of Poisson QMLE is useful since it allows us to include
covariates to potentially increase precision. Column (2) of \Cref{tbl: poisson carranza}
shows the
estimate of $\thetaPoissonHat$ from estimating
\begin{equation} Y_i = \exp(\beta_0 + \beta_1
D_i + X_i'\gamma) U_i  \label{eqn: poisson carranza w covs} \end{equation} 
by Poisson QMLE, with smaller standard errors than in
column (1) (0.069 vs. 0.081).

\begin{table}[!ht]
\centering
\begin{tabular}[t]{lcc}
\toprule
  & (1) & (2)\\
\midrule
$\beta_0$ & 2.180 & 0.150\\
 & (0.058) & (0.311)\\
$\beta_1$ & 0.106 & 0.150\\
 & (0.072) & (0.060)\\
Implied Prop. Effect & 0.112 & 0.150\\
 & (0.081) & (0.069)\\
\midrule
Covariates & N & Y\\
\bottomrule
\end{tabular}

\caption{Poisson Regression and Implied Proportional Effects in \citet{carranza2022job}.\label{tbl: poisson carranza}}
\floatfoot{Note: the first two rows of column (1) show the estimates of the coefficients $\beta_0$ and $\beta_1$ in \eqref{eqn: poisson carranza}, estimated using Poisson QMLE. The third row shows the implied estimate of the proportional effect, $E[Y(1)-Y(0)]/E[Y(0)]$, calculated as $\thetaPoissonHat=\exp(\hat\beta_1) - 1$. The second column shows analogous estimates using \eqref{eqn: poisson carranza w covs}, which adds controls for pre-treatment covariates (we do not show the coefficients on the controls in the interest of brevity). Standard errors are clustered at the assessment date (the unit of treatment assignment) as in the original paper.}
\end{table}

\paragraph{Separate estimates for the extensive/intensive margins.} As shown in \Cref{tbl:recale-by-100-aer}, the treatment in \citet{carranza2022job} has an estimated extensive margin treatment effect of 0.055, meaning that it increases the fraction of people with positive hours worked by 5.5 percentage points. We may be interested
in whether the overall 11\% increase in hours worked is driven entirely by the extensive
margin, or whether there is an intensive margin effect. That is, does the treatment increase hours only by bringing people into the labor force, or does it also allow people who would have worked anyway to find jobs with more hours (e.g. full-time instead of part-time)? To this end, we can use the method of
\citet{lee_training_2009} to compute bounds for the effect of the treatment for
``always-takers'' who would have positive hours worked regardless of treatment
($Y(1)>0,Y(0)>0$).\footnote{We again exclude units receiving the
``placebo treatment.''} The Lee bounds approach requires the monotonicity assumption that
anyone who would work positive hours without the treatment would also work positive hours
when treated (i.e., $\P(Y(1) = 0, Y(0)>0) = 0$). This seems reasonable if
workers only share the
information provided by the
treatment when it helps their job prospects. It could be violated, however, if workers
mistakenly share their test score results when in fact employers view them negatively.

Column 1 of \Cref{tbl:lee-carranza} reports bounds of $[-0.20,0.28]$ for the effect of the
treatment on log hours worked by the always-takers, while Column 2 shows bounds of
$[-6.67,2.77]$ for weekly hours (in levels). Unfortunately, in this setting the Lee bounds
are fairly wide, including both a zero intensive-margin effect as well as fairly large
intensive-margin effects (up to 28 log points). Thus, without further assumptions, the
data is not particularly informative about the size of the intensive margin.

We can, however, say more if we are willing to impose some assumptions about how the
always-takers, who would work regardless of treatment status ($Y(1) > 0, Y(0) > 0$),
compare to the compliers ($Y(1) > 0, Y(0) = 0$), who
only work positive hours when receiving the treatment. We might reasonably expect that the
compliers are negatively selected relative to the always-takers and thus would work fewer
hours when receiving treatment. We can formalize this by imposing that $E[Y(1) \mid
\text{Complier}] = (1-c) E[Y(1) \mid \text{Always-taker}]$, i.e. that average hours worked
for compliers under treatment is $100 c$\% lower than for always takers. Columns 3 through
5 of \Cref{tbl:lee-carranza} report estimates of the average effect on the always-takers,
assuming $c = 0, 0.25$, and $0.5$, respectively.\footnote{Under the assumptions in \citet{lee_training_2009}, $E[Y(1) \mid Y(1)>0] = \theta E[Y(1) \mid \text{Always-taker}] + (1-\theta) E[Y(1) \mid \text{Complier}]$, where $\theta = P(Y(0)>0)/P(Y(1)>0)$. Plugging in $E[Y(1) \mid
\text{Complier}] = (1-c) E[Y(1) \mid \text{Always-taker}]$, it follows that $E[Y(1) \mid \text{Always-taker}] = 1/(\theta + (1-c)(1-\theta)) E[Y(1) \mid Y(1)>0]$. Further, $E[Y(0) \mid \text{Always-taker}] = E[Y(0) \mid Y(0)>0]$. Our estimation plugs in sample analogs to these expressions to estimate $E[Y(1)-Y(0) \mid \text{Always-taker}]$.} If we assume that always-takers and
compliers work an equal number of hours under treatment ($c=0$), then our point estimates suggest that there is actually a
negative intensive-margin effect for the always-takers ($-1.02$ weekly hours).
Under the assumption that compliers work 25\% fewer hours ($c=0.75$), the estimated effect
for always-takers is near zero ($-0.07$ weekly hours), consistent with no important
intensive margin. Finally, if we assume compliers work half as many hours as the
always-takers ($c=0.5$), then our estimates suggest a positive intensive margin
effect ($0.95$ weekly hours). Our assessment of the importance of the intensive margin
thus depends on how negatively-selected we think compliers are relative to always-takers.

\begin{table}[!ht]
    \centering
    \begin{tabular}[t]{lddddd}
\toprule
  & {(1)} & {(2)} & {(3)} & {(4)} & {(5)}\\
\midrule
Lower bound & -0.195 & -6.665 & {} & {} & {}\\
 & (0.064) & (1.366) & {} & {} & {}\\
Upper bound & 0.283 & 2.771 & {} & {} & {}\\
 & (0.114) & (2.067) & {} & {} & {}\\
Point estimate & {} & {} & -1.025 & -0.069 & 0.954\\
 & {} & {} & (1.182) & (1.349) & (1.588)\\
\midrule
units & {Log(Hours)} & {Hours} & {Hours} & {Hours} & {Hours}\\
$c$ & {} & {} & 0 & 0.25 & 0.5\\
\bottomrule
\end{tabular}

    \caption{Bounds and point estimates for the intensive margin treatment effect in \citet{carranza2022job}}
    \label{tbl:lee-carranza}
    \floatfoot{Note: This table shows bounds and point estimates of the intensive margin
    treatment effect in \citet{carranza2022job}, i.e. the treatment effect on hours worked
    for ``always-takers'' who would work positive hours regardless of treatment status.
    The first first two columns of the table show \citet{lee_training_2009} bounds for the
    effect of treatment on the always-takers when the outcome is $\log(\text{Hours})$ and
    weekly hours, respectively. Columns 3 through 5 show point estimates for the effect on
    weekly hours worked for always-takers under the assumption that average hours worked
    by ``compliers'' (who work only when treated) are $100c\%$ lower than for the
    always-takers. Standard errors are calculated via a non-parametric  bootstrap using
    1,000 draws, clustered at the assessment date level.}
\end{table}

\subsection{A DiD setting: \citet{sequeira_corruption_2016}\label{subsec: sequeira}}

\citet{sequeira_corruption_2016} studies a decrease in tariffs on trade between Mozambique and South Africa which occurred in 2008. She is interested in whether the reduction in tariffs reduced bribes paid to customs officers (among other outcomes). To study this question, she utilizes a difference-in-differences design comparing the change in bribes paid for products that were affected by the tariff change to that for a comparison group of products that did not experience a change in tariffs.

\paragraph{Original specification and sensitivity to units.} \citet{sequeira_corruption_2016} has repeated cross-sectional data with information on the bribe amount $Y_{it}$ paid on shipment $i$ in year $t$. She estimates the regression specification

\begin{equation}
 \log(1+Y_{it}) = \beta_0 + D_i \times \text{Post}_t \, \beta_1 + D_i \, \beta_2 + \text{Post}_t \, \beta_3 + X_{it}' \beta_4 + \epsilon_{it},   \label{eqn: sequeira original}
\end{equation}

\noindent where $D_i$ is an indicator for whether shipment $i$ is for a product type
affected by the tariff change in 2008, $\text{Post}_t$ is an indicator for whether year
$t$ is after the tariff change, and $X_{it}$ is a vector of covariates related to shipment
$i$ in period $t$. \citet{sequeira_corruption_2016} estimates \eqref{eqn: sequeira
original} with $Y_{it}$ measured in 2007 Mozambican Metical (MZN) and obtains
$\hat\beta_{1, (\text{MZN})} = -3.7$ (SE $=1.1$). However, estimating the same
specification with $Y_ {it}$ measured in thousands of U.S. dollars instead yields an estimate of $\hat\beta_{1, (\$1000)} = -0.11$ (SE $=0.070$).\footnote{We use
the conversion rate of $ 1 \text{ USD} = 24.48 \text{ MZN}$ as of January 1, 2007, as
provided by fxtop.com.} These results reinforce the conclusion from \Cref{sec: loglike
results} that treatment effects for $m(y) = \log(1+y)$ should not be interpreted as
approximating a percentage effect.

In what follows, we discuss a variety of alternative approaches that may be reasonable in
this context. We note that in a non-experimental setting like this, different approaches
may rely on different identifying assumptions. We therefore explicitly discuss the
identifying assumptions needed by each of the methods we discuss.

\paragraph{Proportional treatment effects.} One natural approach here is to
target the average proportional treatment effect on the treated,
$$\thetaPoissonATT = \dfrac{E[Y_{it}(1) \mid D_i =1, \text{Post}_t = 1] -
E[Y_{it}(0) \mid D_i =1, \text{Post}_t = 1]}{ E[Y_{it}(0) \mid D_i =1, \text{Post}_t = 1]
}.$$
\noindent This is the percentage change in the average outcome for the treated group in
the
post-treatment period.

Identification of $\thetaPoissonATT$ requires us to infer the counterfactual
post-treatment mean outcome for the treated group, $E[Y_{it}(0) \mid D_i =1, \text{Post}_t
= 1]$. Of course, one approach to obtain such identification would be to assume parallel
trends in levels. However, given that the treated and control groups have different
pre-treatment means (see the bottom panel of \Cref{tbl:sequeira merged}), it may be
unreasonable to expect
that time-varying factors (e.g. the macroeconomy) have equal level effects on the
outcome. An alternative identifying assumption is to impose that, in the absence of
treatment, the \emph{percentage} changes in the mean would have been the same for the
treated and control group.  As in \citet{wooldridge_simple_2023}, this can be formalized
using a ``ratio'' version of the parallel trends assumption,
\begin{equation}
\dfrac{E[Y_{it}(0) \mid D_i =1, \text{Post}_t = 1]}{E[Y_{it}(0) \mid D_i =1, \text{Post}_t
= 0]} = \dfrac{E[Y_{it}(0) \mid D_i =0, \text{Post}_t = 1]}{E[Y_{it}(0) \mid D_i =0,
\text{Post}_t = 0]} .    \label{eqn: ratio pt}
\end{equation}
\noindent Intuitively, \eqref{eqn: ratio pt} states that if the treatment had not
occurred, the average percentage change in the mean outcome for the treated group would
have been the same as the average percentage change in the mean outcome for the control
group. Under \eqref{eqn: ratio pt}, we can thus estimate the counterfactual percentage change in the mean outcome for the treated group using the observed percentage change for the control group.

\Cref{tbl:sequeira merged} shows that the sample mean of the outcome for the treated group
decreased by 75\% between the pre-treatment and post-treatment periods (from 4,742 to
1,172 (MZN)). Under the ratio parallel trends assumption
\eqref{eqn:
ratio pt}, this suggests that the mean outcome for the treated group would also have decreased
by 75\% in the absence of treatment, thus implying an estimate of $2,602$ for the counterfactual mean outcome for the treated group. The actual post-treatment mean for the treated group is 465, which is 82\% below this implied
counterfactual. This implies that the tariff reduction reduced the average bribe
in the post-treatment period by 82\%, i.e. $\thetaPoissonATTHat = -0.82$. Conveniently,
this estimate can also be obtained using Poisson QMLE to estimate
\begin{equation}
 Y_{it} = \exp( \beta_0 + D_i \times \text{Post}_t \, \beta_1 + D_i \, \beta_2 + \text{Post}_t \,
 \beta_3) \epsilon_{it}     \label{eqn: poisson sequeira}
\end{equation}

\noindent and then computing $\thetaPoissonATTHat = \exp(\hat\beta_1) - 1= -0.82$, as shown in column (1) of \Cref{tbl:sequeira merged}.

\begin{table}[!ht]
\begin{tabular}[t]{ldd}
\toprule
  & {(1)} & {(2)}\\
\midrule
Post $\times$ Treatment & -1.722 & -1.272\\
 & (0.632) & (0.606)\\
Prop. Effect & -0.821 & -0.720\\
 & (0.113) & (0.170)\\
\midrule
Covariates & {N} & {Y}\\ 
\midrule
Treated Group Means (Pre, Post): & {10527} & {465}\\
Treated Group Means (Pre, Post): & {4742} & {1172}\\
\bottomrule
\end{tabular}

    \caption{Poisson regression estimates of $\theta_{\text{ATT}\%}$}
    \label{tbl:sequeira merged}
    \floatfoot{Note: this table shows Poisson regression estimates of \eqref{eqn: poisson
    sequeira} and \eqref{eqn: poisson sequeira w cov} in columns (1) and (2), respectively.
    The first row of the table shows the estimate $\hat\beta_1$. The second row shows
    $\exp(\hat\beta_1) - 1$, which is the implied estimate of the proportional treatment
    effect $\thetaPoissonATT$. The coefficients on control variables are omitted for
    brevity. Standard errors are clustered at the four-digit product code as in the
    original paper. The mean bribe amounts (in MZN) by treatment group and time period are
    displayed in the bottom panel. The pre-period refers to the year 2007, whereas the
    post-treatment period is an average over the years 2008, 2011, and 2012 (the three
    post-treatment years for which data is available).}
\end{table}

We can also re-incorporate the covariates $X_{it}$ by estimating  
\begin{equation}
 Y_{it} = \exp( \beta_0 + D_i \times \text{Post}_t \, \beta_1 + D_i \, \beta_2 +
 \text{Post}_t \, \beta_3 + \beta_4' X_{it}) \epsilon_{it},   \label{eqn: poisson sequeira
 w cov}
\end{equation}
\noindent which yields an estimate of $\thetaPoissonATT$ of $-0.72$, as
shown in the second column of \cref{tbl:sequeira merged}. As formalized in
 \citet{wooldridge_simple_2023}, this estimate will be a consistent estimate of
 $\thetaPoissonATT$ if \eqref{eqn: ratio pt} holds conditional on $X_{it}$, and the
 conditional expectation of $Y_{it}$ takes the functional form implied by 
 \eqref{eqn:
 poisson sequeira w cov} (assuming $\epsilon_{it}$ has mean 1 conditional on the
 covariates). The approach with covariates thus suggests that the tariff change reduced
 the average bribe for treated products by 72\% in the post-treatment period.

\citet{sequeira_corruption_2016}'s data only contains information on one year prior to
treatment (2007), and so in this context it is not possible to evaluate the plausibility
of \eqref{eqn: ratio pt} using periods prior to the policy change of interest. If multiple
pre-treatment periods were available, however, one could estimate a Poisson QMLE
event-study of the form
\begin{equation}
 Y_{it} = \exp\pr{ \lambda_t + 
 D_i \, \beta_2 + \sum_{r \neq -1} D_i \times [\text{RelativeTime}_t = r] \,
 \beta^
 {ES}_r} \epsilon_{it} ,   \label{eqn: poisson es}
\end{equation}

\noindent where $\text{RelativeTime}_t = t-2008$ is the time relative to the treatment
date. The event-study coefficients $\beta^{ES}_r$ for $r<0$ are analogous to
``pre-trends'' coefficients in typical difference-in-differences
event-studies, and are informative about whether the pre-treatment analogue to \eqref{eqn: ratio pt}
holds.\footnote{More precisely, the exponentiated coefficients
$\exp(\hat\beta_{r})-1$ correspond to the implied ``placebo''  proportional treatment
effects for periods before treatment. We recommend plotting the exponentiated coefficients
in event-studies, although we note that $\exp(\beta) - 1 \approx \beta$ for $\beta \approx
0$. As with typical tests for pre-trends, one should be cautious that a
failure to reject the null that the pre-treatment coefficients equal zero does not
necessarily imply that the identifying assumption is satisfied
\citep{kahn-lang_promise_2020, roth_pretest_2022}. One can (partially) address these
issues by applying sensitivity analysis tools for event-studies
\citep[e.g.][]{rambachan_more_2023} to estimates of \eqref{eqn: poisson es} to further
gauge the robustness of the findings to violations of the identifying assumptions. We also
refer the reader to \citet{wooldridge_simple_2023} for extensions of the Poisson
regression approach to settings with staggered treatment timing.}

\paragraph{Log effects with calibrated extensive margin value.} The analysis above
presented estimates of $\thetaPoissonATT$, the proportional change in the
\emph{average} bribe caused by the treatment. It is well-known that averages can be
heavily influenced by observations in the tail, especially when the outcome has a skewed
distribution, as is the case here (see \Cref{fig:sequeira-density}). One might argue that
a world in which most products receive medium-sized bribes is more corrupt than one in
which a very small fraction of products receive large bribes---even if they both produce
the same average bribe amount. This motivates studying the treatment effect on a concave
transformation of the outcome that is less heavily influenced by outcomes in the tail of
the distribution. As an illustration of this, we first normalize the outcome so that $1$
corresponds to the value of the minimum non-zero bribe in the data (that is, we divide by
$y_{\min} = \min_{Y_{it}>0} Y_{it} = 15.68 \text{ MZN}$). We then estimate the treatment
effect for the transformed outcome $m(Y)$, where $m(y) =
\log(y)$ for $y>0$ and $m(0) = -x$ for some choice of $x$, as described in \Cref{subsec:
m0x}. If $x$ is set to 0, then this estimates the treatment effect in logs where all zero
bribes are set to equal the smallest positive bribe in the data; this specification thus
``shuts off'' the extensive margin change between 0 and $y_{\min}$. If instead $x$ is set
to $0.1$, for example, then a change between $0$ and $y_{\min}$ is valued as the
equivalent of a 10 log point change along the intensive margin.

We estimate the treatment effect for these transformations using the analogue to
\eqref{eqn: sequeira original} that replaces $\log(1+Y_{it})$ with $m(Y_{it})$ on the
left-hand side.\footnote{As usual, identification of the treatment effect for $m(Y)$ using
difference-in-differences requires parallel trends for $m(Y(0))$. The identifying
assumption thus varies depending on the choice of $x$. The results in
\citet{roth_when_2023} imply that parallel trends will hold for all values of $x$ when a
parallel trends assumption is satisfied for the distribution of $Y(0)$. If more
pre-treatment periods were available, these identifying assumptions could be partially
evaluated using pre-trends tests. See \Cref{rmk: identification did} for additional
discussion of identification.} The results for $x \in \{0,0.1,1,3\}$ are shown in
\Cref{tbl:sequeira-log0-x}. Column (1) shows an effect of 249 log points ($\hat\beta_1 =
-2.49$) when we treat zero bribes as if they were equal to $y_{\min}$ (i.e. setting
$x=0$). The estimated treatment effect grows in magnitude as we place more value on the
extensive margin by increasing $x$. Interestingly, the original estimate in
\citet{sequeira_corruption_2016} of $-3.748$ using $\log(1+Y)$ is similar to what we
obtain when we value a change from 0 to $y_{\min}$ at 300 log points ($x=3$). The original
specification can thus be viewed as placing a rather large weight on the extensive margin.

\begin{figure}
    \centering
    \includegraphics[width = .75\linewidth]{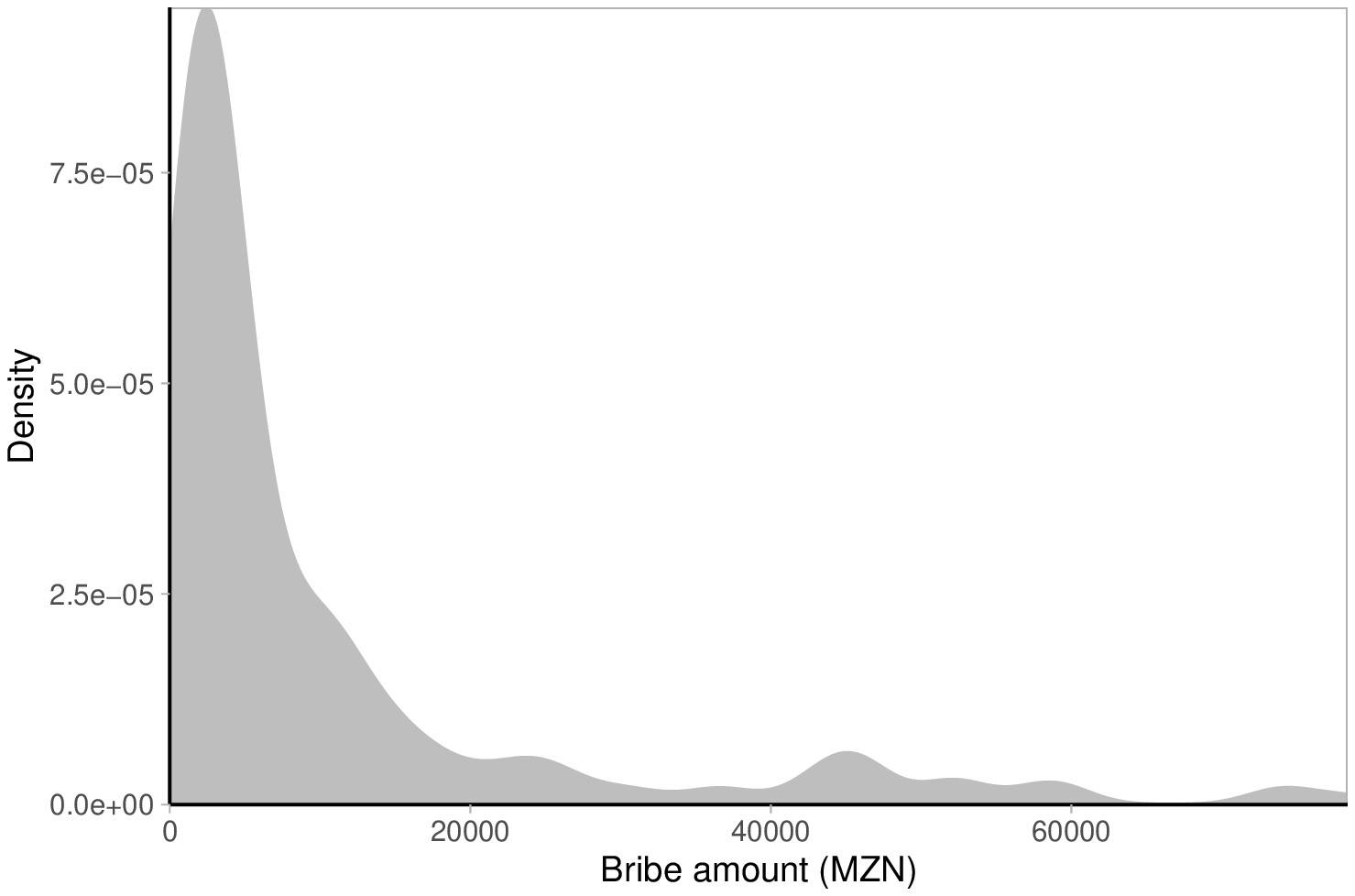}
    \caption{Density of bribe amount in \citet{sequeira_corruption_2016}}
    \label{fig:sequeira-density}
    \floatfoot{Note: this figure shows a kernel density estimate of the bribe amount in \citet{sequeira_corruption_2016}, pooling across all observations with a positive bribe. The kernel density estimates are constructed using the default settings of the \texttt{stat\_density} function in R.}
\end{figure}

\begin{table}[!ht]
    \caption{Explicit calibration of the extensive margin in \citet{sequeira_corruption_2016}}
    \label{tbl:sequeira-log0-x}
    \begin{tabular}[t]{ldddd}
\toprule
  & {(1)} & {(2)} & {(3)} & {(4)}\\
\midrule
Post $\times$ Treatment & -2.493 & -2.538 & -2.949 & -3.860\\
 & (0.740) & (0.752) & (0.861) & (1.106)\\
Extensive margin value ($x$): & 0.000 & 0.100 & 1.000 & 3.000\\
\bottomrule
\end{tabular}

    \floatfoot{Note: this table shows estimates of the treatment effect on the treated using $m(Y)$ as the outcome in \citet{sequeira_corruption_2016}, where $m(y)$ is defined to equal $\log(y)$ for $y>0$ and $-x$ for $y=0$. The outcome is normalized so that $Y=1$ corresponds to the minimum non-zero value of the outcome. Thus, the treatment effect assigns a value of $100 x$ log points to an extensive margin change between 0 and the minimum non-zero value of $Y$. The treatment effects are estimated using \eqref{eqn: sequeira original}, except replacing $\log(1+Y_{it})$ with $m(Y_{it})$. Standard errors are clustered at the four-digit product code as in the original paper.}
\end{table}

\subsection{An IV setting: \citet{berkouwer2022credit}}
\citet{berkouwer2022credit} conduct an RCT in Nairobi in which
they randomize the price for energy-efficient stoves. They use the randomized price 
($p_i$)
as an
instrument for whether an individual $i$ buys an energy-efficient stove ($D_i$). They use
this
instrument to estimate the effects of stove-adoption on outcomes such as charcoal spending
($Y_i$).

\paragraph{Original specification and sensitivity to scale.} Let $X_i$ be a vector of
control variables (including a constant). \citet{berkouwer2022credit} estimate
\begin{equation}\arcsinh(Y_{i}) = D_i \beta + X_i'\gamma + \epsilon_i \label{eqn: 2sls berkouwer}\end{equation}
\noindent by two-stage least squares (TSLS), using $p_i$ as an instrument for $D_i$.\footnote{More precisely, each observation $i$ is an individual-by-week pair, and some (but not all) individuals are surveyed on multiple weeks. Standard errors are clustered at the respondent level.} (They
also report results where spending is measured in levels.) The estimated coefficient
$\hat\beta$ is an estimate of the LATE of stove adoption on the $\arcsinh$ of charcoal
spending for instrument-compliers whose decision of whether to purchase the stove depends on the
price offered in the experiment.\footnote{We use the phrase ``instrument-compliers'' to distinguish compliers for the instrument, whose value of $D(z)$ depends on $z$, from ``compliers'' discussed earlier who have $Y(1)>0,Y(0)=0$. Since the instrument takes on multiple values
(i.e. multiple price offers), $\beta$ corresponds to a weighted average of treatment
effects across instrument-compliers for different values of the instrument
\citep{angrist_interpretation_2000}.}  In \citet{berkouwer2022credit}, $Y_i$ is measured
as weekly charcoal spending in dollars. They obtain a coefficient of $\hat\beta = -0.50$
and write ``[t]he 50 log point reduction corresponds to a 39 percent decrease in charcoal
consumption [since $\exp(-0.50)=1-0.39$]'' (p. 3306). 

However, if we change the units of the outcome to annual charcoal spending in Kenyan shillings, the original currency in which charcoal spending was measured, the same specification yields an estimate of
$-0.44$. Relative
to our previous applications, the change in the treatment effect estimates is fairly small
for these choices of units, due to a small estimated extensive margin of
0.01 (see \Cref{tbl:recale-by-100-aer}).\footnote{We note, however, that the
$t$-statistic for the effect on $\arcsinh(Y_i)$ is rather sensitive here, changing from
approximately 7 to 3 depending on the units.} Nevertheless, the fact that the treatment
effects using an $\arcsinh$-transformed outcome depend on the units should give us pause
in interpreting them as percentages. Indeed, a percentage effect is not well-defined for
someone who has non-zero spending under treatment and zero spending under the control, so
an average individual-level percentage effect does not make sense if the treatment can
affect whether one has any charcoal spending.

\citet{berkouwer2022credit} first discuss the LATE in levels, and then immediately
afterwards state that the treatment effect for the $\arcsinh$-transformed outcome
``corresponds to a 39 percent decrease in charcoal consumption'' (p. 3306). The main goal
of taking the $\arcsinh$ transformation here thus appears to be to obtain a treatment
effect with a percentage interpretation. We therefore next implement two approaches with
an (approximate) percentage interpretation in this context.

\paragraph{Proportional LATE.} One natural approach in this context is to estimate the
proportional change in the average outcome for instrument-compliers, i.e. to estimate $\thetaPoisson$
among the population of instrument-compliers. Put otherwise, we can express the LATE in levels as a
percentage of the control mean for instrument-compliers. An estimate of the LATE in levels is naturally
obtained using TSLS specification \eqref{eqn: 2sls berkouwer} with $Y_i$ as the outcome,
which yields an estimate of $-2.46$. As described in \citet{abadie_bootstrap_2002}, we can
likewise obtain an estimate of the control instrument-complier mean by using TSLS with $-(D_i-1)
\cdot Y_i$ as the outcome, which yields an estimate of $5.86$. Putting these together, we
obtain an estimate of $\thetaPoisson$ for instrument-compliers of $-2.46/5.86=-0.42$ (SE =
0.046), which suggests that average charcoal
spending is 42\% lower for instrument-compliers under treatment than under control.\footnote{The standard error was calculated via a non-parametric bootstrap with
1,000 draws, clustered at the respondent level. We note
that with a binary instrument, an estimate of $\theta_{\text{Intensive}}$ for instrument-compliers
can be obtained using Poisson IV regression (e.g. the ivpoisson command in Stata); see
\citet{angrist_estimation_2001}. However, we are not aware of a LATE interpretation of
Poisson IV regression with a multi-valued instrument, and thus do not pursue it here.
Whether Poisson IV regression has such an interpretation with a continuous IV strikes us
an interesting topic for future work.} If pollution is proportional to charcoal spending,
then this parameter is economically relevant as it corresponds to the percentage reduction
in pollution for instrument-compliers from gaining access to the efficient stove.

\paragraph{Lee bounds.} \citet{berkouwer2022credit} benchmark their treatment effect
estimates relative to engineering estimates of the efficiency gains of using an efficient
stove relative to a non-efficient one. For this benchmarking exercise, it seems sensible
to focus on the intensive-margin effect of the treatment---i.e., the treatment effect for
instrument-compliers who would use a non-efficient stove if offered a high price and an efficient one
if offered a low price. To do so, we can form \citet{lee_training_2009}-type
bounds for the average treatment effect in logs for instrument-compliers who would have positive
charcoal spending regardless of treatment status.\footnote{The validity of the
\citet{lee_training_2009}-type bounds requires the ``monotonicity'' assumption that all
instrument-compliers who would have some charcoal consumption when not buying an efficient stove
would also have some charcoal consumption when buying an efficient stove, which seems
reasonable. Note that this is a distinct assumption from the instrument
monotonicity assumption needed for a LATE interpretation for instrumental variables
\citep{imbens1994identification}, which in this context states that anyone who would buy a
stove at a higher price would also buy at a lower price.}

The bounds on $\theta_{\text{Intensive}}$ for instrument-compliers are $ [-0.565,-0.538]$ (with SEs
for the lower and upper bounds of 0.072 and 0.075).\footnote{We obtain these estimates
using the procedure in \citet{abadie_bootstrap_2002}, as described in detail in \Cref{sec:
berkouwer iv details}.} This implies that for the instrument-compliers who would spend on charcoal
regardless of treatment status, spending decreases by 54 to 56 log points. We note that
the Lee bounds are fairly tight in this case, as tends to be the case when the extensive
margin is small. It is also worth noting that in this example, the estimated treatment
effects using $\arcsinh(Y_i)$---both in terms of weekly spending in dollars and in terms
of annual spending in Kenyan shillings---fall outside of the Lee bounds, although they are
fairly close to the upper bound when using weekly spending in dollars.

\section{Conclusion}
It is common in empirical work to estimate ATEs for transformations such as $\log(1+Y)$ or
$\arcsinh(Y)$ which are well-defined at zero and behave like $\log(Y)$ for large values of
$Y$. We show that the ATEs for such transformations should not be interpreted as percentages,
since they depend arbitrarily on the units of the outcome when there is an extensive margin. Further, we show that any
parameter that is an average of individual-level treatment effects of the form $E_P[g(Y(1),Y(0))]$ must be scale-dependent if it is
point-identified and well-defined at zero. We discuss several alternative approaches,
including estimating scale-invariant normalized parameters (e.g. via Poisson regression), explicitly calibrating the value placed
on the intensive versus extensive margins, and separately estimating effects for the
intensive and extensive margins (e.g. using Lee bounds). We illustrate how these approaches can be applied in practice in three empirical applications.

\nocite{azoulay2019does,beerli2021abolition,berkouwer2022credit,cabral2022demand,carranza2022job,faber2019tourism,hjort2019arrival,johnson2020regulation,mirenda2022economic,norris2021effects,ager2021intergenerational,arora2021knowledge,bastos2018export,fetzer2021security,moretti2021effect,rogall2021mobilizing,cao2022rebel}

\newpage

{\singlespacing
\bibliographystyle{aer}
\bibliography{Bibliography}}

\newpage 
\appendix 
\begin{center}
{\Large\bfseries
Online Appendix for 

``Log with zeros? Some problems and solutions''
}

\vspace{1em}

\begin{tabular}[t]{c@{\extracolsep{2em}}c} 
\large{Jiafeng Chen} &  \large{Jonathan Roth}\\ 
\normalsize{\it Harvard Business School} &
\normalsize{\it Brown University} \\ 
\normalsize{\href{mailto:jiafengchen@g.harvard.edu}{jiafengchen@g.harvard.edu}} & \normalsize{\href{mailto:jonathan_roth@brown.edu}{jonathan\_roth@brown.edu}}
\end{tabular}
\vspace{2em}

\today
\end{center}

\DoToC
\setcounter{page}{0}
\thispagestyle{empty}

\newpage

\numberwithin{equation}{section}

\section{Proofs of results in the main text}

\subsection{Proof of \cref{prop: can get any value for ATE}}

\anyvalue*
\begin{proof}
Note that $\theta(0) = \E_P[m(0)] - \E_P[m(0)] = 0$. Additionally,  \cref{prop: log a
rate} below implies that $|\theta(a)| \to \infty$ as $a \to \infty$. To establish the
proof, it thus suffices to show that $\theta(a)$ is continuous on $[0,\infty)$. The
desired result is then immediate from the intermediate value theorem.

To establish continuity, fix some $a \in [0,\infty)$ and consider a sequence $a_n \to a$.
Without loss of generality, assume $a_n<a+1$ for all $n$. Let $m_{a_n}(y) = m(a_n y)$.
Since $m$ is continuous, $m_{a_n}(y) \to m_a(y)$ pointwise. 
We are done if we can apply the dominated convergence theorem to show that therefore $\E
[m_{a_n}(Y)] \to \E[m_a(Y)]$.

Since $m(y)/\log(y) \to 1$ as
$y \to \infty$, there exists $\bar{y}$ such that $m(y) < 2 \log(y)$ for all $y \geq
\bar{y}$. From the monotonicity of $m$, it follows that \begin{align*}
 m(0) \leq m(y) &\leq \one[y \leq
\bar{y}] m(\bar{y}) + \one[y > \bar{y}] 2\log(y)  \\
&\leq \eta + 2 \cdot \one[y > \bar{y}] \log(y), \numberthis \label{eqn: m bound}
\end{align*}
where $\eta = |m(\bar{y})|$, and hence
\begin{align*}
m(0) \leq m_{a_n}(y) &\leq \eta + 2 \cdot \one[a_n y>  \bar{y}]
 \log(a_n y) 
 \\ 
 &\leq \eta + 2 \cdot \one[y>0] \cdot (|\log(a+1)| + |\log(y)|) =:
 \bar{m}(y).
\end{align*}
\noindent for all $n$. Hence, we have that $|m_{a_n}(y)| \leq |m(0)| + \bar{m}(y)$ for
all $n$, and the bounding function is integrable for $Y(d)$ for $d=0,1$ by the fourth
assumption of the proposition. It follows from the dominated convergence theorem that
$\E_P[ m_{a_n}(Y(d))] \to \E_P[ m_a(Y(d)) ]$ for $d=0,1$, and thus $\theta(a_n) \to
\theta(a)$, as we wished to show.
\end{proof}

\subsection{Proof of \cref{prop: trilemma}}

\trilemma*
\begin{proof}
To establish the proof of \cref{prop: trilemma}, we rely on \cref{prop: identified iff
log}, which shows that the only scale-invariant parameter of the form $E_P[g(Y(1),Y(0))]$
that is identified over distributions on the positive reals is the ATE in logs (up to an
affine transformation).

Given \cref{prop: identified iff log}, note that if $g:[0,\infty)^2 \to \reals$ is
increasing in $y_1$, then it cannot be equal to $c \log(y_1/y_0) + d$ for $c >0$
everywhere on $(0,\infty)^2$, since this would imply that $\lim_{y_1 \to 0} g(y_1,1) =
-\infty < g(0,1)$. 
\Cref{prop: trilemma} is then immediate from \cref{prop:
identified iff log}, which shows that if properties (a) and (b) are satisfied, and
$\theta_g$ is point-identified over $\mathcal{P}_{++} \subset
\mathcal{P}_{+}$, then $g = c \log(y_1/y_0)+d$ on $(0,\infty)^2$. Thus, there does not
exist such a $g$.
\end{proof}

\begin{prop} 
\label{prop: identified iff log}
Let $\mathcal{P}_{++}$ denote the set of distributions over compact subsets of
$(0,\infty)^2$. Suppose $g: (0,\infty)^2 \rightarrow \reals$ is weakly increasing in $y_1$
and scale-invariant. Then $\theta_g$ is point-identified over $\mathcal{P}_{++}$ if and
only if $g(y_1,y_0) = c \cdot (\log(y_1) - \log(y_0)) + d$, for constants $c \geq 0$ and $d \in
\reals$.
\end{prop}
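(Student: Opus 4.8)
I would prove the two directions separately, with essentially all the content in the ``only if'' direction. For sufficiency, suppose $g(y_1,y_0) = c\,(\log y_1 - \log y_0) + d$ with $c \geq 0$. Every $P \in \mathcal{P}_{++}$ has support contained in a compact subset of $(0,\infty)^2$, so $\log Y(1)$ and $\log Y(0)$ are bounded $P$-a.s., $\theta_g = c\,\E_P[\log Y(1)] - c\,\E_P[\log Y(0)] + d$ is finite, and it visibly depends on $P$ only through the marginals $P_{Y(1)}, P_{Y(0)}$. Hence $\theta_g$ is point-identified over $\mathcal{P}_{++}$.

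For necessity, assume $\theta_g$ is point-identified over $\mathcal{P}_{++}$, and proceed in two steps. \emph{Step 1 (reduce to additive separability).} I claim $g$ must satisfy the ``rectangle identity'' $g(y_1,y_0) + g(y_1',y_0') = g(y_1,y_0') + g(y_1',y_0)$ for all positive arguments. The argument is a standard coupling by contradiction: given a quadruple violating the identity, the two-point distributions $P = \tfrac12\delta_{(y_1,y_0)} + \tfrac12\delta_{(y_1',y_0')}$ and $Q = \tfrac12\delta_{(y_1,y_0')} + \tfrac12\delta_{(y_1',y_0)}$ both lie in $\mathcal{P}_{++}$ (finite support inside the open positive quadrant), have identical marginals ($\tfrac12\delta_{y_1}+\tfrac12\delta_{y_1'}$ in the first coordinate, $\tfrac12\delta_{y_0}+\tfrac12\delta_{y_0'}$ in the second), but assign different values to $\E[g]$, contradicting point-identification. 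Setting $y_1' = y_0' = 1$ in the rectangle identity gives $g(y_1,y_0) = \phi(y_1) + \psi(y_0)$ with $\phi(y_1) := g(y_1,1) - g(1,1)$ and $\psi(y_0) := g(1,y_0)$; note $\phi$ inherits weak monotonicity from $g(\cdot,1)$ and satisfies $\phi(1) = 0$.

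\emph{Step 2 (combine with scale-invariance).} Expanding $g(ay_1,ay_0) = g(y_1,y_0)$ yields $\phi(ay_1) - \phi(y_1) = \psi(y_0) - \psi(ay_0)$; the left side is free of $y_0$ and the right side free of $y_1$, so both equal a function of $a$ alone, which (put $y_1 = 1$) equals $\phi(a)$. Therefore $\phi(ay) = \phi(a) + \phi(y)$ for all $a,y>0$, so $u \mapsto \phi(e^u)$ is a monotone solution of Cauchy's functional equation on $\reals$, hence linear; combined with $\phi(1)=0$ and monotonicity this gives $\phi(y) = c\log y$ for some $c \geq 0$. Feeding this back into $\psi(y_0) - \psi(ay_0) = \phi(a) = c\log a$ and setting $y_0 = 1$ gives $\psi(y) = g(1,1) - c\log y$. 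Thus $g(y_1,y_0) = c\,(\log y_1 - \log y_0) + d$ with $d = g(1,1)$, as claimed.

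I expect Step 1 to be the conceptual crux: it is where the hypothesis ``point-identified over $\mathcal{P}_{++}$'' is actually invoked, and one must check that the coupling distributions genuinely belong to $\mathcal{P}_{++}$ (they do, being finitely supported in $(0,\infty)^2$) and share marginals. The other point requiring care is the appeal in Step 2 to the classification of monotone solutions of Cauchy's equation---since $g$ is assumed only to be monotone in $y_1$, not continuous, it is monotonicity rather than continuity that excludes the pathological additive solutions. The remaining manipulations are routine bookkeeping.
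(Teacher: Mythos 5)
Your proposal is correct and follows essentially the same route as the paper's proof: a two-point coupling argument showing that point-identification forces the rectangle identity and hence additive separability, followed by a reduction to Cauchy's logarithmic functional equation, where monotonicity (rather than continuity) rules out pathological solutions and yields $\phi(y) = c\log y$ with $c \geq 0$. The only difference is cosmetic---you separate variables in $\phi(ay_1)+\psi(ay_0)=\phi(y_1)+\psi(y_0)$ directly, whereas the paper first writes $g(y_1,y_0)=h(y_1/y_0)$ and evaluates the separable form at four grid points to obtain $h(ab)=h(a)+h(b)-h(1)$.
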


\begin{proof}
\label{appendix: proofs}
We first show that point-identification over $\mathcal{P}_{++}$ implies that $g(\cdot,\cdot)$ must be
additively separable. We do so by considering the points $\br{y_0, y_0 + b} \times
\br{y_1, y_1+a}$ on a rectangular grid. If $g(\cdot, \cdot)$ is not additively separable,
then its expectation with respect to distributions supported on the rectangular grid
depends on the correlation. Similar arguments appear in, e.g., \citet{fan_partial_2017}.

Formally, suppose that there there exist positive values $y_1,y_0,a,b > 0$ such that 
\[g(y_1,y_0) + g(y_1+a,y_0+b) \neq g(y_1+a,y_0) + g(y_1,y_0 + b).\] 
Now, consider the marginal distributions $\P_{Y(d)}$ such that $P(Y(1) = y_1) =
\frac{1}{2} = \P(Y(1) =y_1 + a)$ and $\P(Y(0) = y_0) = \frac{1}{2} = \P(Y(0) = y_0 + b)$.
Let $P_1$ and $P_2$ denote the joint distributions corresponding with these marginals and
perfect positive and negative correlation of the potential outcomes, respectively. Then we
have that
\begin{align*}
\E_{P_1}(g(Y(1),Y(0))) &= \frac{1}{2} \left( g(y_1,y_0) + g(y_1+a,y_0+b) \right) \\
& \neq
\frac{1}{2} \left( g(y_1+a,y_0) + g(y_1,y_0 + b) \right) \\&= \E_{P_2}(g(Y(1),Y(0))),
\end{align*}
and thus $\theta_g$ is not point-identified from the marginals at $P_1$. Hence, if
$\theta_g$ is identified over $\mathcal{P}_{++}$, then it must be that
\[g(y_1,y_0) + g(y_1+a,y_0+b) = g(y_1+a,y_0) + g(y_1,y_0 + b) \text{ for all } y_1,y_0,
a,b>0,\]
and hence
\[g(y_1+a,y_0)  - g(y_1,y_0)=  g(y_1+a,y_0+b) -g(y_1,y_0 + b) \text{ for all }
y_1,y_0,a,b>0.\] It follows that we can write $g(y_1,y_0) = r(y_1) + q(\frac{1}{y_0})$,
where $r(y_1) = g(y_1,1) - g(1,1)$ and $q(\frac{1}{y_0}) = g(1,y_0)$.

Second, we show that homogeneity of degree zero, combined with monotonicity, implies that
$g$ must be a difference in logarithms. 
Observe that since $g$ is scale-invariant, \[g(y_1,y_0) = g\left(\frac{y_1}{y_0}, \frac{y_0}
{y_0}\right) = g\left(\frac{y_1}{y_0}, 1\right) =: h\left(\frac{y_1}{y_0}\right),\]
\noindent where $h$ is an increasing function. We thus have that for any $a,b > 0$,
\begin{align*}
& g(1,1) = h(1) = r(1) + q(1) \\
& g(a,1) = h(a) = r(a) + q(1) \\
& g\pr{1,\frac{1}{b}} = h(b) = r(1) + q(b) \\
& g\pr{a,\frac{1}{b}} = h(ab) = r(a) + q(b) 
\end{align*}
and hence $h(ab) = h(a) + h(b) - h(1)$. It follows that $\tilde{h}(x) = h(x) - h(1)$ is an
increasing function such that $\tilde{h}(ab) = \tilde{h}(a) + \tilde{h}(b)$ for all $a,b
\in \reals$, i.e. an increasing function satisfying Cauchy's logarithmic function
equation: $\phi(ab) = \phi(a)+\phi(b)$ for all positive reals $a,b$. Recall that if a
function is increasing, then it has countably many discontinuity points, and thus is
continuous somewhere. It is a well-known result in functional equations that the only
solutions to Cauchy's logarithmic equation are of the form $\phi(t) = c \log(t)$, if we
require that these solutions are continuous at some point; see
\citet{aczel_lectures_1966}, Theorem 2 in Section 2.1.2.\footnote{Correspondingly,
non-trivial solutions to Cauchy's logarithmic equations are highly ill-behaved.} Since we
require monotonicity, the constant $c \geq 0$. Thus, $g(y_1, y_0) = h(y_1/y_0) = \tilde
h(y_1/y_0) + \tilde h(1) = c \log(y_1) - c \log (y_0) + \tilde h(1)$. Letting $d = \tilde
h(1)$ completes the proof of \Cref{prop: identified iff log}.
\end{proof}

\section{Extensions}

\subsection{Sensitivity to finite changes in scale}

The following result formalizes the discussion in \Cref{rmk: finite changes in scale}
about how the ATE for $m(Y)$ changes with finite changes in the scale of $Y$.

\begin{prop} \label{prop: log a rate}
Suppose that:
\begin{enumerate}
  \item $m: [0, \infty) \to \R$ is a weakly increasing function.
  \item $m(y) / \log(y) \to 1$ as $y \to \infty$.
  \item $\E_{P_{Y(d)}}[|\log Y(d)| \mid Y(d) > 0] < \infty$ for $d = 0, 1$. 
\end{enumerate}
Then,  as $a \to \infty$,
\[ \E_P[ m(a \cdot Y(1) ) - m(a \cdot Y(0)) ] = \left( \P(Y(1) >0) - \P(Y(0)>0) \right)
\cdot \log(a) +  o(\log(a)). \]
\end{prop}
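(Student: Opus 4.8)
The plan is to prove the equivalent limit statement that $\frac{1}{\log a}\,\E_P[m(aY(d))] \to \P(Y(d)>0)$ as $a\to\infty$ for each $d\in\{0,1\}$ separately, and then subtract; rearranging the resulting limit for the difference gives exactly the claimed expansion $\E_P[m(aY(1))-m(aY(0))] = (\P(Y(1)>0)-\P(Y(0)>0))\log a + o(\log a)$. As a preliminary one checks that $\E_P[m(aY(d))]$ is a well-defined finite number for each fixed $a$: weak monotonicity of $m$ gives $m(aY(d)) \ge m(0) \in \R$, and the upper bound established below controls the positive part of $m(aY(d))$.

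First I would split $\E_P[m(aY(d))] = \E_P[m(aY(d))\,\one[Y(d)>0]] + m(0)\,\P(Y(d)=0)$. The second term is $O(1)$, hence negligible after dividing by $\log a$, so everything reduces to the first term. On the event $\{Y(d)>0\}$ we have $aY(d)\to\infty$ as $a\to\infty$, so, pointwise,
\[
\frac{m(aY(d))}{\log a} \;=\; \frac{\log a + \log Y(d)}{\log a}\cdot\frac{m(aY(d))}{\log(aY(d))} \;\longrightarrow\; 1,
\]
where the first factor tends to $1$ trivially and the second by assumption 2. Thus $\frac{1}{\log a}\,m(aY(d))\,\one[Y(d)>0] \to \one[Y(d)>0]$ $P$-almost surely, and it remains to justify passing the limit under the expectation.

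For the dominating function I would reuse the device from the proof of \cref{prop: can get any value for ATE}: choose $\bar y$ with $m(y)<2\log y$ for $y\ge\bar y$; then monotonicity gives $m(0)\le m(y)\le \eta + 2\,\one[y>\bar y]\log y$ with $\eta=|m(\bar y)|$. Substituting $y=aY(d)$, restricting to $a\ge 2$, and using $\log(aY(d)) = \log a + \log Y(d)$, one obtains on $\{Y(d)>0\}$ the bound $|m(aY(d))| \le |m(0)| + \eta + 2\log a + 2|\log Y(d)|$, so that
\[
\frac{|m(aY(d))|}{\log a}\,\one[Y(d)>0] \;\le\; \frac{|m(0)|+\eta}{\log 2}+2+\frac{2}{\log 2}\,|\log Y(d)|\,\one[Y(d)>0],
\]
which is $P$-integrable by assumption 3. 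Dominated convergence then yields $\frac{1}{\log a}\,\E_P[m(aY(d))\,\one[Y(d)>0]]\to\P(Y(d)>0)$, hence $\frac{1}{\log a}\,\E_P[m(aY(d))]\to\P(Y(d)>0)$; subtracting the $d=1$ and $d=0$ statements gives the proposition.

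The main obstacle is the dominating-function step: one has to make the bound on $|m(aY(d))|/\log a$ uniform in $a$, which works precisely because the extensive-margin scaling contributes the extra factor $\log a$ in the denominator, and one must handle the region $aY(d)\le\bar y$ with care (there $\log(aY(d))$ can be nonpositive and the ratio $m(aY(d))/\log(aY(d))$ is not near $1$) — the monotonicity bound circumvents this by dominating $m$ by a constant on that region. The rest is routine bookkeeping.
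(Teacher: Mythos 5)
Your proposal is correct and follows essentially the same route as the paper's proof: reduce to showing $\E_P[m(aY(d))]/\log a \to \P(Y(d)>0)$ for each $d$, establish the pointwise limit via the factorization $\frac{m(ay)}{\log(ay)}\cdot\frac{\log a + \log y}{\log a}$, and dominate using the bound $m(0)\le m(y)\le \eta + 2\,\one[y>\bar y]\log y$ inherited from the proof of the earlier proposition. The only cosmetic difference is that you split off the $\{Y(d)=0\}$ event as an $O(1)$ term rather than folding it into the pointwise convergence $m(0)/\log a \to 0$, which changes nothing of substance.
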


\begin{proof}
Fix a sequence $a_n \to \infty$, and without loss of generality, assume $a_n > e$.  We
will show that
\begin{equation}
\frac{1}{\log a_n}\E_P[m(a_nY(1)) - m(a_nY(0))] \to  P(Y(1) = 0) - P(Y
(0) = 0). \label{eqn: desired ratio convergence}    
\end{equation}

\noindent Define $f_n(y) = m(a_n y) / \log(a_n)$. Note that $f_n(y) \to \one[y>0]$ pointwise, since $f_n(0) = m(0) / \log(a_n) \to 0$, while for $y>0$, 

$$f_n(y) = \frac{ m(a_ny) }{ \log(a_n) } = \frac{m(a_n y)}{ \log(a_n y) } \frac{ \log(a_n) + \log(y) }{ \log(a_n) } \to 1 ,$$

\noindent where we use the fact that $m(y)/\log(y) \to 1$ as $y\to \infty$ by assumption.
We apply the dominated convergence theorem to show that $\E_P [f_n(Y(d))] \to \P(Y
(d) >
0)$.

We showed in the proof to \Cref{prop: can get any value for ATE} that $$|m(y)| \leq \kappa
+ 2 \cdot \one[y>0] \cdot |\log(y)|$$
\noindent where $\kappa$ is a constant not depending on $y$.\footnote{In particular, 
\eqref{eqn: m bound} implies the inequality for $\kappa = \eta + |m(0)|$.} It follows that
$f_n$ is similarly dominated:
$$|f_n(y)| = \frac{|m(a_n y)| }{ \log(a_n) } \leq \kappa + 2 \cdot \one[y>0] \cdot (1 +
|\log(y)|).$$
\noindent Further, since $E_P[|\log(Y(d)) |\mid Y(d) >0]$ is finite by assumption, the
upper bound is integrable for $y= Y(d)$ for $d=0,1$. It follows from the dominated
convergence theorem that $$E_P[ f_n(Y(d)) ] = E_P\left[ \frac{m(a_nY(d))}{ \log(a_n) }
\right] \to E_P[ \one[Y(d)>0] ] = P(Y(d)>0).$$
\noindent Equation \eqref{eqn: desired ratio convergence} then follows from applying this result for
$d=0,1$ and taking the difference of the limits.
\end{proof}

\subsection{Extension to continuous treatments\label{subsec: continuous treatment}}

Although we focus on binary treatment in the main text for simplicity, similar issues arise with
continuously distributed $D$. Suppose now that $D$ can take a continuum of values on some
set $\mathcal{D}\subseteq \reals$. Let $Y(d)$ denote the potential outcome at the dose
$d$, and $P$ the distribution of $Y(\cdot)$. Consider the parameter
$$\theta(a) = \int_{\mathcal{D}} \omega(d) E_P[ m(a Y(d) ) ] ,$$

\noindent which is a weighted sum of the average values of $m(aY(d))$ across different
values of $d$ with weights $\omega(d)$. For example, in an RCT with a continuous
treatment, a regression of $m(a Y)$ on $D$ yields a parameter of the form $\theta(a)$
where, by the Frisch--Waugh--Lovell theorem, the weights are proportional to $(d - E[D])
p(d)$ and integrate to 0.\footnote{Here, $p(d)$ denotes the density of $D$ at $d$ over the
randomization distribution.}

We now show that $\theta(a)$ can be made to have arbitrary magnitude via the choice of $a$
when there is an extensive margin effect. In particular, by an extensive margin effect we
mean that $\int \omega(d) P(Y(d) >0) \neq 0$, i.e. when there is an average effect on the
probability of a zero outcome, using the same weights $\omega(d)$ that are used for
$\theta(a)$. When $\theta(a)$ is the regression of $m(aY)$ on $D$ in an RCT, for example,
$\int \omega(d) P(Y(d) >0) \neq 0$ if the regression of $\one[Y>0]$ on $D$ yields a
non-zero coefficient.

\begin{prop}
Suppose that: 

\begin{enumerate}
    \item The function
    $m$ satisfies parts 1 and 2 of \Cref{prop: can get any value for ATE}.

    \item
    (Extensive margin effect) $\int_{\mathcal{D}} \omega(d) P(Y(d) >0) \neq 0$.
    
    \item
    (Bounded expectations) For all $d$, $E_P[ | \log(Y(d)) | \mid Y(d) > 0] < \infty$.
    
    \item
    (Regularity for weights) The weights $\omega(d)$ satisfy $\int_{\mathcal{D}} \omega(d)
    =0$, $\int_{\mathcal{D}} |\omega(d)| < \infty $ and $\int_{\mathcal{D}} |\omega(d)|
    \cdot E_P[ | \log(Y(d)) | \mid Y(d) > 0] < \infty $.
    
\end{enumerate}
Then for every $\theta^* \in (0,\infty)$, there exists $a>0$ such $|\theta(a)|=\theta^*$. In
particular, $\theta(a)$ is continuous and $\theta(a) \to 0$ as $a \to 0$ and $|\theta(a)|
\to \infty$ as $a \to \infty$.

\end{prop}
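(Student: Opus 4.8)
The plan is to reprise the two binary-treatment arguments --- the proof of \Cref{prop: can get any value for ATE} and the proof of \Cref{prop: log a rate} --- but with an extra layer of integration over the dose $d$. First I would note that $\theta(0) = \int_{\mathcal{D}} \omega(d)\, \E_P[m(0)] = m(0) \int_{\mathcal{D}} \omega(d) = 0$, using the normalization $\int_{\mathcal{D}} \omega(d) = 0$. Given this, the proposition follows once we establish that (i) $\theta(a)$ is continuous on $[0,\infty)$ and (ii) $|\theta(a)| \to \infty$ as $a \to \infty$: the intermediate value theorem applied to the continuous map $a \mapsto |\theta(a)|$, which equals $0$ at $a=0$ and is unbounded, then yields an $a$ with $|\theta(a)| = \theta^*$ for any prescribed $\theta^* \in (0,\infty)$, and continuity at $0$ gives $\theta(a) \to 0$ as $a \to 0$.

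For continuity, fix $a \in [0,\infty)$ and $a_n \to a$, and assume without loss of generality $a_n < a+1$. For each fixed $d$, the inner expectation $\E_P[m(a_n Y(d))]$ converges to $\E_P[m(a Y(d))]$ by exactly the dominated-convergence argument used for \Cref{prop: can get any value for ATE}: $m$ is continuous, and the bound derived there gives $|m(a_n y)| \leq |m(0)| + \eta + 2 \cdot \one[y>0]\,(|\log(a+1)| + |\log(y)|)$ for all $n$, where $\eta$ depends only on $m$. To pass the limit through $\int_{\mathcal{D}} \omega(d)\,(\cdot)$, I would use the \emph{same} bound to dominate the inner expectation uniformly in $n$ by a function of $d$, namely $|\E_P[m(a_n Y(d))]| \leq |m(0)| + \eta + 2|\log(a+1)| + 2\,\E_P[|\log(Y(d))| \mid Y(d) > 0]$, which is integrable against $|\omega(d)|$ by the regularity conditions $\int_{\mathcal{D}} |\omega(d)| < \infty$ and $\int_{\mathcal{D}} |\omega(d)|\, \E_P[|\log(Y(d))| \mid Y(d)>0] < \infty$. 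A second application of dominated convergence then gives $\theta(a_n) \to \theta(a)$.

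For divergence as $a \to \infty$, I would mirror the proof of \Cref{prop: log a rate}: fix $a_n \to \infty$ with $a_n > e$, set $f_{n,d}(y) = m(a_n y)/\log(a_n)$, and note $f_{n,d}(y) \to \one[y>0]$ pointwise, since $f_{n,d}(0) \to 0$ and, for $y>0$, $f_{n,d}(y) = \frac{m(a_n y)}{\log(a_n y)} \cdot \frac{\log(a_n) + \log(y)}{\log(a_n)} \to 1$. The inner dominated-convergence step from that proof gives $\E_P[f_{n,d}(Y(d))] \to P(Y(d)>0)$, and the uniform-in-$n$ bound $|\E_P[f_{n,d}(Y(d))]| \leq \kappa + 2\big(1 + \E_P[|\log(Y(d))| \mid Y(d)>0]\big)$ is again $|\omega|$-integrable, so a further dominated-convergence step gives $\theta(a_n)/\log(a_n) = \int_{\mathcal{D}} \omega(d)\, \E_P[f_{n,d}(Y(d))] \to \int_{\mathcal{D}} \omega(d)\, P(Y(d)>0)$. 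By the extensive-margin hypothesis this limit is nonzero, so $|\theta(a_n)| \to \infty$, which together with continuity completes the argument via the intermediate value theorem.

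The one point requiring care beyond the binary-treatment proofs is the outer limit interchange over $d$: in both steps we need a single $|\omega|$-integrable function of $d$ dominating $\E_P[m(aY(d))]$ (respectively $\E_P[f_{n,d}(Y(d))]$) uniformly, and this is precisely what the last clause of the regularity condition on $\omega$ --- that $\int_{\mathcal{D}} |\omega(d)|\, \E_P[|\log(Y(d))| \mid Y(d)>0] < \infty$ --- provides. Everything else is a routine transcription of the arguments already given for \Cref{prop: can get any value for ATE} and \Cref{prop: log a rate}.
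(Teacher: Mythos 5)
Your proposal is correct and follows essentially the same route as the paper's proof: establish $\theta(0)=0$ from $\int\omega(d)=0$, prove continuity and the limit $\theta(a)/\log(a)\to\int_{\mathcal{D}}\omega(d)P(Y(d)>0)$ via a double application of dominated convergence (inner over $Y(d)$ reusing the bounds from the binary-treatment propositions, outer over $d$ using the $|\omega|$-integrability conditions), and conclude with the intermediate value theorem. No gaps.
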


\begin{proof}
Note that $\theta(0) = \int \omega(d) m(0) = 0$. It thus suffices to show that $\theta(a)$
is continuous for $a \in [0,\infty)$ and that $|\theta(a)| \to \infty$ as $a \to \infty$.
The result then follows from the intermediate value theorem.

We first show continuity. Fix $a \in [0,\infty)$ and a sequence $a_n \to a$. Let $f_n(d) =
\omega(d) E_P[ m(a_n Y(d)) ]$. We showed in the proof to \Cref{prop: can get any value for
ATE} that $E_P[ m(a_n Y(d)) ] \to E_P[ m(a Y(d)) ]$, and thus $f_n(d) \to \omega(d) E_P[ m(a
Y(d)) ]$ pointwise. We also showed in the proof to \Cref{prop: can get any value for ATE}
that for $a_n$ sufficiently close to $a$, $$|m(a_n Y)| \leq \kappa + 2 \cdot \one[y>0]
\cdot |\log(y)| ,$$
\noindent for a constant $\kappa$ not depending on $n$. It follows that 
$$|f_n(d)| \leq |\omega(d)| \cdot |\kappa| + 2|\omega(d)| \cdot E_P[|\log(Y(d))| \mid
Y(d)>0] , $$

\noindent and the upper bound is integrable by part 4 of the Proposition. Hence, by the
dominated convergence theorem, we have that $\theta(a_n) = \int_{\mathcal{D}} f_n(d) \to
\int_{\mathcal{D}} \omega(d) E_P[m(aY(d)] = \theta(a)$, as needed.

To show that $|\theta(a)| \to \infty$ as $a \to \infty$, we will show that $$
\frac{\theta(a)}{\log(a)}\to \int_{\mathcal{D}} \omega(d) P[Y(d) > 0] $$
\noindent as $a\to \infty$. Consider $a_n \to \infty$, and suppose without loss of
generality that $a_n>e$. Observe that $$\frac{\theta(a_n)}{\log(a_n)} = \int_{\mathcal{D}}
\omega(d) \dfrac{ E_P[m(a_n Y(d))] }{ \log(a_n) } .$$

\noindent We showed in the proof to \Cref{prop: log a rate} that for each $d$,

$$\dfrac{ E_P[m(a_n Y(d))] }{ \log(a_n) }  \to P(Y(d)>0) .$$

\noindent Letting $f_n(d) = \omega(d) \dfrac{ E_P[m(a_n Y(d))] }{ \log(a_n) }$, we thus have
that $f_n(d) \to \omega(d) P(Y(d)>0)$ pointwise. Moreover, we showed in the proof to
\Cref{prop: can get any value for ATE} that $$|m(y)| \leq \kappa + 2 \cdot \one[y>0] \cdot
|\log(y)|$$

\noindent where $\kappa$ is a constant not depending on $y$. It follows that 

$$\frac{|m(a_n y)| }{ \log(a_n) } \leq \kappa + 2 \cdot \one[y>0] \cdot (1 + |\log(y)|)$$

\noindent and thus that $$|f_n(d)| \leq |\omega(d)| \cdot (\kappa + 2+2 E_P[ |\log(Y(d)|
\mid Y(d) >0 ]) $$ where the upper bound is integrable by the fourth part of the
proposition. The result then follows from dominated convergence.

\end{proof}

\subsection{Extension to OLS estimands and standard errors \label{subsec: ols estimands}}

As noted in \Cref{rmk: ols estimands}, our results imply that any consistent estimator of
the ATE for an outcome of the form $m(aY)$ will be (asymptotically) sensitive to scaling
when there is an extensive margin effect. Our results thus cover the OLS estimator when it is consistent for the ATE for some (sub)-population $P$ (e.g. in an RCT or under unconfoundedness). Given the prominence of OLS in
applied work---and the fact that it is sometimes used for non-causal analyses---we now
provide a direct result on the sensitivity to scaling of the estimand of an OLS regression of an outcome
of the form $m(aY)$ on an arbitrary random variable $X$.

Specifically, suppose that $(X,Y) \sim Q$, for $Y \in [0,\infty)$ and $X \in \reals^J$, where the first element of $X$ is a constant. Consider the OLS estimand
$$\beta(a) = E_Q[XX']^{-1} E_Q[X m(aY)],$$
\noindent i.e. the population coefficient from a regression of $m(aY)$ on $X$. We assume that $E_Q[XX']$ is full-rank so that $\beta(a)$ is well-defined. Letting $\beta_j(a) = e_j'\beta(a)$ be the $j$\th{} element of $\beta(a)$, we will show
that $\beta_j(a)$ can be made to have arbitrary magnitude via the choice of $a$ if $\gamma_j \neq 0$, where 
$$\gamma = E_Q[XX']^{-1} E_Q[X \one[Y>0]]$$
\noindent is the coefficient from a regression of $\one[Y>0]$ on $X$.

\begin{prop} \label{prop: ols sensitivity}
Suppose that

\begin{enumerate}
    \item The function
    $m$ satisfies parts 1 and 2 of \Cref{prop: can get any value for ATE}.

    \item
    (Finite expectations) $E_Q[ \norm{X} ] <\infty $ and $E_Q[ \norm{X \log
    (Y)} \mid Y>0 ] < \infty$ .

\end{enumerate}

\noindent Then for every $j \in \{2,...,J\}$, $\beta_j(a)/\log(a) \to \gamma_j$ as $a \to \infty$. Moreover, if $\gamma_j \neq 0$ for some $j \in \{2,...,J\}$, then for every $\beta_j^* \in (0,\infty)$, there exists $a>0$ such that
$|\beta_j(a)| = \beta_j^*$. In particular $\beta_j(a)$ is continuous with $\beta_j(a) \to 0$
as $a \to 0$ and $|\beta_j(a)| \to \infty$ as $a \to \infty$.
\end{prop}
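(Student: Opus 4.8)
The plan is to mimic the proofs of \Cref{prop: can get any value for ATE} and \Cref{prop: log a rate}, now carrying the extra factor $X$ through the expectations and invoking the vector-valued dominated convergence theorem coordinate by coordinate. Two facts from those proofs will be reused: (i) the domination $|m(y)| \le \kappa + 2\cdot\one[y>0]\cdot|\log(y)|$ for a constant $\kappa$ independent of $y$; and (ii) for any $a_n \to \infty$ with $a_n > e$, the pointwise convergence $m(a_n y)/\log(a_n) \to \one[y>0]$. The moment assumptions $E_Q[\|X\|]<\infty$ and $E_Q[\|X\log Y\|\mid Y>0]<\infty$ are exactly what is needed to turn these into integrable envelopes for $X\,m(aY)$ and $X\,m(aY)/\log(a)$.

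First I would establish continuity of $a\mapsto\beta(a)$ on $[0,\infty)$. Fix $a$ and $a_n\to a$, WLOG $a_n < a+1$. Continuity of $m$ gives $X\,m(a_nY)\to X\,m(aY)$ pointwise, and fact (i) yields $\|X\,m(a_nY)\| \le \|X\|\bigl(\kappa + 2\one[Y>0](|\log(a+1)|+|\log Y|)\bigr)$, which is $Q$-integrable; dominated convergence then gives $E_Q[X\,m(a_nY)]\to E_Q[X\,m(aY)]$, and since $E_Q[XX']^{-1}$ is fixed, $\beta(a_n)\to\beta(a)$. Next, because the first coordinate of $X$ is the constant $1$, the population regression of the constant function $1$ on $X$ returns $e_1$ (i.e.\ $E_Q[XX']^{-1}E_Q[X]=e_1$), so $\beta(0) = m(0)\,E_Q[XX']^{-1}E_Q[X] = m(0)\,e_1$; hence $\beta_j(0)=0$ for $j\ge 2$, and by continuity $\beta_j(a)\to 0$ as $a\to 0$. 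Finally, for the $\log(a)$ scaling, take $a_n\to\infty$ with $a_n>e$: by fact (ii), $X\,m(a_nY)/\log(a_n)\to X\,\one[Y>0]$ pointwise, with envelope $\|X\|(\kappa + 2\one[Y>0](1+|\log Y|))$, again integrable; dominated convergence gives $E_Q[X\,m(a_nY)/\log(a_n)]\to E_Q[X\,\one[Y>0]]$, so $\beta(a)/\log(a)\to E_Q[XX']^{-1}E_Q[X\,\one[Y>0]] = \gamma$, and reading off the $j$th coordinate, $\beta_j(a)/\log(a)\to\gamma_j$.

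Putting these together: if $\gamma_j\neq 0$ then $|\beta_j(a)| = |\log(a)|\cdot|\beta_j(a)/\log(a)|\to\infty$, while $\beta_j$ is continuous with $\beta_j(0)=0$; applying the intermediate value theorem to $a\mapsto|\beta_j(a)|$ produces, for each $\beta_j^*\in(0,\infty)$, an $a>0$ with $|\beta_j(a)|=\beta_j^*$. I do not expect a genuine obstacle here — the argument is essentially a transcription of the scalar proofs — the only points needing care are the bookkeeping that the $\|X\|$-weighted envelopes are integrable under precisely the stated moment conditions, and the small linear-algebra observation that an intercept in $X$ forces the OLS coefficient of a constant regressand to be $e_1$, which pins down $\beta(0)$.
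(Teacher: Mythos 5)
Your proposal is correct and follows essentially the same route as the paper's proof: the same integrable envelopes derived from $|m(y)| \le \kappa + 2\cdot\one[y>0]\cdot|\log(y)|$, the same two dominated-convergence arguments (one for continuity, one for the $\log(a)$ scaling limit), the observation that $\beta(0)=m(0)e_1$ because of the intercept, and the intermediate value theorem. The only cosmetic difference is that the paper works with the scalar $\nu'x$ for $\nu'=e_j'E_Q[XX']^{-1}$ rather than carrying the vector $X$ with norm bounds, which changes nothing of substance.
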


We note that \Cref{prop: ols sensitivity} implies that the OLS estimator for the  $j$\th{}
coefficient, $\hat\beta_j(a)$, will be arbitrarily sensitive to the choice of $a$ when the
corresponding extensive margin OLS estimator $\hat\gamma_j$, is non-zero. This follows
immediately from setting $Q$ to be the empirical distribution of $(Y_i,X_i)_{i=1}^N$ and
applying \Cref{prop: ols sensitivity} (note that part 2 of the Proposition is trivially
satisfied for the empirical distribution, since $X$ and $Y$ are both bounded over the
empirical distribution).

\paragraph{OLS Standard Errors.} We also show that as $a \to \infty$, the $t$-statistic for the OLS estimate $\hat\beta_j$ constructed using heteroskedasticity-robust standard errors converges to the $t$-statistic for $\hat\gamma_j$ (again using heteroskedasticity-robust standard errors). Formally, let 
$$\hat\Omega_\beta(a) = \left(\frac{1}{N} \sum_i X_i X_i'\right)^{-1} \left(\frac{1}{N} \sum_i X_i X_i' \hat{\epsilon}_i(a)^2\right)  \left(\frac{1}{N} \sum_i X_i X_i'\right)^{-1}$$
\noindent denote the estimator of the heteroskedasticity-robust variance matrix for $\hat\beta(a)$, where $\hat\epsilon_i(a) = m(aY_i) - X_i' \hat\beta(a),$ and $\hat\beta(a)$ is the OLS estimate of $\beta(a)$. The $t$-statistic for $\hat\beta_j(a)$ is then $\hat{t}_{\beta_j}(a) = \hat\beta_j(a) / \hat\sigma_{\beta_j}(a)$, where $\hat\sigma_{\beta_j}(a) = \sqrt{e_j' \hat\Omega_\beta(a) e_j} / \sqrt{N}$. Analogously, let 
$$\hat\Omega_\gamma = \left(\frac{1}{N} \sum_i X_i X_i'\right)^{-1} \left(\frac{1}{N} \sum_i X_i X_i' \hat{u}_i^2\right)  \left(\frac{1}{N} \sum_i X_i X_i'\right)^{-1}$$
\noindent be the heteroskedasticity-robust variance estimator for $\hat\gamma$, the OLS estimate of $\gamma$, where $u_i = \one[Y_i>0] - X_i'\hat\gamma$. The $t$-statistic for $\hat\gamma_j$ is then $\hat{t}_{\gamma_j} = \hat\gamma_j / \hat\sigma_{\gamma_j}$, where $\hat\sigma_{\gamma_j} = \sqrt{e_j' \hat\Omega_\gamma e_j} / \sqrt{N}$. 
\begin{prop}

\label{prop: OLS SEs}
Suppose that $\left(\frac{1}{N} \sum_i X_i X_i'\right)$ is full-rank and that $\hat\sigma_{\gamma_j} > 0$. If the function $m$ satisfies parts 1 and 2 of \Cref{prop: can get any value for ATE} and $\hat\gamma_j > 0$, then $\hat{t}_{\beta_j}(a) \to \hat{t}_{\gamma_j}$ as $a \to \infty$.    
\end{prop}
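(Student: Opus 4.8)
The plan is to carry everything out at the fixed empirical distribution $Q = \frac1N\sum_i \delta_{(Y_i,X_i)}$ and to show that the numerator $\hat\beta_j(a)$ and the standard error $\hat\sigma_{\beta_j}(a)$ of the $t$-statistic both blow up at the rate $\log(a)$, with the rescaled quantities converging to $\hat\gamma_j$ and $\hat\sigma_{\gamma_j}$, respectively; the claim then follows from the algebra of limits, $\hat t_{\beta_j}(a) = \bigl(\hat\beta_j(a)/\log(a)\bigr)/\bigl(\hat\sigma_{\beta_j}(a)/\log(a)\bigr) \to \hat\gamma_j/\hat\sigma_{\gamma_j} = \hat t_{\gamma_j}$, which is legitimate because the denominator limit $\hat\sigma_{\gamma_j}$ is strictly positive by hypothesis. (Throughout, take $a$ large enough that $\log(a) > 0$.) The numerator rate is exactly the content of \Cref{prop: ols sensitivity} applied to $Q$: its finite-moment hypothesis holds trivially since $X$ and $Y$ are bounded over the sample, and the argument in fact delivers the full-vector statement $\hat\beta(a)/\log(a) \to \hat\gamma$, because $\left(\frac1N\sum_i X_iX_i'\right)^{-1}$ does not depend on $a$ and, for every fixed $i$, $m(aY_i)/\log(a) \to \one[Y_i>0]$ (each $Y_i$ being a finite number), so each entry of $\frac1N\sum_i X_i m(aY_i)/\log(a)$ converges to the corresponding entry of $\frac1N\sum_i X_i \one[Y_i>0]$.

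It then remains to handle the standard error, which is the substantive step. Write $f_a(y) = m(ay)/\log(a)$; from the proof of \Cref{prop: log a rate}, $f_a(y) \to \one[y>0]$ for every fixed $y \geq 0$. Combining this with $\hat\beta(a)/\log(a) \to \hat\gamma$, the rescaled OLS residuals converge to the extensive-margin residuals for each $i$:
\[
\frac{\hat\epsilon_i(a)}{\log(a)} = f_a(Y_i) - X_i'\,\frac{\hat\beta(a)}{\log(a)} \;\longrightarrow\; \one[Y_i>0] - X_i'\hat\gamma = \hat u_i,
\]
hence $\hat\epsilon_i(a)^2/\log(a)^2 \to \hat u_i^2$ for each $i$. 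Since the sum over $i$ is finite, the ``meat'' matrix satisfies $\frac1N\sum_i X_iX_i'\,\hat\epsilon_i(a)^2 \big/ \log(a)^2 \to \frac1N\sum_i X_iX_i'\,\hat u_i^2$, and pre- and post-multiplying by the $a$-independent ``bread'' matrix gives $\hat\Omega_\beta(a)/\log(a)^2 \to \hat\Omega_\gamma$ entrywise. In particular $e_j'\hat\Omega_\beta(a)e_j/\log(a)^2 \to e_j'\hat\Omega_\gamma e_j = N\hat\sigma_{\gamma_j}^2 > 0$, so continuity of the square root yields $\hat\sigma_{\beta_j}(a)/\log(a) = \sqrt{e_j'\hat\Omega_\beta(a)e_j}\big/\bigl(\sqrt N\,\log(a)\bigr) \to \sqrt{e_j'\hat\Omega_\gamma e_j}/\sqrt N = \hat\sigma_{\gamma_j}$, as required.

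There is no serious obstacle here: once \Cref{prop: ols sensitivity} supplies the numerator rate, the rest is a continuous-mapping / algebra-of-limits argument with no measure-theoretic content, since at the fixed empirical distribution every object is a finite sum of deterministic terms and the only input is the pointwise limit $f_a(y) \to \one[y>0]$ (which is exactly what parts 1--2 of \Cref{prop: can get any value for ATE} provide). The single point worth a careful sentence is that \Cref{prop: ols sensitivity} is phrased coordinatewise for non-intercept coefficients, whereas forming $\hat\epsilon_i(a)$ requires convergence of $\hat\beta(a)/\log(a)$ as a full vector; as noted above this is immediate at the empirical distribution. The hypothesis $\hat\gamma_j > 0$ is not actually needed for the convergence of the $t$-statistic; it is imposed only to match the setting of \Cref{prop: ols sensitivity} and to guarantee that the limit $\hat t_{\gamma_j}$ is nonzero and correctly signed.
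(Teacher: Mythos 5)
Your proposal is correct and follows essentially the same route as the paper's proof: apply \Cref{prop: ols sensitivity} at the empirical distribution to get $\hat\beta(a)/\log(a) \to \hat\gamma$, deduce $\hat\epsilon_i(a)/\log(a) \to \hat u_i$ for each $i$, conclude $\hat\Omega_\beta(a)/\log(a)^2 \to \hat\Omega_\gamma$ by continuity, and finish with the algebra of limits using $\hat\sigma_{\gamma_j}>0$. Your explicit remark that the full-vector convergence of $\hat\beta(a)/\log(a)$ (including the intercept) is immediate at the empirical distribution is a small point the paper glosses over, and your observation that $\hat\gamma_j>0$ is not actually needed is also accurate, but neither changes the substance of the argument.
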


It follows that when the units of $Y$ are made large, the $t$-statistic for a treatment effect estimate for $m(Y)$ estimated using OLS will converge to the $t$-statistic for the OLS estimate of the extensive margin. \Cref{fig:tstats} shows that, indeed, the $t$-statistics for estimates using $\arcsinh(Y)$ in the \emph{AER} tend to be close to the $t$-statistics for the extensive margin, and tend to become even closer after rescaling the units by a factor of 100. 

\begin{proof}[Proof of \Cref{prop: ols sensitivity}]
Fix $j \in \{2,...,J\}$. Note that $\beta(0) = E_Q[XX']^{-1} E[X m(0)]$, is the coefficient from a regression of a
constant outcome $m(0)$ on $X$, and thus $\beta_1(0) = m(0)$ while $\beta_k(0) =0$ for $k
\geq 2$. Thus $\beta_j(0) = 0$. To complete the proof, we will first show that $\beta_j(a_n) = \gamma_j \log(a_n) + o(\log(a_n))$. Hence, if $\gamma_j >0$, then $|\beta_j(a)| \to
\infty$ as $a \to \infty$. We will then establish that $\beta_j(a)$ is continuous for $a \in [0,\infty)$. The fact that one can obtain any positive value for $|\beta_j(a)|$ then follows from the intermediate value theorem.

For ease of notation, let $\nu' = e_j'E_Q[XX']^{-1}$, so that $\beta_j(a) = E_Q[ \nu'X
m(aY) ]$. 

We first show that $\beta_j(a_n) = \gamma_j \log(a_n) + o(\log(a_n))$. Consider a sequence $a_n \to
\infty$, and assume without loss of generality that
$a_n > e$. Let $f_n(x,y) = \nu'x \cdot m(a_ny) / \log(a_n)$. Observe that $f_n(x,y) \to
\nu'x \cdot \one[y>0]$ pointwise, since $f_n(x,0) = \nu'x \cdot m(0) / \log(a_n) \to 0$,
while for $y>0$, $$f_n(x,y) = \nu'x \cdot \frac{m(a_ny)}{ \log(a_n) } = \nu' x \cdot
\frac{m(a_ny)}{\log(a_ny)} \frac{\log(a_n) + \log(y)}{ \log(a_n)} \to \nu 'x,$$
\noindent where we use the fact that $m(y)/\log(y) \to 1$ as $y \to \infty$. We showed in
the proof to \Cref{prop: log a rate} that $$ \frac{|m(a_n y)| }{ \log(a_n) }
\leq \kappa + 2 \cdot \one[y>0] \cdot (1 + |\log(y)|),$$
\noindent which implies that
$$|f_n(x,y)| \leq |\nu'x \cdot (\kappa + 2 \cdot \one[y>0] \cdot (1 + |\log(y)|))| =: \bar{f}(x,y).$$

\noindent Moreover, part 2 of the proposition implies that $\bar{f}(X,Y)$ is integrable. From the dominated convergence theorem, it follows that 

$$\frac{\beta_j(a_n)}{ \log(a_n) } = E_Q[f_n(X,Y)] \to E_Q[ \nu'X \one[Y>0] ] = \gamma_j.$$

\noindent Hence, we see that $\beta_j(a_n) = \gamma_j \log(a_n) + o(\log(a_n))$. It follows that $|\beta_j(a_n)| \to \infty$ when $\gamma_j \neq 0$. 

To complete the proof, we show continuity of $\beta_j(a)$. Fix $a \in [0,\infty)$, and
consider a sequence $a_n \to a$. Assume without loss of generality that $a_n < a+1$ for
all $n$. Let $f_n(x,y) = \nu'x \cdot m(a_n y)$. From the continuity of $m$, we have that
$f_n(x,y) \to \nu'x
\cdot m(ay)$ pointwise. We showed in the proof to \Cref{prop: can get any value for ATE} that there exists some $\kappa$ (not depending on $n$) such that 
$$|m(a_ny)| \leq \kappa + 2 \one[y>0] \cdot |\log(y)|.$$
\noindent Hence,
$$|f_n(x,y)| \leq |\nu'x \cdot (\kappa + 2 \one[y>0] |\log(y)| ) | .$$
\noindent Moreover, the bounding function is integrable over the distribution of $(X,Y)$ by part 2 of the proposition. Applying the dominated convergence theorem again, we obtain that

$$\beta_j(a_n) = E_Q[ f_n(X,Y) ] \to E_Q[ \nu' X \cdot m(aY) ] = \beta_j(a) , $$
\noindent as needed.
\end{proof}

\begin{proof}[Proof of \Cref{prop: OLS SEs}]
 Consider $a_n \to \infty$. Applying \Cref{prop: ols sensitivity} to the empirical
 distribution, we have that $\hat\beta(a_n) / \log(a_n)  = \hat\gamma +
 o(1)$. It follows
 that $$\frac{1}{\log(a_n)} \hat\epsilon_i(a_n) = \frac{m(a_nY_i)}{\log(a_n)} -
 \frac{\hat\beta(a_n)' X_i}{ \log(a_n) } = \one[Y_i > 0] - \hat\gamma' X_i + o(1)= \hat{u}_i +
 o(1).$$
 \noindent Since $\hat\Omega_n(a_n)$ is a continuous function of the $\hat{\epsilon}_i(a_n)^2$, we obtain that $\log(a_n)^{-2}
 \hat\Omega_\beta(a_n) \to \hat\Omega_\gamma$, and thus that $\log(a_n)^{-1}
 \hat{\sigma}_{\beta_j}(a_n) = \hat{\sigma}_{\gamma_j} + o(1)$. It follows that

 $$ \hat{t}_{\beta_j}(a_n) = \frac{\hat\beta_j(a_n) / \log(a_n)}{ \hat\sigma_{\beta_j}(a_n)/ \log(a_n) } = \frac{ \hat\gamma_j + o(1) }{ \hat\sigma_{\gamma_j} + o(1) } \to \frac{ \hat\gamma_j }{ \hat\sigma_{\gamma_j} } = \hat{t}_{\gamma_j} ,$$

 \noindent as needed.
\end{proof}

\section{Connection to structural equations models \label{sec: structural equations}}

Previous work has considered a variety of estimators for settings with zero-valued
outcomes beginning with structural equations models rather than the potential outcomes
model that we consider. These papers have reached different results, with some concluding
that regressions with $\arcsinh(Y)$ have the interpretation of an elasticity, and others
showing that they are inconsistent and advocating for other methods (e.g. Poisson
regression) instead. In this section, we interpret the results in those papers from the
perspective of the potential outcomes model, and show that these diverging conclusions
stem from different implicit assumptions about the potential outcomes, as well as a focus
on different causal parameters.

Before discussing specific papers, we first note that, broadly speaking, structural
equation models can be viewed as constraining the joint distribution of potential
outcomes. Observe that, for any pair of potential outcomes $(Y(1), Y(0))$, we can
represent them as $(Y(1, U), Y(0,U))$ for some function $Y(d, u)$ and individual-level
unobservable (or ``structural error'') $U$. The potential outcomes framework we work with
in this paper does not impose any functional form assumptions on $Y(d, u)$. Structural
equation models, on the other hand, tend to specify explicit functional forms for
$Y(d,u)$. In what follows, we consider the implicit restrictions placed on the potential
outcomes as well as the target estimand in work related work that starts with a structural
equations model.

\subsection{\citet{bellemare_elasticities_2020} and \citet{thakral2023estimates} \label{subsec:bellemare}}
\citet{bellemare_elasticities_2020} consider OLS regressions of the form\footnote{They
also consider specifications with additional covariates on the right-hand side, although
we abstract away from this for expositional simplicity.}
\begin{equation}
\arcsinh(Y) = \beta_0 + D \beta_1 + U. \label{eqn: arcsinh reg}
\end{equation}
\noindent Note that when $D$ is binary and randomly assigned, $D \indep (Y(1),Y(0))$, then
from the perspective of the potential outcomes model, the population coefficient $\beta_1$
is the ATE for $\arcsinh(Y)$. \citet{bellemare_elasticities_2020} instead consider the
interpretation of $\beta_1$ when \eqref{eqn: arcsinh reg} is treated as structural, i.e. if there are constant treatment effects of $D$ on $\arcsinh(Y)$. From
the perspective of the potential outcomes model, this amounts to imposing that the
potential outcomes $Y(d) := Y(d,U)$ take the form
\begin{equation}
\arcsinh( Y(d,U) ) = \beta_0 + d \beta_1 + U, \label{eqn: bellemare PO version}
\end{equation}

\noindent where the individual-level random variable $U$ takes the same value for all
values of $d$. Under \eqref{eqn: bellemare PO version}, we have that $$ \beta_1 = \arcsinh
( Y(1,U) ) - \arcsinh( Y(0,U) ) .$$

\noindent Since $\arcsinh(y) \approx \log(2y)$ for $y$ large, it follows that $\beta_1
\approx \log( Y(1,U) / Y(0,U) )$ when $Y(1,U)$ and $Y(0,U)$ are large. Thus,
\citet{bellemare_elasticities_2020} argue that $\beta_1$ approximates the semi-elasticity
of the outcome with respect to $d$ when the outcome is large. They likewise provide
similar results for the elasticity of $Y(d,U)$ with respect to treatment when treatment is
continuous. Their results thus imply that the ATE for $\arcsinh(Y)$ has a sensible
interpretation as a (semi-)elasticity when the structural equation for the potential outcomes given in
\eqref{eqn: bellemare PO version} holds.

It is worth emphasizing, however, that \eqref{eqn: bellemare PO version} will generally be
incompatible with the data when both $Y(1)$ and $Y(0)$ have point-mass at zero, and
$\beta_1 \neq 0$. Specifically, note that \eqref{eqn: bellemare PO version} implies that for all values of $U$,
$$\arcsinh(Y(1,U)) - \arcsinh(Y(0,U)) = \beta_1.$$

\noindent If $\beta_1 > 0$, for example, this implies that $\arcsinh(Y(1,U)) >
\arcsinh(Y(0,U))$, and hence $Y(1,U) > Y(0,U)$, since the $\arcsinh(y)$ function is
strictly increasing for $y \geq 0$. However, since $Y(0,U)\geq 0$ by assumption, this implies that $Y(1,U)>0$ with probability 1. Thus, the model in \eqref{eqn: bellemare PO
version} is incompatible with $P(Y(1)=0) >0$ if $\beta_1 > 0$. By similar logic, the model
is also incompatible with $P(Y(0) =0) >0$ if $\beta_1 < 0$. In settings where there is
point-mass at zero, the model that \citet{bellemare_elasticities_2020} show gives
$\beta_1$ an interpretation as a semi-elasticity will therefore typically be rejected by
the data. It is also worth noting that even if there are no zeros in the data, the model
in \eqref{eqn: bellemare PO version} will generally be sensitive to units, in
the sense that if \eqref{eqn: bellemare PO version} holds for $Y$ measured in dollars, it
will generally not hold when $Y$ is measured in cents. The validity of the interpretation
of $\beta_1$ as an elasticity thus depends on having chosen the ``correct'' scaling of the
outcome such that \eqref{eqn: bellemare PO version} holds.

Similar issues apply if we consider alternative transformations on the left-hand side of \eqref{eqn: arcsinh reg}. For example, \citet{thakral2023estimates} consider versions of \eqref{eqn: arcsinh reg} that replaces $\arcsinh(Y)$ with the power function $Y^k$. They then consider the implied ``semi-elasticities'' of the form $\eta(y_0) = \beta_1  /(k y_0^{k})$. The parameter $\eta(y_0)$ has the interpretation as a structural semi-elasticity when $d$ has a contant effect on $Y^k$. Specifically, if $D$ is continuous and the structural equation
\begin{equation}
    Y(d,U)^k = \beta_0 + d \beta_1 +U, \label{eqn: tt structural}
\end{equation}
\noindent holds, then $\eta(y_0) = \left( \frac{\partial}{\partial d} Y(d,U) \right) / Y(d,U)$ evaluated at $Y(d,U) = y_0$, so $\eta(y_0)$ corresponds to the semi-elasticity of $Y(d,U)$ with respect to $d$. However, as with \eqref{eqn: bellemare PO version}, \eqref{eqn: tt structural} is generally incompatible with settings in which $P(Y(d,U) = 0)$ for multiple values of $d$. For example, if $\beta_1 > 0$, then $Y(0,U) \geq 0$ implies that $Y(1,U) > 0$. Equation \eqref{eqn: tt structural}, which gives a causal interpretation to $\eta(\beta_0)$ as a semi-elasticity, will thus generally be incompatible with settings in which some units have $Y=0$ under multiple treatment statuses.

\subsection{\citet{cohn_count_2022}}
\label{subsec: cohn}
\citet{cohn_count_2022} consider structural equations of the form 
\begin{equation}
Y = \exp(\alpha + D \beta)  U. \label{eqn: cohn reduced form}
\end{equation}

\noindent When $E[ U \mid D] = 1$, they show that Poisson regression is consistent for
$\beta$, whereas regressions of $\log(1+Y)$ or $\log(Y)$ on $D$ may be inconsistent for
$\beta$.\footnote{We thank Kirill Borusyak for an insightful discussion on this topic. Relatedly, in an influential paper, \citet{silva_log_2006} consider the
structural equations model $Y_{i} = \exp(X_i' \beta)U_{i}$ where $E[U_i \mid X_i]=1$, and show
that Poisson regression consistently estimates $\beta$ while a regression using $\log$ on
the left-hand side does not, although they do not provide any formal results on log-like transformations.} Although \citet{cohn_count_2022} do not consider a potential
outcomes interpretation of $\beta$, we can give $\beta$ a causal interpreation if we impose that the potential outcomes take
the form
\begin{equation} 
Y(d, U) = \exp(\alpha + d \beta)  U(d), \label{eqn: cohn PO version}
\end{equation}

\noindent where $E[U(d) ]=1$. Under \eqref{eqn: cohn PO version}, it follows that $\exp(\beta) = E[Y(1)]/E[Y(0)]$, i.e. the parameter $\thetaPoisson$ considered in \Cref{subsec: interpretable
units}.\footnote{\citet{bellego_dealing_2022} also consider \eqref{eqn: cohn reduced
form}, but consider the more general class of identifying restrictions of the form $E[D
\log(U + \delta) ] = 0$, where $\delta$ is a tuning parameter.}

We note, however, that if one were instead to impose \eqref{eqn: cohn reduced form} with
the assumption that $E[ \log(U(d)) \mid D] =0$, then the regression of $\log(Y)$ on $D$ would
be consistent for $\beta$, whereas Poisson regression would generally be inconsistent for
$\beta$. Indeed, under the potential outcomes model in \eqref{eqn: cohn PO version} with
the assumption that $E[\log(U(d))]=0$, we have that $\beta = E[ \log(Y(1)) - \log(Y(0))]$,
the ATE in logs.\footnote{Note that the assumption that $E[\log(U)] =0$ implicitly
implies that $U > 0$, and thus $Y>0$.}

This discussion highlights that whether or not an estimator is consistent depends on the
specification of the \emph{target parameter}. Our results help to illuminate what
parameters can be consistently estimated by enumerating the properties
that identified causal parameters can (or cannot) have.

\subsection{Tobit models}
\label{sub:tobit}

An alternative structural approach is to explicitly model the extensive margin, a classic example of which is the Tobit model \citep{tobin_estimation_1958}. Following the discussion of Tobit models in \citet{AngristPischke(09)}, suppose there exist latent potential outcomes $Y^*(d) = \mu_d + U$, where $U \sim \Norm(0,\sigma^2)$ and $D \indep U$. The observed
potential outcome $Y(d)$ is then the latent potential outcome truncated at zero, $Y(d) = \max(Y^* (d), 0)$. We note that in this model, the
treatment has a constant additive effect of $\mu_1 -
\mu_0$ on the latent outcome, and the latent potential outcomes are assumed to be normally
distributed. 

Thanks to the parametric assumptions, the unknown parameters $\mu_1, \mu_0,
\sigma^2$ are identified and estimable via, e.g., maximum likelihood. As a result, the entire joint distribution of potential outcomes is identified, since this depends only on $(\mu_1,\mu_0,\sigma)$. This implies, in turn, that all of the possible target parameters considered in \cref{sec: recommendations} are point-identified. For example, under this model \[
\E[\log Y(d) \mid Y(1) > 0, Y(0) > 0] = \E\bk{\log \pr{\mu_d + U} \mid U > -\mu_1, U >
-\mu_0},
\]
where the right-hand side can be computed numerically since $U \sim \Norm(0, \sigma^2)$.
Thus, the intensive margin treatment effect in logs, $\theta_{\text{Intensive}}$, is actually point-identified under the Tobit
model.\footnote{Likewise, the intensive margin treatment effect in levels, $E[Y(1)-Y(0) \mid Y(1)>0,Y(0)>0]$ is simply $\mu_1 - \mu_0$.
}

It is worth nothing that unlike some of the models considered above, the Tobit model is consistent
with a nonzero extensive margin. However, the assumptions of normal errors and constant treatment effects on the latent index are restrictive. As discussed in \Cref{sec: recommendations}, imposing these assumptions is not necessary for identification if one is ultimately interested in, say, $\E[Y(1)-Y(0)]/\E
[Y(0)]$, and one can obtain bounds on the intensive margin effect without imposing these assumptions.\footnote{We note that the assumptions of the Tobit model imply (but are strictly stronger than) the assumption of rank preservation of the potential outcomes. However, rank preservation alone suffices to point identify $\E[\log Y(1) - \log Y(0)\mid Y(1) >
0, Y (0) > 0]$.} Moreover, as \citet{AngristPischke(09)} and
\citet{angrist_estimation_2001} point out, it is often not clear what the economic meaning of the latent potential
outcome $Y^*(d)$ is---if $Y(d)$ is earnings, for example, what is the meaning of having negative latent earnings ($Y^*(d)<0$)?

\section{Connection to two-part models\label{sec: 2pms}}
One approach recommended for settings with weakly-positive outcomes is to estimate a
two-part model \citep{mullahy_why_2023}. In this section, we briefly review two-part
models, and show that the marginal effects implied by these models do not correspond with
ATEs for the intensive margin without further restrictions on the potential outcomes.
Thus, while two-part models strike us as a reasonable approach if the goal is to model the
conditional expectation function of observed outcomes $Y$ given treatment $D$ (as in
\citet{mullahy_why_2023}), they will often not be appropriate if instead the goal is to
learn about a causal effect along the intensive margin.\footnote{We are particularly
grateful to John Mullahy for an enlightening discussion of this topic.}

The idea of a two-part model is to separately model the conditional distribution $Y \mid
D$ using (a) a first model for the probability that $Y$ is positive given $D$, $P(Y>0 \mid
D)$ (b) a second model for the conditional expectation of $Y$ given that it is positive,
$E[Y\mid D,Y>0]$. Common specifications include logit or probit for part (a), and a linear
regression of the positive values of $Y$ on $D$ for part b); see, e.g., \citet{belotti_twopm_2015}. After
obtaining estimates of the two-part model, it is common to evaluate the marginal effects
of $D$ on both parts, i.e. the implied values of
\begin{align*}
&\tau_{a} = P(Y>0 \mid D=1) - P(Y>0 \mid D=0) \\
&\tau_{b} = E[Y \mid Y>0,D=1] - E[Y\mid Y>0,D=0].
\end{align*}

We now consider how the parameters of the two-part model relate to causal effects in the
potential outcomes model. Suppose, for simplicity, that the two-part model is
well-specified, so that it correctly models $P(Y>0 \mid D)$ and $E[Y \mid Y>0,D]$. Suppose
further that $D$ is randomly assigned, $D \indep Y(1),Y(0)$. In this case, we have that
\begin{align*}
&\tau_{a} = P(Y(1)>0) - P(Y(0)>0) \\
&\tau_{b} = E[Y(1) \mid Y(1) >0  ] - E[Y(0) \mid Y(0) >0  ].
\end{align*}

\noindent From the previous display, we see that the marginal effect on the first margin,
$\tau_a$, has a causal interpretation: it is the treatment's effect on the probability
that the outcome is positive.

The interpretation of the marginal effect on the second margin, $\tau_b$, is more
complicated, however. For simplicity, suppose are willing to impose the ``monotonicity''
assumption discussed in \Cref{sec: recommendations}, $P(Y(1)=0,Y(0)>0) = 0$, so that
anyone with a zero outcome under treatment also has a zero outcome under control. Then,
letting $\alpha = P(Y(0) = 0 \mid Y(1) > 0 )$, we can write $\tau_{b}$ as
\begin{align*}
\tau_b &= (1-\alpha) E[ Y(1) \mid Y(1)>0,Y(0)>0] \\
&\quad\quad+ \alpha E[Y(1) \mid Y(1)>0, Y(0) = 0] - E[Y(0) \mid Y(1)> 0, Y(0) > 0 ] \\
&= \underbrace{E[Y(1) - Y(0) \mid Y(1)>0,Y(0)>0]}_{\text{Intensive margin effect}} \\ 
& \quad\quad+
\alpha \underbrace{\left( E[ Y(1) \mid Y(1)>0, Y(0) =0 ] - E[Y(1) \mid Y(1)>0, Y(0) >
0]\right)}_{\text{Selection term}},
\end{align*}

\noindent where the first equality uses iterated expectations, and the second re-arranges terms. 

The previous display shows that $\tau_b$ is the sum of two terms. The first is the ATE for
individuals who would have a positive outcome regardless of treatment status (similar to
$\theta_{\mathrm{Intensive}}$ in \Cref{sec: recommendations}, except using $Y$ instead of
$\log(Y)$). The second term is not a causal effect, but rather represents a selection
term: it is proportional to the difference in the average value of $Y(1)$ for ``compliers''
who would have positive outcomes only under treatment versus ``always-takers'' who would have
positive outcomes regardless of treatment status. In many economic contexts, we may expect
this selection effect to be negative. For example, we may suspect that individuals who
would only get a job if they receive a particular training have lower ability, and hence
lower values of $Y(1)$, than individuals who would have a job regardless of training
status. The marginal effect $\tau_b$ thus only has an interpretation as an ATE along the
intensive margin if either (a) there is no extensive margin effect ($\alpha=0$) or (b) we
are willing to assume that the selection term is zero. \citet{angrist_estimation_2001}
provided a similar decomposition (without imposing monotonicity), concluding that the
two-part model ``seems ill-suited for causal inference,'' at least without further
restrictions on the potential outcomes. See, also, \citet{mullahy_estimation_2001} for
additional discussion.

\section{Details on Lee bounds using IV in \citet{berkouwer2022credit}\label{sec:
berkouwer iv details}}

We now describe in detail our approach for constructing \citet{lee_training_2009}-type bounds in the IV setting of \citet{berkouwer2022credit}.

\paragraph{Estimating the instrument-complier distributions.} The first step is to estimate the
distribution of $Y(0)$ and $Y(1)$ for instrument-compliers. As shown in \citet{abadie_bootstrap_2002}, the CDF for $Y(1)$ for instrument-compliers at a point $y$ can be consistently estimated by using two-stage least squares to estimate the effect of treatment on the outcome $D_i 1[Y_i \leq y]$. The CDF for $Y(0)$ for instrument-compliers can analogously be obtained using the outcome $(D_i-1) 1[Y_i \leq y]$. We estimate these TSLS regressions using analogues to \eqref{eqn: 2sls berkouwer} (except replacing $\arcsinh(Y_i)$ with the outcomes just described) for all values of $y$ contained in the data. We thus obtain empirical estimates of the CDFs for instrument-compliers, $\hat{F}_{Y(d)}(y)$ for $d=0,1$.

\paragraph{Constructing bounds.} Note that if $U \sim U[0,1]$, then $Y(d) \sim F_{Y(d)}^{-1}(U)$, where $F_{Y(d)}^{-1}(u) := \inf\{ y \mid F_{Y(d)}(y) \geq u \}$. With this formulation in mind, \citet{lee_training_2009}'s bounds for $E[\log(Y(1)) \mid Y(1)>0,Y(0)>0]$ can be written as 
\begin{align*}
 E[ \log(F_{Y(1)}^{-1}(U)) \mid U \in [\theta_{NT}, \theta_{NT}+\theta_{AT}]] &\leq E[\log(Y(1)) \mid Y(1) >0, Y(0)>0 ] \\ &\leq E[ \log( F_{Y(1)}^{-1}(U) )\mid U \in [1-\theta_{AT}, 1]], \numberthis \label{eqn: lee bounds using U}   
\end{align*}\noindent where $\theta_{AT} = P(Y(1)>0,Y(0)>0)$, $\theta_{NT} = P(Y(1)=0,Y(0)=0)$, and $\theta_{C} = P(Y(1)>0,Y(0)=0)$. We estimate the bounds in \eqref{eqn: lee bounds using U} by plugging in the estimated CDFs for instrument-compliers described above, as well as the values of $\theta_{AT},\theta_{NT}, \theta_C$ implied by the estimated CDFs. We approximate the expectation over $U$ by taking the average over 100,000 uniform draws.\footnote{We note that in finite samples, the estimated CDF $\hat{F}_{Y(d)}(y)$ may be non-monotonic. Nevertheless, the inverse $\hat{F}^{-1}_{Y(d)}(u) := \inf\{y \mid \hat{F}_{Y(d)}(y) \geq u\}$ remains well defined.}  Finally, to compute the bounds on the treatment effect, we must estimate $E[Y(0) \mid Y(0) >0]$. To do this, we use the fact that 
$$E[Y(0) \mid Y(0) > 0] = E[F^{-1}_{Y(0)}(U) \mid U \in [\theta_{NT}+\theta_C,1] ] .$$
\noindent As before, we then estimate the right-hand-side in the previous display by plugging-in the estimated CDF for instrument-compliers, and simulating over 100,000 uniform draws. The Lee bounds for $\theta_{\text{Intensive}}$ are then obtained by subtracting the estimate of $E[Y(0) \mid Y(0) > 0]$ from the estimates of the lower and upper bounds in \eqref{eqn: lee bounds using U}. We estimate standard errors for the bounds using 1,000 draws from a non-parametric clustered bootstrap.\footnote{One complication that arises is that for some draws from the bootstrap distribution, the sign of the extensive margin can be the opposite of that in the original data. In our bootstrap procedure, we construct Lee-type bounds assuming monotonicity in whichever direction matches the bootstrapped data. The resulting bootstrap estimates of the bounds appear to be approximately normally distributed, but we think a formal theoretical evaluation of the bootstrap in this setting is an interesting topic for future work.}

\section{Appendix Tables and Figures}
 \renewcommand{\figurename}{Appendix Figure}
\setcounter{figure}{0}
\renewcommand{\tablename}{Appendix Table}
\setcounter{table}{0}

\begin{itemize}
    \item \Cref{tbl:selected-quotes-aer} contains information on the \emph{AER} papers
    discussed.
    \item \Cref{fig:tstats} shows how $t$-statistics change in the replication exercise. 
    \item \Cref{tbl:recale-by-100-aer-log1p} shows the analogue of 
    \cref{tbl:recale-by-100-aer} for $\log (1+Y)$. 
\end{itemize}

\begin{landscape}
\begin{table}
    \scriptsize
    \begin{tabularx}{\textwidth}{>{\raggedright\arraybackslash}p{3.7cm}>
{\raggedright\arraybackslash}p{2.7cm}>{\raggedright\arraybackslash}p{7cm}X}
\toprule
Paper & Interprets Units as \% & Original Units & Quote About Percents / Notes \\
\midrule
Azoulay et al (2019) & Yes & Publications (yearly) & ``In this case, coefficient estimates can be interpreted as elasticities, as an approximation.'' \\
Beerli et al (2021) & Yes & Patent applications (yearly) & ``The estimates thus reflect an approximate percentage increase.'' \\
Berkouwer and Dean (2022) & Yes & Weekly expenditure (dollars) & ``A 0.50 IHS reduction corresponds to a 39 percent reduction relative to the control group.'' \\
Cabral et al (2022) & Yes & Costs (dollar) per \$10K risk-adjusted covered payroll & Refers to estimates as ``the elasticities reported in panel A'' \\
Carranza et al (2022) & Yes & Hours worked (weekly) & ``Weekly earnings increase by 34\% (Table 1, column 3)'' \\
Faber \& Gauber (2019) & Yes & Municipality GDP (1000s of Pesos) & ``A one standard deviation increase in tourism attractiveness increases local manufacturing GDP by about 40 percent.'' \\
Hjort and Poulsen (2019) & Yes & KB per second & ``We find that cable arrival increases measured speed in connected locations, relative to unconnected locations, by around 35 percent'' \\
Johnson (2020) & Yes & Violations (monthly) & ``[T]he regression coefficient estimates the
ITT effect of a press release on the percent change in the number of violations. The point estimate (-0.18) is identical to the baseline estimate in percent terms -0.40/2.29 = 17.5\%).'' \\
Mirenda et al (2022) & Yes & Contract size (euros) & ``The amount of public funds awarded raises by 3.4 percent.'' \\
Norris et al (2021) & Yes & Criminal charges & ``We measure both the extensive margin (using a binary indicator for the outcome ever occurring) and the intensive margin (taking the inverse hyperbolic sine, IHS, of the number of times the outcome occurred, so the coefficient is interpreted as a percent change)'' \\
Ager et al (2021) & No interpretation & Wealth (1870 dollars) &  \\
Arora et al (2021) & No interpretation & Publications (yearly) &  \\
Bastos et al (2018) & No interpretation & Sales (yearly, euros) &  \\
Fetzer et al (2021) & No interpretation & Incidents (quarterly) &  \\
Moretti (2021) & No interpretation & Patents (yearly) &  \\
Rogall (2021) & No interpretation & Perpetrators &  \\
Cao and Chen (2022) & No & Rebellions per million population in 1600 & They compute $\exp
(\hat\beta) - 1$ and multiply by the baseline mean, then interpret this as the effect in levels \\
\bottomrule
\end{tabularx}

    \captionsetup[table]{labelformat=empty,skip=1pt}

    \caption{Papers in the \emph{AER} estimating effects for $\arcsinh(Y)$ with selected
    quotes}
    \label[appendixtable]{tbl:selected-quotes-aer}
   \floatfoot{\textit{Note:} this table lists papers in the \emph{AER} estimating
   treatment effects for $\arcsinh(Y)$. The second column classifies papers by whether
   they interpret the units of the treatment effect as a percent/elasticity, with
   categories ``yes'', ``no'', or ``no interpretation given.'' The third column describes the units of the outcome before applying the $\arcsinh$
   transformation, and the final column
   provides
   selected quotes and notes about the interpretation of the estimates. See \Cref{subsec: numerical illustration} for details.}
\end{table}
\end{landscape}

\begin{figure}[htb]
    \centering
    \includegraphics[width = 0.95 \linewidth]
    {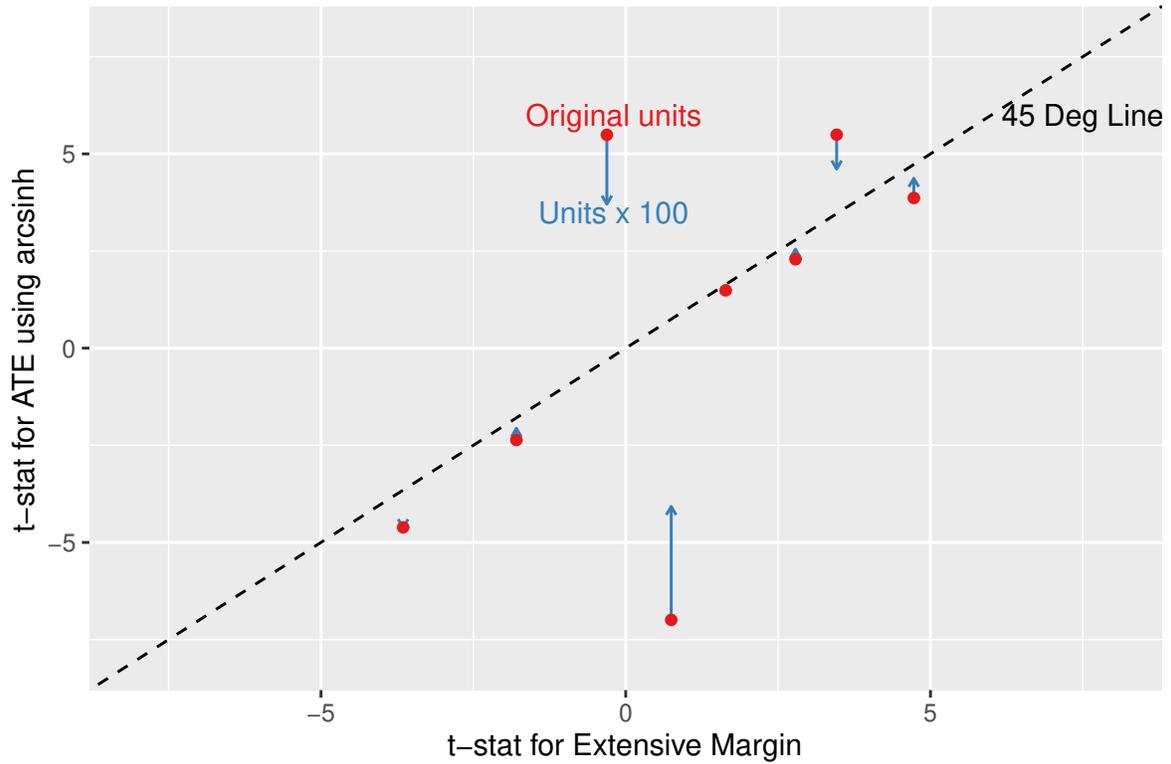}
    \caption{$t$-statistics for effect on $\arcsinh(Y)$, versus extensive margin $t$-statistic}
    \label[appendixfigure]{fig:tstats}
    \floatfoot{\textit{Note:} this table shows the $t$-statistic for the extensive margin effect on the $x$-axis, and the $t$-statistic for the treatment effect using $\arcsinh(Y)$ on the $y$-axis. The circle shows the $t$-statistic using the original units, whereas the arrow shows the change if we first multiply the units by 100 before applying the $\arcsinh$ transformation. We omit two papers where there is no extensive margin. The plot shows that the $t$-statistics are close to the 45 degree line when the extensive margin is not close to zero, and tend to become closer when the units are made larger.}
\end{figure}

\begin{table}[htb]
    \captionsetup[table]{labelformat=empty,skip=1pt}
    \begin{tabular}{lrrrR{0.08\textwidth}R{0.08\textwidth}}
\toprule
 & \multicolumn{2}{c}{Treatment Effect Using:} &  & \multicolumn{2}{c}{\thead{Change from \\ rescaling units:}} \\ 
 \cmidrule(lr){2-3} \cmidrule(lr){5-6}
Paper & $\log(1+Y)$ & $\log(1+100 \cdot Y)$ & Ext. Margin & Raw & \% \\ 
\midrule
Azoulay et al (2019) & 0.002 & 0.015 & 0.003 & 0.012 & 529 \\ 
Fetzer et al (2021) & -0.138 & -0.410 & -0.059 & -0.272 & 197 \\ 
Johnson (2020) & -0.139 & -0.408 & -0.057 & -0.269 & 194 \\ 
Carranza et al (2022) & 0.166 & 0.415 & 0.055 & 0.249 & 149 \\ 
Cao and Chen (2022) & 0.032 & 0.076 & 0.010 & 0.044 & 136 \\ 
Rogall (2021) & 1.109 & 2.015 & 0.195 & 0.906 & 82 \\ 
Moretti (2021) & 0.041 & 0.067 & 0.000 & 0.026 & 64 \\ 
Berkouwer and Dean (2022) & -0.412 & -0.484 & 0.010 & -0.072 & 17 \\ 
Arora et al (2021) & 0.110 & 0.111 & -0.001 & 0.001 & 1 \\ 
Hjort and Poulsen (2019) & 0.354 & 0.354 & 0.000 & 0.001 & 0 \\ 
\bottomrule
\end{tabular}

    \caption{Change in estimated treatment effects using $\log(1+Y)$ from re-scaling the outcome by a factor
    of 100 in papers published in the \emph{AER}}
    \label[appendixtable]{tbl:recale-by-100-aer-log1p}
    \floatfoot{Note: this table repeats the exercise in \cref{tbl:recale-by-100-aer} but replacing
    $\arcsinh(Y)$ with $\log (1+Y)$ as the outcome in the second column, and $\arcsinh(100Y)$ with
    $\log(1+100Y)$ in the third column. The fourth column shows the estimated extensive
    margin effect, which is identical to the fourth column of
    \cref{tbl:recale-by-100-aer}. The final two columns show the raw difference and
    percentage difference between the second and third columns. The rows are sorted based
    on the percentage differences. Among the papers surveyed, which by construction report at least one specification using $\arcsinh(Y)$, \citet{arora2021knowledge,fetzer2021security,moretti2021effect,rogall2021mobilizing}
    also report specifications that contain $\log(1+Y)$ on the left-hand side, and
    \citet{johnson2020regulation} reports a specification with $\log(c + Y)$ on the left-hand side, where $c$ is the first nonzero percentile of the distribution of the
    observed outcome variable.}
\end{table}

 \clearpage

{\singlespacing
\bibliographystyle{aer}
\bibliography{Bibliography}}

\end{document}